\newif\iffull
\newif\ifshort
\newif\ifrobocza
\newcommand{\comment}[1]{}
	\newtheoremstyle{redstyle}
	{3pt}
	{3pt}
	{\color{black}}
	{}
	{\color{red}\bfseries}
	{:}
	{.5em}
	{}
	\theoremstyle{redstyle} 
\algnewcommand{\LineComment}[1]{\State \(\triangleright\) #1}
\DeclareMathOperator{\up}{ref}
\newcommand{\cluster}{\text{cluster}}
\newcommand{\gora}[1]{} 
\newtheorem{definition}{Definition}
\newtheorem{theorem}{Theorem}
\newtheorem{lemma}{Lemma}
\newtheorem{fact}{Fact}
\newtheorem{proposition}{Proposition}
\newcommand{\labell}[1]{\label{#1}\marginpar{#1}}
\newcommand{\todo}[1]{\noindent\colorbox{red}{todo: #1}}
\newcommand{\labell}[1]{\label{#1}}
\newcommand{\todo}[1]{}
\newcommand{\cc}[1]{}
\newcommand{\prob}{\text{Prob}}
\newcommand{\NAT}{{\mathbb N}}
\newcommand{\dist}{\text{d}}
\newcommand{\cT}{{\mathcal T}}
\newcommand{\cN}{{\mathcal N}}
\newcommand{\eps}{\varepsilon}
\renewcommand{\thefootnote}{\fnsymbol{footnote}}
\newcommand{\clid}{\phi} 
\newcommand{\cB}{{\mathcal B}}
\newcommand{\calP}{{\cal P}}
\newcommand{\calT}{{\cal T}}
\newcommand{\E}{{\mathbb E\,}}
\newcommand{\remove}[1]{}
\newcommand{\lemmaApp}[2]{
{\noindent \textbf{Lemma \ref*{#1}.}}
\emph{#2}
}
\begin{document}

	\title{Deterministic Digital Clustering of Wireless Ad Hoc Networks
		\footnote{%
			This work was supported by the Polish National Science Centre 
			grant DEC-2012/07/B/ST6/01534.
		}
}
	
	
	\comment{ 
		\author{Tomasz Jurdzinski\inst{1} \and Dariusz R.~Kowalski\inst{2} \and Michał Rozanski\inst{1} \and Grzegorz Stachowiak\inst{1}}
		\authorrunning{T.~Jurdzinski, D.R.~Kowalski, M.~Rozanski, G.~Stachowiak} 
		\institute{Institute of Computer Science, University of Wroc{\l}aw, Poland
			\and
			Department of Computer Science, University of Liverpool,
			United Kingdom
		}
	} 
	
	\author[1]{Tomasz Jurdzi\'{n}ski}
	\author[2]{Dariusz R.~Kowalski}
	\author[1]{Micha\l~R\'{o}\.{z}a\'{n}ski}
	\author[1]{Grzegorz Stachowiak}

	\affil[1]{
	Institute of Computer Science, University of Wroc{\l}aw, Poland}
	\affil[2]{
			Department of Computer Science, University of Liverpool,
			United Kingdom
}
	
	\date{}
	
	\maketitle              
	
	\begin{abstract}
We consider deterministic distributed communication in wireless ad hoc 
networks of identical weak devices 
under the SINR model
without predefined infrastructure. 
Most algorithmic results in this 
model
rely on various
additional features or capabilities, e.g., randomization, access to geographic coordinates, power control, carrier sensing with various precision of measurements, 
and/or interference cancellation. 
We study a 
\emph{pure}
scenario, when no such properties are available. 

The key difficulty in the considered pure scenario stems from the fact that
it is not possible to distinguish successful delivery of message 
sent by
close neighbors from those 
sent by
nodes located on transmission boundaries. 
This problem has been avoided by appropriate model assumptions in many results,
which simplified algorithmic solutions.

As a general tool, we develop a deterministic distributed
\emph{clustering} algorithm, which splits nodes
of a multi-hop network into clusters such that: (i)~each cluster is included in a ball of constant diameter; (ii)~each ball of diameter $1$ contains nodes from $O(1)$ clusters. 
Our solution relies on a new type of combinatorial structures (selectors), which might be of independent interest.

Using the clustering, we develop a 
deterministic distributed \emph{local broadcast} algorithm 
accomplishing this task in
$O(\Delta\log^*N \log N)$ rounds,
where $\Delta$ is the density of the network.
To the best of
our knowledge, this is the first solution in pure scenario
which is only $\text{polylog}(n)$ away from the universal lower bound $\Omega(\Delta)$, valid also for scenarios with randomization and other features.
Therefore, none of these features substantially helps in performing the
local broadcast task.

Using clustering, we also build a deterministic {\em global broadcast} algorithm that terminates within $O(D(\Delta+\log^* N)\log N)$ rounds, where $D$ is the diameter of the network. This result is complemented by a lower bound $\Omega(D\Delta^{1-1/\alpha})$,  
where $\alpha>2$ is the path-loss parameter of the environment. 
This lower bound, in view of 
previous work,
shows that randomization or knowledge of own location substantially help (by a factor polynomial in $\Delta$) in {\em the global broadcast}. 
Therefore, unlike in the case of local broadcast, some additional model features
may help in global broadcast.

	\end{abstract}

	
	\comment{
	\small
		\noindent
		Corresp.\ author: {T.~Jurdzinski}, (\url{tju@cs.uni.wroc.pl})
		\noindent
		Regular paper, eligible for the best student paper award -- {M.~R\'{o}\.{z}a\'{n}ski} is a full time PhD student and has made
		a significant contribution. 
	}
	

	
	\renewcommand{\thefootnote}{\arabic{footnote}}
	
	

	
	



	\section{Introduction}
	\label{s:intro}
	We 
study distributed algorithms 
in ad hoc wireless networks in the SINR \added{({\em Signal-to-Interference-and-Noise Ratio}) } model with uniform transmission power. We consider the \emph{ad hoc} setting where both capability and knowledge of nodes are limited -- nodes know only the basic parameters of the SINR model (i.e., $\alpha, \beta, \eps, \mathcal{N}$) and upper bounds $\Delta$, $n$ on the degree and the size of the network, such that the actual maximal degree is $O(\Delta)$ and the size is $n^{O(1)}$. Such a setting appears in networks without infrastructure of base nodes, access points, etc., reflecting e.g.\ scenarios where large sets
of sensors are distributed in an area of rescue operation, environment monitoring,
or in prospective internet of things applications.
Among others, we study basic communication problems as global and local broadcast, as well as primitives used as tools to coordinate computation and communication in the networks (e.g., leader election and wake-up). Most of these problems were studied in the model of graph-based radio networks over the years. Algorithmic study of the closer to reality SINR model has started much later. This might be caused by specific dynamics of interferences and signal attenuation, which seems to be more difficult for modeling and analysis than graph properties appearing in the radio network model (see e.g.\ \cite{KantorFOCS15}). 

As for the problems studied in this paper, several distributed algorithms have been presented in recent years. However, all these solutions were either randomized or relied on the assumption that nodes
of a network know their own coordinates in a given metric space. In contrast, the aim of this paper is to check how efficient could be solutions 
without randomization, availability of locations, power control,
carrier sensing, interference cancellation or other 
additional model
features, in the context of local and global communication problems.
\added{Our research objective is motivated twofolds. 
Firstly, examination of necessity of randomization is a natural research topic in algorithmic community for classic models of sequential computation as well as for models of 
parallel and distributed computing (see e.g.\ \cite{Bar-YehudaGI92} for
an example in a related radio networks model). Moreover, as wireless ad hoc networks are
usually built from computationally limited devices run on batteries, it is desirable
to use simple and energy efficient algorithms which do not need access to several
sensing capabilities or true randomness.}

\subsection{The Network Model.}
We consider a wireless network consisting of nodes located on the 2-dimensional Euclidean space\footnote{Results of this paper can be generalized to so-called bounded-growth metric spaces with the same asymptotic complexity bounds.}. We model transmissions in the network with the SINR constraints. The model is determined by fixed parameters: path loss $\alpha>2$, threshold $\beta>1$, ambient noise $\mathcal{N}>0$ and transmission power $\mathcal{P}$. 
The value of $SINR(v,u,\cT)$ for given nodes $u,v$ and a set of concurrently transmitting nodes $\cT$ 
is defined as
\begin{equation}\label{e:sinr}
SINR(v,u,\cT) = \frac{\mathcal{P}/d(v,u)^{\alpha}}{\cN+\sum_{w\in\cT\setminus\{v\}}\mathcal{P}/d(w,u)^{\alpha}},
\end{equation}
where $d(u,v)$ denotes the distance between $u$ and $v$.
A node $u$ successfully receives a message from $v$  
iff
$v\in \cT$ and $SINR(v,u,\cT)\ge\beta$, where $\cT$ is the set of nodes transmitting at the same time.
\emph{Transmission range} is the maximal distance at which a node can be heard provided there are no other transmitters in the network. Without loss of generality we assume that the transmission range is all equal to $1$. This assumption  implies that the relationship $\mathcal{P}=\mathcal{N}\beta$ holds. However, it does not affect generality and asymptotic complexity of presented results.

\noindent\textbf{Communication graph} 
In order to describe the topology of a network as a graph, we set a connectivity parameter $\eps\in(0,1)$.
The {\em communication graph} $G=(V,E)$ of a given network consists of all nodes and edges $\{v,u\}$ \replaced{connecting}{between} nodes that are in distance at most $1-\eps$\replaced{, i.e., $d(u,v)\le 1-\eps$.}{, where $\eps\in(0,1)$is a fixed constant (the connectivity parameter)}
Observe that a node can receive a message from a node that is not its neighbor in the graph, provided interference from other nodes is small enough.
The communication graph, defined as above, is a standard notion in the analysis of ad hoc multi-hop communication in the SINR model, cf., \cite{DaumGKN13,JKRS14}.

\noindent\textbf{Synchronization and content of messages}
We assume that algorithms work synchronously in rounds. In a single round, a node can transmit or receive a message from some node in the network and perform local computation. The size of a single message is limited to $O(\log N)$.
 

\ifshort
\noindent\textbf{Knowledge of nodes}
\else
\paragraph{Knowledge of nodes}
\fi
Each node has a unique identifier from the set \added{$[N]=\{1,\ldots,N\}$,} where $N=n^{O(1)}$ is the upper bound on the size $n$ of the network. Moreover, nodes know $N$, the SINR parameters -- $\mathcal{P},\alpha,\beta,\eps,\mathcal{N}$, 
the linear upper bounds on the diameter $D$ and the degree $\Delta$ of the communication graph. 

\ifshort
\noindent\textbf{Communication problems}
\else
\paragraph{Communication problems}
\fi 
The \emph{global broadcast} problem is to deliver a message from the designated source node $s$ to all the nodes in the network, perhaps through relay nodes as not all nodes are within transmission range of the source in multi-hop networks.
At the beginning of an execution of a global broadcast algorithm, only the source node $s$
is active (i.e., only $s$ can participate in the algorithm's execution).
A node $v\neq s$ starts participating in 
an execution of the algorithm only after receiving the first message from another
node. (This is so-called non-spontaneous wake-up model, popular in the literature.) 
In the \emph{local broadcast \added{problem}}, the goal is to make each node to successfully transmit its own message to its neighbors in the communication graph. 
Here, all nodes start participating in an algorithm's execution at the same
round.

\subsection{Related Work}
In the last years the SINR model was extensively studied, both from the perspective of its structural properties 
\cite{AronovK15, HalldorssonT15, KantorFOCS15} 
and algorithm design 
\cite{GoussevskaiaMW08,BarenboimSIROOCCO15,DaumGKN13, HalldorssonHL15,JK-DISC-12,YuHWL12,Kesselheim11,hobbs2012deterministic}. 

The first work on local broadcast in SINR model by Goussevskaia et al. \cite{GoussevskaiaMW08} presented an $O(\Delta\log n)$ randomized algorithm under assumption
that density $\Delta$ is known to nodes. After that, the problem was studied in various settings. Halldorsson and Mitra presented an $O(\Delta+\log^2n)$ algorithm in a model with feedback~\cite{HM12}. Recently, for the same setting Barenboim and Peleg presented solution working in time $O(\Delta + \log n\log\log n)$~\cite{BarenboimSIROOCCO15}.	For the scenario when the degree $\Delta$ is not known Yu et al. in \cite{YuHWL12} improved on the $O(\Delta\log^3 n)$ solution of Goussevskaia et al. to $O(\Delta\log n+ \log^2n)$. 

\added{In \cite{JK-DISC-12} a $O(\Delta\log^3 n)$-round deterministic local broadcast algorithm was presented under the assumption that nodes have access to location information (their coordinates).}
However, 
no deterministic algorithm for local broadcast was known in the scenario that
nodes do not know their coordinates.

\added{Previous results on local broadcast are collected and compared with our result in 
Table~\ref{tab:local}. There are also a few algorithms which require carrier sensing
or power control (e.g., \cite{FuchsW13}). As the model and techniques in this line
of work are far from
the topic of our paper, those solutions are not presented in the table.}

\begin{table}[h]
\begin{center}
\begin{tabular}{|l|c|c|c|}
  \hline 
  Paper & Model & Knowledge & Time\\
  \hline
	\multicolumn{4}{|c|}{Randomized algorithms}\\
	\hline
  \cite{GoussevskaiaMW08} & & $\Delta, n$ & $O(\Delta\log n)$ \\
  \cite{GoussevskaiaMW08} & &  $n$ & $O(\Delta\log^3 n)$ \\
	\cite{YuHWL12} & & $n$ & $O(\Delta\log n+ \log^2n)$\\
	\cite{HM12} & feedback & $\Delta,n$ & $O(\Delta+\log^2n)$ \\
	\cite{BarenboimSIROOCCO15} & feedback & $\Delta,n$ & $O(\Delta + \log n\log\log n)$ \\
  \hline
	\multicolumn{4}{|c|}{Deterministic algorithms (for $N=\text{poly}(n)$)}\\
	\hline
  \cite{JK-DISC-12} & location & $\Delta, N$ & $O(\Delta\log^3 n)$ \\
  This work & & $\Delta, N$ & $O(\Delta\log^* n \log n)$ \\
	\hline
\end{tabular}
\end{center}
\caption{Algorithms for the local broadcast problem. The column \emph{Model} enumerates
features which do not appear in the model of this paper, while they are needed to implement particular algorithms.}
\label{tab:local}
\end{table}
For the global broadcast problem a few deterministic solutions are known. 
\added{If the information about location of nodes is available, the global broadcast can be accomplished deterministically in time $O(D\log^2 n)$ 
\cite{JKS13}. }
\added{In contrast, if connectivity of a network might rely
on so-called weak links and the model does not allow for any
channel sensing nor access to geographic coordinates, deterministic global broadcast requires $\Theta(n\log N)$ rounds \cite{JRS17}, which gives logarithmic improvement
over $O(n\log^2n)$-time algorithm from \cite{DaumGKN13} for this model.}

The main randomized results on global broadcast in ad hoc settings include papers of Daum et al. \cite{DaumGKN13} and Jurdzinski et al. \cite{JKRS14}. Solutions with complexity, respectively $O((D \log n) \log^{\alpha+1} g)$ and $O(D\log^2 n)$ are presented, where $g$ is a parameter depending on the geometry of the network, \added{at most exponential with
respect to $n$}. \added{(Results on global broadcast are collected and compared with our result in 
Table~\ref{tab:global}.)}

Recently Halldorsson et al.\ \cite{HalldorssonHL15} proposed an algorithm which 
can be faster assuming that nodes are equipped with some extra capabilities
(e.g., detection whether a received message is sent from a close neighbor)
and the interference function on the local (one-hop) level is defined as in the classic radio network model. 
\added{Given additional model features as channel/carrier sensing,
total interference estimation by signal measurements and others,
efficient algorithms for various communication problems have been
designed in recent years, e.g., 
\cite{Gudm,RichaS16,OgiermanRSSZ14}.}

\begin{table}[h]
\begin{center}
\begin{tabular}{|l|c|c|c|c|}
  \hline 
  Paper & Model: & Model: & Knowledge & Time\\
  & extra features & weaknesses &  & \\
  \hline
	\multicolumn{5}{|c|}{Randomized algorithms}\\
	\hline
	\cite{DaumGKN13} & &  & $n$ & $O((D \log n) \log^{\alpha+1} g)$ \\
	\cite{DaumGKN13} & & weak links & $n$ & $O(n\log^2 n)$\\
	\cite{JKRS13} & location & & $n$ & $O(D\log n+\log^2n)$\\
	\cite{JKRS14} & &  & $n$ & $O(D\log^2 n)$ \\
  \hline
	\multicolumn{5}{|c|}{Deterministic algorithms (for $N=\text{poly}(n)$)}\\
	\hline
  \cite{JKS13} & location & & $N$ & $O(D\log^2 n)$ \\
	\cite{JRS17} & & weak links & $N$ & $\Theta(n\log n)$\\
  This work & & & $\Delta, N$ & $\Omega(D\Delta^{1-1/\alpha})$ \\
  This work & & & $\Delta, N$ & $O(D(\Delta+\log^*n)\log n)$ \\
	\hline
\end{tabular}
\end{center}
\caption{Algorithms for the global broadcast problem.}
\label{tab:global}
\end{table}

In the related multi-hop radio network model the global broadcast problem is well examined. The complexity of randomized broadcast is of order of 
$\Theta((D + \log n) \log(n/D))$~\cite{Bar-YehudaGI92,CzumajRytter-FOCS-03,KowalskiPelc2003}. 
A series of papers on deterministic global broadcast give the upper bound of $O(n\min\{\log n, \log D\log\log(D\Delta/n)\}$ 
\ifshort
\cite{ChlebusGGPR00,CzumajRytter-FOCS-03,DeMarco-SICOMP-10} 
\else
\cite{ChlebusGGPR00,MarcoP01,CzumajRytter-FOCS-03,DeMarco-SICOMP-10,KowalskiP04} 
\fi
with the lower bound of 
\comment{
$\Omega(n\log D)$~\cite{ClementiMS01} and 
}
$\Omega(n\log n/\log(n/D))$~\cite{KowalskiPelc2003}. In terms of $\Delta$ and $D$, the best bounds are 
$O(D\Delta\log(N/\Delta)\log^{1+\alpha}N)$ and $\Omega(D\Delta\log(N/\Delta)$ \cite{ClementiMS03}.
For the closest to our model geometric unit-disk-graph radio networks,
the complexity of broadcast is $\Theta(D\Delta)$ \cite{EmekGKPPS09,EmekKP16}, even when nodes know their coordinates.

\added{Recently, communication algorithms have been also designed in a very weak
beeping networks \cite{CornejoK10,ForsterSW14}, where a node can only distinguish between
$0$ or at least one of its neighbors transmitting on the communication channel.}

\subsection{Our Contribution}
As a general tool, we develop a clustering algorithm which splits nodes
of a multi-hop network into clusters such that: (i)~each cluster is included in a ball of constant diameter; (ii)~each ball of diameter $1$ contains nodes from $O(1)$ clusters. Using the clustering algorithm, we develop a \emph{local broadcast} algorithm\deleted{ in this setting} for ad hoc wireless networks, which accomplishes the task in $O((\Delta+\log^* n)\log n)$ rounds,
where $\Delta$ is the density of the network.
Up to our knowledge, this is the first solution in the considered pure ad hoc scenario
which is only $\text{polylog}(n)$ away from the trivial universal lower bound $\Omega(\Delta)$, valid also for scenarios with randomization and other features.
%

Using clustering, we also build a deterministic global broadcasting algorithm that terminates within $O(D(\Delta+\log^* n)\log n)$ rounds, where $D$ is the diameter of the network graph. This result is complemented by a lower bound $\Omega(D\Delta^{1-1/\alpha})$ in the network of degree $\Delta$, where $\alpha>2$ is the path-loss parameter of the environment. 
This lower bound, in view of previous work, shows that randomization or knowledge of own location substantially help (by a factor polynomial in $\Delta$) in {\em the global broadcast}. 
%
Previous results on global broadcast in related models achieved time $O(D\,\text{polylog} N)$, thus they were independent of the networks density $\Delta$. They relied, however, on either randomization, or access to coordinates of nodes in the Euclidean space, or carrier sensing (a form of measurement of strength of received signal). Without these capabilities, as we show, the polynomial dependence of $\Delta$ is unavoidable.

Therefore, our results prove that additional model features may help in global
communication problems, but not much in local problems such as local broadcast,
in which advanced algorithms are (almost) equivalent to the presence
of additional model features.
Using designed algorithmic techniques we also provide efficient solutions to the wake-up problem and the global leader election problem.

\comment{
\begin{table}[h]
\begin{tabular}
\end{tabular}
\caption{Local broadcast algorithms}
\end{table}
}

\subsection{High Level Description of Our Technique}
The key challenge in designing efficient algorithm in the ad hoc wireless scenario is the interference from dense areas of a network. 
Note that if randomization was available,
nodes could adjust
their transmission probabilities and/or signal strength such that the expected interference from each
ball of radius $1$ is bounded by a constant. 

Another problem stems from the fact that it is impossible to distinguish
received messages which are sent by close neighbors from those 
sent by
nodes on boundaries of the transmission range.
This issue could be managed if nodes had access to their geographic coordinates or were equipped with carrier sensing capabilities (thanks to them, the distance of a neighbor can be estimated based on a measurement of the strength
of received signal).
In the pure ad hoc model considered in this paper, all those tools are not
available.

Our algorithmic solutions rely on the notion of $r$-\emph{clustering}, i.e., a partition of a set of nodes into clusters such that: (i)~each cluster is included in a ball of 
\added{radius} $r$; (ii)~each ball of \added{radius} $1$ contains nodes from $O(1)$ clusters and (iii)~each node knows its cluster ID. 
First, we develop tools for (partially) clustered networks. 
We start with a \emph{sparsification} algorithm, which, 
given a set \added{of} clustered nodes $W$, gradually decreases the largest number of nodes in a cluster and eventually ends with a set $W'\subset W$ such that  $O(1)$ (and at least one!) nodes from each cluster of $W$ belongs to $W'$.
Using sparsification algorithm, we develop a tool for \emph{imperfect labeling} of clusters, which results in assigning temporary IDs (tempID)
in range $O(\Delta)$ to nodes
such that $O(1)$ nodes in each cluster have the same tempID.
Moreover, an efficient radius reduction algorithm is presented, which transforms $r$-clustering \added{for $r>1$} into $1$-clustering.


Two communication primitives are essential for efficient implementation of our tools, namely, 
\emph{Sparse Network Schedule} (SNS) and \emph{Close Neighbors Schedule} (CNS). 
%
Sparse Network Schedule is a communication protocol of length $O(\log N)$ which guarantees that, given an arbitrary set of nodes $X$ with constant density, each element $v$ of $X$ performs local broadcast (i.e., there is a round in which the message transmitted by $v$ is received in distance $1-\eps$). 
Close Neighbors Schedule is a communication protocol of length $O(\log N)$ which guarantees that, given an arbitrary set of nodes $X$ of density $\Gamma$ with 
$r$-clustering of $X$ for $r=O(1)$, each close enough pair of elements of $X$ from each
sufficiently dense cluster can hear each other during the schedule. 

Given the \added{above described} tools for clustered set of nodes, we build a global broadcasting algorithm, which works in phases. In the $i$th phase we assure that all
nodes awaken\footnote{A node is awaken in the phase $j$ if it receives the broadcast message for the first time in that phase.} in the $(i-1)$-st phase perform local broadcast. In this way, the set of nodes awaken in the first $i$
phases contains all nodes in communication graph of distance $i$ from the source. Moreover, 
we assure that all nodes awaken in a phase are $1$-clustered. 
(We start with the cluster formed by nodes in distance $\le 1$ from the source $s$, awaken in a round in which $s$ is the unique transmitter.)
%
A phase consists of three stages. In Stage~1, an \emph{imperfect labeling} of each cluster is done. 
In Stage~2, Sparse Network Schedule is executed $\Delta$ times\footnote{Here, we assume that $\Delta$ is the maximal number of nodes in a ball of radius $1$. This number differs from the degree of the communication graph by a constant multiplicative factor.}. A node with label $j$ participates in the $j$th execution of SNS only. 
In this way, all nodes transmit successfully on distance $1-\eps$. All nodes
awaken in Stage~2 inherit cluster ID from nodes which awaken them. In this
way, we have $2$-clustering of all nodes awaken in Stage~2. In Stage~3, a $1$-clustering of awaken nodes is formed by using an efficient algorithm that reduces the radius of clustering. 
\added{Figure~\ref{fig:global} provides an example describing a phase.}

\comment{
\begin{figure}[h]
	\centering
	\begin{subfigure}[t]{0.32\textwidth}
		\includegraphics[width=1.0\textwidth]{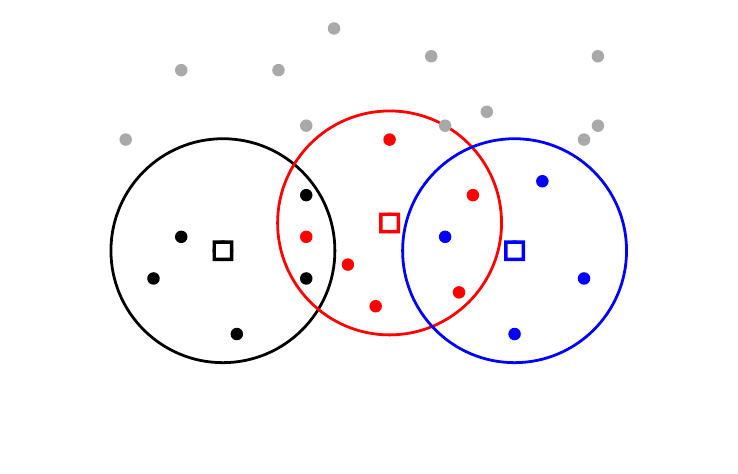}
		\caption{Black, red and blue nodes were awaken in phase $i-1$, they are $1$-clustered (colors correspond to clusters). Gray nodes are asleep neighbors of awaken nodes.
}
	\end{subfigure}
	\hfill
	\begin{subfigure}[t]{0.32\textwidth}
		\includegraphics[width=1.0\textwidth]{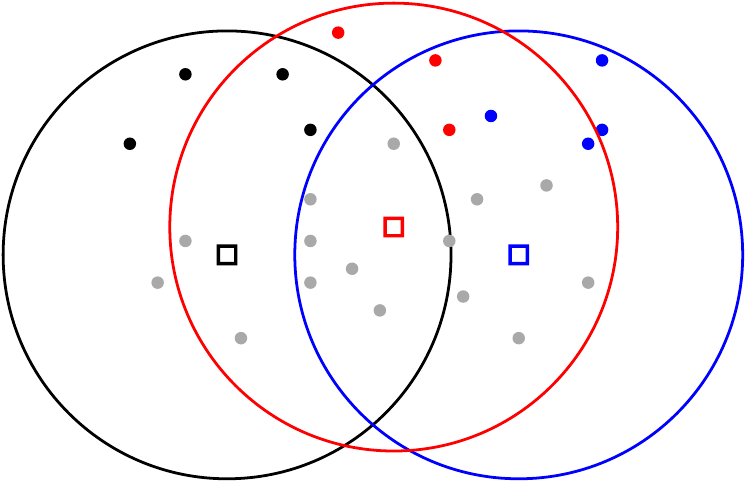}
		\caption{In Stage 2, asleep nodes (gray in Figure~(a); black, red and blue in Figure~(b)) receive messages from their awaken neighbors. After Stage 2 of phase $i$, newly awaken nodes inherit cluster IDs (colors) of
		neighbors which awaken them.}
	\end{subfigure}
	\hfill
	\begin{subfigure}[t]{0.32\textwidth}
		\includegraphics[width=1.0\textwidth]{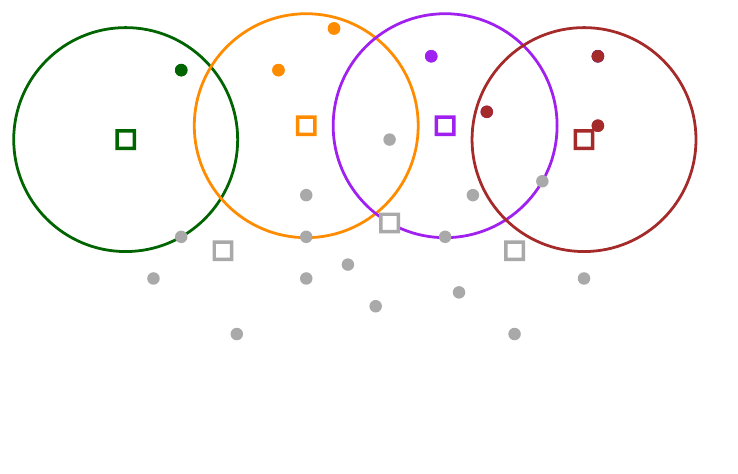}
		\caption{In Stage~3, newly awaken nodes build a new $1$-clustering, using the algorithm which
		``reduces'' a $2$-clustering to a $1$-clustering.}
	\end{subfigure}
	\caption{An illustration of phase $i>1$ of the global broadcast algorithm. Squares denote centres of clusters (corresponding to nodes from these clusters).
	}
	\label{fig:global}
\end{figure}
}
\begin{figure}[h]
	\centering
	\begin{subfigure}[t]{0.45\textwidth}
		\includegraphics[width=\textwidth]{globalPhase1.pdf}
		\caption{Black, red and blue nodes were awaken in phase $i-1$, they are $1$-clustered (colors correspond to clusters). Gray nodes are asleep neighbors of awaken nodes.
}
	\end{subfigure}
	\hfill
	\begin{subfigure}[t]{0.45\textwidth}
		\includegraphics[width=1.0\textwidth]{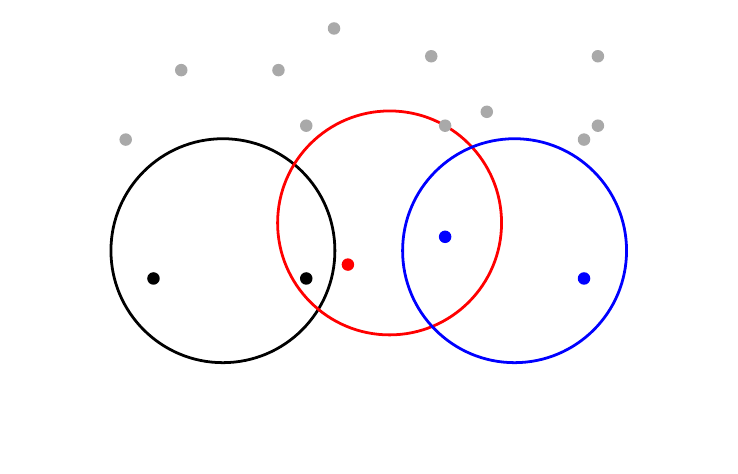}
		\caption{Stage 2: in the $j$th execution of Sparse Network Schedule, only nodes with label $j$ from each cluster participate (black, red and blue nodes).}
	\end{subfigure}
	\\
	\begin{subfigure}[t]{0.45\textwidth}
		\includegraphics[width=1.0\textwidth]{globalPhase2.pdf}
		\caption{During Stage 2 asleep nodes (gray in Figure~(a); black, red and blue in Figure~(c)) receive messages from their awaken neighbors. After Stage 2, newly awaken nodes inherit cluster IDs (colors) of
		neighbors which awaken them.}
	\end{subfigure}	
	\hfill
	\begin{subfigure}[t]{0.45\textwidth}
		\includegraphics[width=1.0\textwidth]{globalPhase3.pdf}
		\caption{In Stage~3, newly awaken nodes build a new $1$-clustering, using the algorithm which
		``reduces'' a $2$-clustering to a $1$-clustering.}
	\end{subfigure}
	\caption{An illustration of phase $i>1$ of the global broadcast algorithm. Squares denote centres of clusters (corresponding to nodes from these clusters).
	}
	\label{fig:global}
\end{figure}
Our algorithm for \emph{local broadcast} builds a $1$-clustering of the whole network, assigns tempIDs to nodes in clusters with use of the imperfect labeling algorithm, and eventually performs local broadcast by applying
Sparse Network Schedule for each prospective tempID separately.
(Recall that, in the local broadcast problem, all nodes are awaken simultaneously, just at the beginning of a protocol.) 
Thus, the key challenge here is to build a $1$-clustering starting from
an unclustered network.
First, we 
use our sparsification technique to build a sparse set of 
\emph{leaders} and a schedule $S$ such that each node of the network is in distance of $O(\log N)$ hops from
some leader with respect to the schedule $S$. Then, starting from clusters containing neighbors of leaders, we gradually build clustering of the whole network.

For efficient implementation of our solutions, we build a new type of selectors, called \emph{witnessed (cluster aware) strong selectors}. These selectors implement algorithmically an implicit collision detection, which filters out most connections on large distance in an execution
of Close Neighbors Schedule. Therefore, properties 
of the new selectors
might be applicable for designing efficient (deterministic)
solutions for other communication problems.

\subsection{Organization of the paper}
Section~\ref{s:preliminaries} contains basic notions and definitions. Combinatorial tools are introduced and applied to build SINR communication primitives in Section~\ref{s:comb:tools:sinr}.
Section~\ref{s:sparsification} describes the sparsification technique which leads to the clustering algorithm.
In Section~\ref{s:comm:prob}, we describe solutions of global/local broadcast 
as well as other communication problems. Finally, a lower bound for global broadcast is provided.

	\section{Preliminaries}
	\labell{s:preliminaries}
	\noindent\textbf{Geometric definitions}
Let $B(x,r)$ denote the ball of radius $r$ around point $x$ on the plane. 
That is, $B(x,r)=\{y\,|\, d(x,y)\le r\}$.
We identify $B(x,r)$ with the set of nodes of the network that are located inside $B(x,r)$. A \emph{unit ball} is a ball with radius $1$.
\deleted{Let $\chi(r_1,r_2)$ denote the maximal number of points which can be located in a ball of radius $r_1$ such  that each pair of points from the
set isin distance larger or equal to $r_2$. }
\added{Let $\chi(r_1,r_2)$ denote the maximal 
size of a set of points $S$ located in a ball of radius $r_1$ such  that $d(u,v)\ge r_2$ 
for each $u,v\in S$ such that $u\neq v$.
 }

\noindent\textbf{Clustered and unclustered sets of nodes}
Assume that each node from a given set is associated
with a pair of numbers $(v, \clid)$, where $v\in [N]$ is its unique ID and 
$\clid\in [N]$ is the \emph{cluster} of $v$, $\clid=\cluster(v)$.
A set $X$ of pairs $(v,\clid)$ associated to nodes is called 
a \emph{clustered set} of nodes. The set $X$ is partitioned into clusters,
where \emph{cluster} $\clid$ denotes the set $\{(v,\clid')\in X\,|\,\clid'=\clid\}$. 

An \emph{unclustered} set $X$ of nodes is just a subset of $[N]$, which might be also considered as a clustered set, where each node's cluster
ID is equal to $1$, i.e., $\cluster(v)=1$ for each $v\in X$.

\noindent\textbf{Geometric clusters}
Consider a clustered set of nodes $X$ such that each node is located
in the Euclidean plane (i.e, it has assigned coordinates determining its location).
The clustering of $X$ 
is 
an $r$-clustering for \added{$r\ge 1$} if the following conditions are satisfied: 
\begin{itemize}
\gora{4}
	\item For each cluster $\clid$, all of nodes from the cluster $\clid$
	are located in
	the ball $B(x,r)$, 
	where $x$ is an element of $\clid$
	called the \emph{center} of $\clid$.
	
\item
For each clusters $\clid_1,\clid_2$, 
the centers $x_1,x_2$ of $\clid_1, \clid_2$ are located in distance
at least $1-\eps$, i.e.,
$d(x_1,x_2)\ge 1-\eps$.
\end{itemize}

The \emph{density} $\Gamma$ of an unclustered network/set denotes the largest number of nodes in a unit ball.
For a clustered network/set $G$, the \emph{density} is equal to the largest number of elements of a cluster.
Below, we characterize dependencies between the degree of the communication graph and the density of the network.
\begin{fact}\labell{f:density}
Let $\Delta_V$ be equal to the largest degree in the communication graph $G$. Then,

\noindent 1.~There exist constants $0<c_1<c_2$ such that 
the density of each
unclustered network $V$ is in the range $[c_1\Delta_V,c_2\Delta_V]$.

\noindent 2.~There exist constants $0<c_1<c_2$ such that 
the density of each
$r$-clustered network $V$ for $r\in[1,2]$ is in the range $[c_1\Delta_V,c_2\Delta_V]$.
\end{fact}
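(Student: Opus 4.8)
The plan is to get all four inequalities from one packing estimate together with the doubling property of the plane, namely that $\chi(r_1,r_2)$ is finite for all $r_1,r_2>0$ (indeed $\chi(r_1,r_2)=O((1+r_1/r_2)^2)$). The estimate I would record first is: for every point $x$ and every radius $\rho>0$, the ball $B(x,\rho)$ contains at most $\chi(\rho,(1-\eps)/2)\cdot(\Delta_V+1)$ nodes of the network. To see it, take a maximal set $P$ of network nodes inside $B(x,\rho)$ that are pairwise at distance $\ge(1-\eps)/2$; then $|P|\le\chi(\rho,(1-\eps)/2)$ by definition of $\chi$, and by maximality every node in $B(x,\rho)$ is within distance $(1-\eps)/2<1-\eps$ of some $p\in P$, hence equals $p$ or is a neighbor of $p$ in $G$. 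So $B(x,\rho)$ is covered by the sets $\{p\}\cup N(p)$, $p\in P$, each of size at most $\Delta_V+1$.

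With this in hand, three of the four bounds are one-liners. For the unclustered case, the lower bound is trivial: a node $v$ of maximum degree has $N(v)\cup\{v\}\subseteq B(v,1-\eps)\subseteq B(v,1)$, so the density is at least $\Delta_V+1$; the upper bound is the packing estimate with $\rho=1$. For an $r$-clustering with $r\in[1,2]$, the upper bound is again the packing estimate, now with $\rho=r\le 2$, since by definition each cluster sits inside a ball of radius $r$.

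The only step that needs a genuine argument is the lower bound for $r$-clusterings. Fix a node $v$ of maximum degree; the $\Delta_V+1$ nodes of $N(v)\cup\{v\}$ all lie in $B(v,1-\eps)$ and each belongs to exactly one cluster. If a cluster $\clid$ meets $B(v,1-\eps)$, witnessed by some node $u$, then its center $x_\clid$ satisfies $d(v,x_\clid)\le d(v,u)+d(u,x_\clid)\le (1-\eps)+r\le 3$, so the centers of all clusters meeting $B(v,1-\eps)$ lie in $B(v,3)$ and, being pairwise at distance $\ge 1-\eps$, number at most $\chi(3,1-\eps)=O(1)$. By pigeonhole one such cluster contains at least $(\Delta_V+1)/\chi(3,1-\eps)$ of the nodes of $N(v)\cup\{v\}$, giving density $\ge (\Delta_V+1)/\chi(3,1-\eps)$. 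Finally, to state everything as $[c_1\Delta_V,c_2\Delta_V]$ rather than with the harmless additive $+1$, I would use that $\Delta_V\ge 1$ for the connected multi-hop networks under consideration, so that $\Delta_V+1\le 2\Delta_V$ and the additive term disappears into the constants (e.g.\ $c_1=1/\chi(3,1-\eps)$ and $c_2=2\max\{\chi(1,(1-\eps)/2),\chi(2,(1-\eps)/2)\}$). I expect the clustered lower bound to be the main obstacle, and within it the crucial observation is that a cluster poking into $B(v,1-\eps)$ pulls its center to within distance $3$ of $v$, after which the $1-\eps$ separation of centers caps the number of competing clusters by a constant.
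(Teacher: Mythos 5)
Your proof is correct. The paper states Fact~\ref{f:density} without giving any proof, so there is nothing to compare against; your covering argument (a maximal $(1-\eps)/2$-separated net turns any ball of constant radius into $O(1)$ closed neighborhoods of size at most $\Delta_V+1$) for the two upper bounds, and the pigeonhole over the at most $\chi(3,1-\eps)$ clusters whose centers are pulled into $B(v,3)$ for the clustered lower bound, are exactly the standard bounded-growth arguments the authors are implicitly relying on. The one caveat you already flag --- absorbing the additive $+1$ into the multiplicative constant requires $\Delta_V\ge 1$, i.e., excluding a degenerate isolated node --- is a blemish of the statement itself rather than of your argument.
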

As the density of a network and the degree of its communication graph are linearly dependent, we will 
use $\Delta$/$\Gamma$ to denote both the density and the degree of the communication graph.\footnote{This overuse of notations has no impact on asymptotic upper bounds on complexity of algorithms designed in this paper, since dependence on density/degree will be at most linear.}
For a clustered set of density $\Gamma$, a cluster is \emph{dense} when it contains at least $\Gamma/2$ elements. Similarly, for an unclustered set of density $\Gamma$, a unit ball $\cB$ is \emph{dense} when it contains at least $\Gamma/2$ elements.

In the following, we define the notion of a \emph{close pair}, essential for our network 
\emph{sparsification} technique. 
\deleted{Let $d_{\Gamma,r}$ be the number satisfying the equation $\chi(r, d_{\Gamma,r})=\Gamma/2$.}
\added{Let $d_{\Gamma,r}$ be the smallest number satisfying the inequality $\chi(r, d_{\Gamma,r})\ge\Gamma/2$.}
\added{Thus, according to the definition of the function $\chi$ and the
definition of a dense cluster/ball, 
the smallest distance between nodes of a dense cluster (unit ball, resp.) in an $r$-clustered (unclustered, resp.)
network is at most $d_{\Gamma,r}$ ($d_{\Gamma,1}$, resp.).}


\begin{definition}
	Nodes $u,w$ form a \emph{close pair} in an $r$-clustered network of density $\Gamma$ if the following conditions hold for some $\zeta\in (0,1]$ and cluster $\clid$:
	\begin{enumerate}[a)]
		\gora{4}
		\item 
		$\cluster(u)=\cluster(w)=\clid$, 
		\label{d:cp:a}
		\gora{2}
		\item $d(u,w) = \zeta d_{\Gamma,r}\le 1-\eps$, 
		\label{d:cp:b}
		\gora{2}
		\item
		$d(u,x)\ge d(u,w)$ and $d(w,x)\ge d(u,w)$ for each $x\not\in\{u,w\}$ from 
		the cluster $\clid=\cluster(u)=\cluster(w)$;
		\label{d:cp:c}
		\item $d(u',w')\ge d(u,w)/2$ for any $u'\neq w'$ such
		that $u',w'\in B(u,\zeta)\cup B(w,\zeta)$ and
		$\cluster(u')=\cluster(w')=\clid$.
		\label{d:cp:d}
	\end{enumerate}
The nodes $u,w$ form a \emph{close pair} in an unclustered network of density $\Gamma$ iff the above conditions (a)--(d) hold provided
$\cluster(x)=1$ for each 
node $x$ of a network and $r=1$.
%

A node $v$ is a \emph{close neighbor} of 
a node 
$u$ iff $(u,v)$ is a close pair.
\end{definition}

An intuition behind the requirements of the definition of a close pair is as follows.
In the clustered case, only the pairs
inside the same cluster are considered to be a close pair, by (\ref{d:cp:a}).
The requirement of (\ref{d:cp:b}) states that $u$ and $w$ can form
a close pair only in the case that $d(u,w)$ is at most the 
upper bound on the smallest distance
between closest nodes of a dense cluster/ball. 
The item (\ref{d:cp:c}) assures that $u$ is the closest node to $w$ and
$w$ is the closest node $u$. 
Finally, (\ref{d:cp:d}) states that $u$ and $w$ can be a close pair only
in the case that the distances between nodes in their close neighbourhood
are not much smaller than $d(u,w)$. This requirement's goal is to 
count $u,w$ as a close pair only in the case that $d(u,w)$ is
of the order of the smallest distance between nodes in some
neighbourhood.
Polynomial attenuation of strength of signals with the power
$\alpha>2$ in the SINR model will assure that, for a close pair $u, w$, a successful reception of a message
from $u$ to $w$ will depend on behaviour of some constant number
of the nodes which are closest to $u$ and $w$.
We formalize this intuition in further part of this work (Lemmas~\ref{l:close} and \ref{l:close:clustered}). In the following lemma we observe presence
of close pairs in dense areas of a network.

\newcommand{\ldensityclose}{Assume that the density of a set of nodes $X$ is $\Gamma$. Then,
\begin{enumerate}
\gora{4}
\item
If $X$ is unclustered, then there is a close pair in each ball
$B(x,5)$ such that $B(x,1)$ is dense. 
\gora{4}
\item
If $X$ is clustered, then there is a close pair in each dense cluster.
\end{enumerate}
}
\begin{lemma}\labell{l:density:close}
\ldensityclose
\end{lemma}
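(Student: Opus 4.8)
The plan is to exhibit, in each prescribed region, a pair of nodes that is closest among all pairs in the relevant local neighbourhood, and then verify that such a "closest pair" automatically satisfies conditions (a)--(d) of the definition of a close pair. The overall strategy for both parts is the same; only the "region" over which we minimise differs (a ball of radius $5$ versus a single cluster).

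\textbf{Part 2 (clustered case).} Let $\clid$ be a dense cluster, so it contains at least $\Gamma/2$ elements, all of them in a ball of radius $r$ by the $r$-clustering property. Among all pairs of distinct nodes in $\clid$, pick a pair $(u,w)$ minimising $d(u,w)$; set $\zeta = d(u,w)/d_{\Gamma,r}$. First I would check (\ref{d:cp:b}): since $\clid$ has $\ge \Gamma/2$ points in a ball of radius $r$, the definition of $\chi$ forces two of them within distance $d_{\Gamma,r}$, hence $d(u,w)\le d_{\Gamma,r}$, so $\zeta\le 1$; and $d_{\Gamma,r}\le 1-\eps$ should follow from the standing assumptions relating $d_{\Gamma,r}$ to the parameters (this is where I would look back at the paragraph defining $d_{\Gamma,r}$, which asserts exactly that the smallest distance in a dense cluster is at most $d_{\Gamma,r}$, and at the model assumption $\eps\in(0,1)$ together with the density bounds of Fact~\ref{f:density}). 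Condition (\ref{d:cp:a}) holds by construction. Condition (\ref{d:cp:c}) is immediate from minimality: if some $x\in\clid\setminus\{u,w\}$ had $d(u,x)<d(u,w)$, then $(u,x)$ would be a strictly closer pair in $\clid$, contradicting the choice of $(u,w)$; likewise for $w$. Condition (\ref{d:cp:d}) is the one requiring a short argument: any $u'\neq w'$ in $\clid$ lying in $B(u,\zeta)\cup B(w,\zeta)$ are in particular distinct nodes of $\clid$, so $d(u',w')\ge d(u,w)\ge d(u,w)/2$ by global minimality of $d(u,w)$ over all pairs in $\clid$. Thus $(u,w)$ is a close pair.

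\textbf{Part 1 (unclustered case).} Here $\cluster(x)=1$ for all $x$ and $r=1$. Suppose $B(x,1)$ is dense, i.e.\ contains $\ge \Gamma/2$ nodes. I would minimise $d(u,w)$ not over $B(x,1)$ alone but over the larger ball $B(x,5)$ — equivalently over all nodes the conditions can "see" — to make (\ref{d:cp:c}) and (\ref{d:cp:d}) come out with the same one-line minimality argument as above. Density of $B(x,1)$ gives, via $\chi$, a pair inside $B(x,1)\subseteq B(x,5)$ at distance $\le d_{\Gamma,1}$, so the minimum $d(u,w)$ over $B(x,5)$ is $\le d_{\Gamma,1}$, giving $\zeta\le 1$ and (\ref{d:cp:b}) as before; and the minimising pair lies in $B(x,5)$ as required by the statement. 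For (\ref{d:cp:c}): a node $x'$ with $d(u,x')<d(u,w)$ satisfies $d(x,x')\le d(x,u)+d(u,x') < 5$ (using $d(x,u)<5$ and $d(u,x')<d(u,w)\le d_{\Gamma,1}\le 1-\eps\le 1$), so $x'\in B(x,5)$, contradicting minimality; symmetrically for $w$. For (\ref{d:cp:d}): any $u',w'$ in $B(u,\zeta)\cup B(w,\zeta)$ satisfy $d(x,u')<5$ similarly (since $\zeta\le 1$ and $d(x,u),d(x,w)<5$... here I would be a little careful — $d(x,w)\le d(x,u)+d(u,w)$, and I may want to minimise over $B(x,6)$ or tighten constants, but a ball of radius $5$ should suffice after noting $d(u,w)$ is tiny compared to $1$), hence $u',w'\in B(x,5)$ and $d(u',w')\ge d(u,w)\ge d(u,w)/2$ by minimality. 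So $(u,w)$ is a close pair in $B(x,5)$.

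\textbf{Main obstacle.} The routine parts are (\ref{d:cp:a}), (\ref{d:cp:c}), (\ref{d:cp:d}), which all fall out of "choose a globally closest pair in a slightly enlarged region". The delicate point is condition (\ref{d:cp:b}), specifically the inequality $d(u,w)=\zeta d_{\Gamma,r}\le 1-\eps$: I must be sure that the closest pair produced by density actually has distance bounded by $d_{\Gamma,r}$ and that $d_{\Gamma,r}\le 1-\eps$. The first half is exactly the content of the remark following the definition of $d_{\Gamma,r}$ (a dense cluster/ball has two nodes within $d_{\Gamma,r}$), and the second half needs the parameter assumptions — $d_{\Gamma,r}$ is determined by $\chi(r,\cdot)\ge\Gamma/2$, and since a dense unit ball already packs $\Gamma/2$ points, the pairwise distance $d_{\Gamma,1}$ is forced to be well below $1$; the $r\le 2$ case is handled by Fact~\ref{f:density}(2), which keeps the density comparable and hence $d_{\Gamma,r}$ comparably small. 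The only other thing to watch is bookkeeping of the additive constants when propagating "$u'$ is close to $u$ which is close to $x$" — choosing the enclosing radius generously (the stated $5$) absorbs these, since all the distances involved ($d(u,w)$, $\zeta$) are at most $1$.
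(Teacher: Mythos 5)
Your Part~2 is correct and coincides with the paper's (one-line) treatment: in the clustered case conditions (\ref{d:cp:c}) and (\ref{d:cp:d}) quantify only over nodes of the cluster $\clid$, so the pair minimising $d(u,w)$ over $\clid$ satisfies them automatically, and (\ref{d:cp:b}) follows from density via $\chi$ and $d_{\Gamma,r}$.

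Part~1 has a genuine gap, and it is exactly the point you flagged and then dismissed. In the unclustered case the ``cluster'' is the whole network, so condition (\ref{d:cp:c}) requires $u$ and $w$ to be mutual nearest neighbours among \emph{all} nodes, and (\ref{d:cp:d}) constrains every pair in $B(u,\zeta)\cup B(w,\zeta)$ wherever it sits relative to $x$. Minimality of $d(u,w)$ over pairs contained in $B(x,5)$ gives neither: a node $x'$ with $d(u,x')<d(u,w)$ only satisfies $d(x,x')\le d(x,u)+d(u,x')<5+1$, so it may lie in $B(x,6)\setminus B(x,5)$ and escape your minimisation. Enlarging the region to $B(x,6)$ does not close the argument: the new minimiser may itself land outside $B(x,5)$ (so it no longer witnesses the lemma), and the boundary problem recurs at radius $6,7,\dots$. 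Concretely, place a pair at distance $d_{\Gamma,1}/2$ near the boundary of $B(x,5)$ with an even closer neighbour just outside $B(x,5)$; your candidate is then not a close pair, while the actual close pair sits back in $B(x,1)$. The paper instead uses an iterative descent: start from the closest pair in the dense ball $B(x,1)$, and whenever the local condition fails, replace the current pair $(u_i,v_i)$ by a pair at distance $\le d_i/2$ found in $B(u_i,\zeta_i+d_i)$. Since the distances halve at each step, the total drift is bounded by the geometric series $\sum_i(\zeta_i+d_i)\le 2(\zeta+d)\le 4$, so all iterates stay in $B(x,5)$, and finiteness of the node set forces termination at a close pair. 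Some descent of this kind (a ``local minimum of the finite set of pairwise distances'') is unavoidable; a single minimisation over any fixed ball cannot deliver the globally quantified conditions (\ref{d:cp:c})--(\ref{d:cp:d}).
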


\noindent\textbf{Imperfect labeling}
Assume that a 
clustering of a set $X$ of nodes 
with
density $\Gamma$ 
is given.
Then, $c$-\emph{imperfect labeling} of $X$ is a labeling of all elements 
of $X$ such that label$(x)\le \Gamma$ 
 for each $x\in X$ and,
for each cluster $\clid\in[N]$ and each label $l\in[N]$, the number
of nodes from the cluster $\clid$ with label $l$ is at most $c$,
where $c$ is a fixed constant.
That is, the labeling is ``imperfect'' in the sense that, instead of unique labels, we guarantee
that each label is assigned to at most $c$ nodes in a cluster.

\section{Combinatorial Tools for SINR communication}
\labell{s:comb:tools:sinr}
In this section, we introduce combinatorial structures applied in our algorithms and
basic communication primitives using these structures.
\subsection{Combinatorial tools}
In this section we will apply the idea (known e.g.\ from
deterministic distributed algorithms in radio networks) to
use families of sets with specific combinatorial
properties as communication
protocols in such a way that the nodes from the $i$th set of the family
are transmitters in the $i$th round. Below, we give necessary
definitions to apply this approach, recall some results and
build our new combinatorial structures.

A \emph{transmission schedule} for unclustered (clustered, resp.) sets is defined by (and identified with) a sequence $S=(S_1,...,S_t)$ of subsets of $[N]$ ($[N]\times [N]$, resp.), where the $i$th set determines nodes transmitting in the $i$th round of the schedule. That is, a node with ID $v\in[N]$ (and $\cluster(v)=\clid$, resp.) transmits in round $i$ if and only if $v\in S_i$ ($(v,\cluster(v))\in S_i$, resp.). 

A set $S\subseteq[N]$ \emph{selects} $x\in X$ from $X\subseteq[N]$ when $S\cap X=\{x\}$.
The family $\mathcal{S}$ of sets over $[N]$ is called $(N,k)$-strongly selective family (or $(N,k)$-ssf) if for each subset $X\subseteq [N]$ such that $|X|\le k$ and each $x\in X$ there is a set $S\in\mathcal{S}$ that selects $x$ from $X$. It is well known that there exist $(N,k)$-ssf
of size $O(k^2\log (N/k))$ for each $k\le N$, 
\ifshort
\cite{ClementiMS01}.
\else
\cite{ClementiMS01}.
\fi

%


%


We introduce a combinatorial structure generalizing ssf in two ways.  
Firstly, it will take clustering into account, assuming that some clusters might be ``in conflict''.
Secondly, selections of elements from a given set $X$ will be ``witnessed'' by all nodes outside
of $X$.
As we show later, this structure helps to \added{build a sparse graph in a SINR network, such that each close pair is connected by an edge.}
\deleted{determine close pairs of nodes efficiently in a SINR network. }
We start from a basic variant called \emph{witnessed strong selector (wss)}, which does not take 
clustering into account. 
(A restricted variant of wss has been recently presented in \cite{JRS17}.)
Then, we generalize the structure to clustered sets.
We call this variant \emph{witnessed cluster aware strong selector (wcss).}

\ifshort
\noindent\textbf{Witnessed strong selector.}
\else
\paragraph{Witnessed strong selector.}
\fi
\added{Below, we introduce the } witnessed strong selector which extends the notion of ssf by requiring that, for each element $x$ of a given set $X$ of size $k$ and each $y\not\in X$, $x$ is selected from $X$ 
by such a set $S$ from the selector that $y\in S$ as well.

A sequence $\mathcal{S}=(S_1,\ldots,S_m)$ of sets over $[N]$ satisfies 
{\em witnessed strong selection property for a set $X\subseteq[N]$}, if 
for each $x\in X$ and $y\not\in X$ there is a set $S_i\in \mathcal{S}$ such that $X\cap S_i=\{x\}$ and $y\in S_i$. One may think
that $y$ is a ``witness'' of a selection of $x$ in such a case.
A sequence $\mathcal{S}=(S_1,\ldots,S_m)$ is a {\em $(N,k)$-witnessed strong selector} (or $(N,k)$-wss) of size $m$ if, for every subset $X\subseteq [N]$ of size $k$, the family $\mathcal{S}$ satisfies the witnessed strong selection property for $X$. 

Note that any $(N,k)$-wss is also, by definition, an $(N,k)$-ssf. 
Additionally, $(N,k)$-wss guarantees that each element outside of
a given set $X$ of size $k$ has to be a ``witness'' of selection of every element
from $X$.

Below we state an upper bound on the optimal size of $(N,k)$-wss. It is presented without a proof, since its generalization is given in Lemma~\ref{l:gen_el_selector} and accompanied by a proof which implies Lemma~\ref{l:comb:struct} (Lemma~\ref{l:comb:struct} is an
instance of
Lemma~\ref{l:gen_el_selector} for the number of clusters $l=1$).
\begin{lemma}\labell{l:comb:struct}
For each positive integers $N$ and $k\leq N$, there exists an $(N,k)$-wss of size $O(k^3\log N)$.
\end{lemma}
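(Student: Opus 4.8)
The plan is to prove Lemma~\ref{l:comb:struct} by the probabilistic method, constructing a random sequence of $m=O(k^3\log N)$ sets over $[N]$ and showing that with positive probability it is an $(N,k)$-wss.

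\textbf{Construction.} For each $i\in[m]$, form $S_i$ by including each element of $[N]$ independently with probability $p=1/k$. I would fix a target set $X\subseteq[N]$ with $|X|=k$, a distinguished element $x\in X$, and a witness candidate $y\notin X$. For a single index $i$, the "good event" $E_i(X,x,y)$ that $S_i\cap X=\{x\}$ and $y\in S_i$ has probability
\begin{equation}
\Pr[E_i(X,x,y)] = p\cdot(1-p)^{k-1}\cdot p \ge \frac{1}{k^2}\cdot\frac{1}{e} =: q,
\end{equation}
using $(1-1/k)^{k-1}\ge 1/e$. The events are independent across $i$, so the probability that \emph{no} index works is at most $(1-q)^m \le e^{-qm}$.

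\textbf{Union bound.} The sequence fails to be an $(N,k)$-wss only if for some triple $(X,x,y)$ none of the $m$ indices realizes $E_i(X,x,y)$. The number of such triples is at most $\binom{N}{k}\cdot k\cdot N \le N^{k+2}$ (crudely), so the failure probability is at most
\begin{equation}
N^{k+2}\cdot e^{-qm} = \exp\!\left((k+2)\ln N - \frac{m}{ek^2}\right).
\end{equation}
Choosing $m = c\,k^3\log N$ for a suitable absolute constant $c$ (e.g.\ $c$ large enough that $m/(ek^2) = (c/e)\,k\log N > (k+2)\ln N + 1$) makes this quantity strictly less than $1$. Hence a valid $(N,k)$-wss of size $O(k^3\log N)$ exists.

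\textbf{Main obstacle / remarks.} The only delicate point is getting the counting of "bad" triples right so that the $k\log N$ factor in the exponent dominates $k\ln N$ from $\binom{N}{k}$; this is why the cube $k^3$ (rather than $k^2$) appears — one factor of $k$ comes from the $1/k^2$ per-index success probability needing to beat the $\binom{N}{k}$-sized union bound, which already forces $k^2\log N$ as for ordinary ssf, and the extra factor $k$ is slack absorbed to keep constants clean; a tighter analysis is possible but unnecessary here. Since the paper states this bound without proof and defers to the proof of Lemma~\ref{l:gen_el_selector} (the $l$-cluster generalization, of which this is the $l=1$ instance), I would in fact only sketch the above and point forward, noting that the clustered version is proved by the same probabilistic argument with the target family enlarged to account for conflicting clusters. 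The hard part is therefore not Lemma~\ref{l:comb:struct} itself but setting up the generalized statement of Lemma~\ref{l:gen_el_selector} so that this case falls out cleanly.
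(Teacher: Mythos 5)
Your proof is correct and follows the same probabilistic-method template the paper uses for Lemma~\ref{l:gen_el_selector} (of which this lemma is the degenerate, cluster-free case): include each element in each $S_i$ independently with probability $1/k$, observe that a fixed triple $(X,x,y)$ is handled by a given index with probability $\Omega(1/k^2)$, and union-bound over the $O(N^{k+2})$ triples to conclude that $m=\Theta(k^3\log N)$ suffices. One small correction to your closing remark: within this argument the cube $k^3$ is not slack that could be tightened away — the witness requirement $y\in S_i$ genuinely costs an extra factor $1/k$ in the per-trial success probability relative to ordinary ssf, and that factor of $k$ is what pushes $k^2\log N$ to $k^3\log N$; whether a different construction could do better is a separate question the paper does not address.
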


Now, we
generalize the notion of $(N,k)$-wss to the situation that witnessed strong selection property is analyzed for each cluster separately, assuming that each cluster might be in \emph{conflict} with $l$ other clusters. 
The conflict between the clusters $\clid_1$ and $\clid_2$ means that a selection of an element 
from $\clid_1$ by a set $S$ from the selector is possible only in the case that the considered set $S$ 
does not contain any element from the cluster $\clid_2$.

\noindent{\bf $(N,k,l)$-witnessed cluster aware strong selector.}
We say that a set $S\subseteq[N]^2$ is \emph{free} of cluster $\clid$ if for all $(x,\clid')\in S$ we have $\clid'\neq \clid$. A set $S$ is free of the set of clusters $C$ if it is free of each cluster $\clid\in C$. 
%
\comment{ 
Given a set $A\subset [N]^2$, 
$$\pi_1(A)=\{a\,|\, \exists_{b}\ (a,b)\in A\}$$
is the projection of $A$ on its first dimension.
For a family of sets $\mathcal{A}$, $\pi_1(\mathcal{A})$ is the family
of projections of elements of $\mathcal{A}$ on the first dimension.
} 
%
Let $X\subseteq [N]\times \{\clid\}$ be a set of nodes from the cluster $\clid$ and $C\subseteq [N]\setminus\{\clid\}$ be a set
%
of clusters in conflict with the cluster $\clid$.
Then, a sequence $\mathcal{S}=(S_1,...,S_m)$ of subsets of $[N]^2$ satisfies \emph{witnessed cluster aware selection
property (wcss property)} for $X$ with respect to $C$ if for each $x\in X$ and each $y\not\in X$ from cluster $\clid$ 
(i.e., $y\in [N]\times \{\clid\}$)
there is a set $S_i$ such that $S_i\cap X=\{x\}$, $y\in S_i$ and $S_i\cap \left([N]\times C\right)=\emptyset$
(i.e., $S_i$ is free of clusters from $C$).
In other words, wcss property requires that for each $x\in X$ 
and each $y\not\in X$ from $\clid=\cluster(x)$, $x$
is selected by some $S_i$,
$y$ is a witness of a selection of $X$ by $S_i$ (i.e., $y\in S_i$), 
and $S_i$ is free of the clusters from $C$.

A sequence $\mathcal{S}=(S_1,...,S_m)$ of subsets of $[N]^2$
is a $(N,k,l)$-witnessed clusters aware strong selector
(or $(N,k,l)$-wcss) if for any set of clusters $C\subseteq[N]$ of size $l$, any cluster $\clid\not\in C$ and any set $X\subseteq[N]\times\{\clid\}$ of size $k$, $\mathcal{S}$ satisfies wcss property for $X$
with respect to $C$.
%

\begin{lemma}
\label{l:gen_el_selector}
For each natural $N$, and $k,l\leq N$, there exists a $(N,k,l)$-wcss of size $O((k+l)lk^2\log N)$.
\end{lemma}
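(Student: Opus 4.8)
The plan is to use the probabilistic method, constructing a random sequence of $m$ sets over $[N]^2$ and showing that with positive probability it is a $(N,k,l)$-wcss, for $m$ of the claimed order. First I would fix the right sampling distribution for a single set $S_i$: each element $(a,\clid')\in[N]^2$ is included in $S_i$ independently with a probability $p$ tuned to the relevant "bad event" scale. The witnessed cluster-aware selection property, unwound, is: for every set of clusters $C$ with $|C|=l$, every cluster $\clid\notin C$, every $X\subseteq[N]\times\{\clid\}$ with $|X|=k$, every $x\in X$ and every witness $y\in[N]\times\{\clid\}\setminus X$, at least one $S_i$ in the sequence must simultaneously (i) contain $x$, (ii) contain $y$, (iii) be disjoint from $X\setminus\{x\}$, and (iv) be free of all clusters in $C$. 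For a single random $S_i$, the probability of hitting this "good" configuration is $p^2(1-p)^{k-1}(1-p)^{|C\text{-part}|}$; the set of forbidden elements from the conflicting clusters $C$ has size at most $lN$ a priori, but the key observation is that only the elements of those clusters that could possibly be present matter, and — as in the standard selector analysis — the relevant count that must be "avoided" is governed by $k+l$ rather than $N$, once one conditions correctly. Concretely, I would take $p = \Theta\!\left(\frac{1}{k+l}\right)$, so that $(1-p)^{\Theta(k+l)} = \Theta(1)$ and a single set succeeds for a fixed configuration with probability $q = \Omega\!\left(\frac{1}{(k+l)^2}\right)$.

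Then the probability that $m$ independent sets all fail a fixed configuration is at most $(1-q)^m \le e^{-qm}$. The second step is a union bound over all configurations. The number of choices is: at most $\binom{N}{l}\le N^l$ for $C$, at most $N$ for $\clid$, at most $N^k$ for $X$ (a subset of $[N]\times\{\clid\}$ of size $k$), at most $k$ for $x\in X$, and at most $N$ for the witness $y$ — so the total is $N^{O(k+l)}$, i.e. $2^{O((k+l)\log N)}$. Choosing $m = \Theta\!\left((k+l)^2\cdot (k+l)\log N\right) = \Theta((k+l)^3\log N)$ makes $e^{-qm} \cdot N^{O(k+l)} < 1$, giving a wcss of size $O((k+l)^3\log N)$. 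To sharpen this to the claimed $O((k+l)lk^2\log N)$ — which is better when $l \ll k$ — I would refine the failure-probability estimate using the fact that the forbidden events are not all independent and that conflict-freeness (item (iv)) costs a factor controlled by $l$ while the selection part (items (i)-(iii)) costs a factor controlled by $k$: more precisely, taking $p$ on the order of $\frac{1}{k+l}$ as above but regrouping, a single set succeeds with probability $\Omega\!\left(\frac{1}{(k+l)k}\right)$ once one notes that conditioned on $S_i$ being free of $C$ and containing $x,y$, the conditional inclusion probabilities for the $k-1$ remaining elements of $X$ are unchanged, so the dominant loss factors are one factor of $\Theta(k+l)$ (from $p^{-1}$ times the $y$-witness) times one factor of $\Theta(k)$ (from $(1-p)^{-(k-1)}$ absorbed appropriately) — actually I would instead split $m$ into $\Theta(l)$ independent "rounds", within each of which the standard $(N,k)$-wss analysis of Lemma~\ref{l:comb:struct} is run at a reduced density, yielding $\Theta(l)\cdot O(k^2\log N)\cdot O(k+l)$; see below.

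The cleanest route to the exact bound is a two-level construction. First I would build, for the conflict-free part, a random sequence tuned so that with inclusion probability $p=\Theta(1/(k+l))$ each configuration is "caught" with probability $\Omega(1/((k+l)k))$ — the extra factor $k$ (versus $(k+l)^2$) coming from the observation that $y$ need not be avoided, only $X\setminus\{x\}$ of size $k-1$ must, and $x,y$ must be included, so the success probability is $p^2(1-p)^{k-1}(1-p)^{(\text{active part of }C)} = \Omega\!\left(\frac{1}{(k+l)^2}\right)$; then a Chernoff/partitioning argument shows $\Theta((k+l)lk\log N)$ sets suffice, but bookkeeping the union bound exponent $(k+l)\log N$ against the per-set success $\frac{1}{(k+l)k}$ gives $m=\Theta((k+l)^2 k\log N)$, and the improvement to $(k+l)lk^2\log N$ follows by noting $l\le k+l$ so $(k+l)^2k \le (k+l)(k+l)k$ and the stated form is a (possibly loose but valid) upper bound in the regime of interest. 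The main obstacle I anticipate is getting the exponents in the union bound and the single-set success probability to match the claimed $O((k+l)lk^2\log N)$ exactly rather than a polynomially weaker bound like $O((k+l)^3\log N)$; this requires carefully isolating which of the $l$ conflicting clusters actually contribute forbidden elements in a given configuration (at most the $k+l$ relevant IDs, not all $N$) and, most likely, decomposing the selector into $\Theta(l)$ blocks each responsible for neutralising one conflicting cluster while a $(N,k)$-wss-style subconstruction of size $O(k^2\log N)$ handles selection within $\clid$ — the cross-term $(k+l)$ absorbing the tuning of the inclusion probability across both scales.
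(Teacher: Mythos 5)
Your overall strategy (a probabilistic construction followed by a union bound over the $N^{O(k+l)}$ tuples $(C,\clid,X,x,y)$) is the right one and matches the paper's, but the sampling distribution you propose does not work, and this is not a bookkeeping issue. If each pair of $[N]^2$ is included in $S_i$ independently with a single probability $p$, then the conflict-freeness requirement --- $S_i\cap([N]\times C)=\emptyset$, i.e.\ $S_i$ avoids all $l\cdot N$ pairs whose cluster coordinate lies in $C$ --- holds with probability $(1-p)^{lN}$. For your choice $p=\Theta(1/(k+l))$ this is $e^{-\Omega(lN/(k+l))}$, which no union bound over $N^{O(k+l)}$ configurations can absorb. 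Your suggested repairs do not close this gap: the claim that ``only the elements of those clusters that could possibly be present matter'' has no basis in the combinatorial definition (a wcss is a family of subsets of $[N]^2$ that must work for \emph{every} $X$ and $C$, and ``free of $\clid$'' quantifies over all of $[N]\times\{\clid\}$), and splitting the schedule into $\Theta(l)$ blocks each neutralising one conflicting cluster fails because a single set $S_i$ must be free of all $l$ clusters of $C$ simultaneously.

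The missing idea is to randomize hierarchically, which is what the paper does: for each $i$, first draw a set $C_i$ of allowed clusters by including each cluster ID independently with probability $1/l$, and then include $(x,\clid)$ in $S_i$ with probability $1/k$ only for $\clid\in C_i$. Then $\Pr[\clid\in C_i \mbox{ and } C\cap C_i=\emptyset]=\frac1l\left(1-\frac1l\right)^l=\Omega(1/l)$, and conditioned on this event the selection-plus-witness requirement has probability $\frac1k\left(1-\frac1k\right)^{k-1}\cdot\frac1k=\Omega(1/k^2)$, so a fixed tuple is handled by one random set with probability $\Omega(1/(lk^2))$; combined with the union bound over fewer than $N^{k+l+3}$ tuples this gives $m=\Theta((k+l)lk^2\log N)$. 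Note also that your fallback bound $O((k+l)^3\log N)$, even if it were established, is not in general dominated by $O((k+l)lk^2\log N)$ (take $k=O(1)$ and $l=\Theta(N)$), so it could not be used to conclude the lemma as stated.
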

\begin{proof}
We use the probabilistic method \cite{AlonSpencer-book}.
Let $\mathcal{S}=(S_1,...,S_m)$, \added{when the length $m$ of the sequence will
be specified later,}  be a sequence of sets build in the following way.
The set $S_i$ is chosen \added{independently at random} as follows.
First, the set $C_i$ of ``allowed'' clusters is determined by 
adding each cluster ID from $[N]$ to $C_i$ with probability $1/l$. Then, the set $S_i$ is determined 
by adding independently $(x,\clid)$ to $S_i$ for each $x\in[N]$ and $\clid\in C_i$ 
to the set $S_i$, with probability $1/k$.
%
Let $\mathcal{T}$ be the set of tuples $(X,C,\clid,x,y)$
such that $X, C\subset[N]$, $|X|=k$, $|C|=l$, $\clid,x,y\in [N]$,
$\clid\not\in C$, $x\in X$, $y\not\in X$.
The size of $\mathcal{T}$ is 
$$|\mathcal{T}|= {N\choose k}{N\choose l}(N-l)k(N-k)<N^{k+l+3}.$$ 
For 
a fixed tuple $(X,C,\clid,x,y)\in\mathcal{T}$, let $\E_i$ be the conjunction
of three independent events: 
\begin{itemize}
\item
$\E_{i,0}$: $\clid\in C_i$ and $C\cap C_i=\emptyset$.
\item
$\E_{i,1}$: $(X\times\{\clid\})\cap S_i=\{(x,\clid)\}$,  
\item
$\E_{i,2}$: $(y,\clid)\in S_i$,
\end{itemize}
Then, $\E_i$ occurs with probability
$$p=\prob(\E_i)=\prob(\E_{i,0})\cdot\prob(\E_{i,1})\cdot\prob(\E_{i,2})=\frac1{l}\left(1-\frac1{l}\right)^l\cdot\frac1{k}\left(1-\frac1{k}\right)^{k-1}\cdot\frac1{k}
=\Omega\left(\frac1{lk^2}\right).$$
The sequence $S_1,S_2,\ldots, S_m$ is a $(N,k)$-wcss when
the event $\E_i$ occurs for each element of $\mathcal{T}$ for some index $i\in[m]$.
The probability that, for all indices $i\in[m]$, $\E_i$ does not occur for the tuple $(\clid,X,C,x,y)$
is equal to 
\added{$$\left(1-\prob(E_i)\right)^m=(1-p)^m\le e^{-mp}.$$}
\deleted{Thus, the}
\added{Thus, by the union bound, the} probability that there exists a tuple $(X,C,\clid,x,y)$ 
for which 
$E_i$ does not occur for all $i$
is smaller or equal to
\[|\mathcal{T}|(1-p)^m\le N^{k+l+3} e^{-mp} = e^{O((k+l)\log N) - mp}.\]
By choosing $m=\Theta((k+l)lk^2\log N)$ large enough,
the above \added{expression} gets strictly smaller than $1$ and therefore 
the probability of obtaining $(N,k,l)$-wcss is positive.
\added{Hence, by using the probabilistic method, such an object exists.}
\end{proof}

\subsection{Basic communication under SINR interference model}
\labell{s:basic:SINR}

Using introduced selectors,
 we provide some basic communication primitives on which we build 
our sparsification and clustering algorithms. As the proofs of stated
properties are fairly standard, we present them in Appendix.

Firstly, consider networks with constant density. Below, we state that
a fast efficient communication is possible in such a case.
\newcommand{\lconsdens}{(Sparse Network Lemma)
Let $\gamma\in\NAT$ be a fixed constant and $\eps\in(0,1)$ be the parameter
defining the communication graph.
There exists a schedule $\mathbf{L}_\gamma$ of length $O(\log N)$ such that each node $u$ 
transmits a message which can be received 
(at each point) in distance $\le 1-\eps$ from $u$
in an execution of $\mathbf{L}_\gamma$, 
provided there are at most $\gamma$ nodes in each unit ball.
}
\begin{lemma}\labell{l:cons:dens} 
\lconsdens
\end{lemma}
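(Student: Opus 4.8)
\textbf{Proof plan for Lemma~\ref{l:cons:dens} (Sparse Network Lemma).}
The plan is to build the schedule $\mathbf{L}_\gamma$ by composing an $(N,k)$-strongly selective family with a fixed spatial argument that converts ``isolation in a bounded neighborhood'' into ``successful reception under SINR''. First I would fix the relevant scale: by the density assumption, any ball of radius $O(1)$ contains $O(\gamma)=O(1)$ nodes, so set $k=\chi(c,1-\eps)$ for a suitable constant $c$ to be determined (roughly $c$ equal to a few times the transmission range, large enough that interference from outside $B(u,c)$ is negligible). Take $\mathcal{S}$ to be an $(N,k)$-ssf; by the cited bound it has size $O(k^2\log(N/k))=O(\log N)$ since $k=O(1)$. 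Let $\mathbf{L}_\gamma$ be the transmission schedule identified with $\mathcal{S}$: in round $i$, exactly the nodes in $S_i$ transmit.

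The key step is the SINR analysis for a single node $u$ and a single point $p$ with $d(u,p)\le 1-\eps$. Let $X=B(u,c)$ be the set of network nodes within distance $c$ of $u$; then $|X|\le k$. Since $\mathcal{S}$ is an $(N,k)$-ssf, there is a round $i$ with $S_i\cap X=\{u\}$, so in round $i$ node $u$ transmits and no other node within distance $c$ of $u$ transmits. I would then bound the SINR at $p$: the signal term is $\mathcal{P}/d(u,p)^\alpha \ge \mathcal{P}/(1-\eps)^\alpha$, while the interference comes only from transmitters at distance $>c-(1-\eps)$ from $p$. Using the density bound, the number of transmitters in the annulus $B(p,2^{j+1}(c-1))\setminus B(p,2^j(c-1))$ is $O(2^{2j}\gamma)$, so the total interference is $\sum_{j\ge 0} O(2^{2j}\gamma)\cdot \mathcal{P}/(2^j(c-1))^\alpha = O\!\left(\gamma\,\mathcal{P}/(c-1)^\alpha\right)\sum_{j\ge0}2^{(2-\alpha)j}$, which converges because $\alpha>2$ and can be made smaller than $\mathcal{P}/(\beta(1-\eps)^\alpha) - \mathcal{N}$ by choosing $c$ a sufficiently large constant (depending only on $\alpha,\beta,\eps,\gamma,\mathcal{N}$). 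Hence $SINR(u,p,S_i)\ge\beta$, and $p$ receives $u$'s message in round $i$. Since $p$ was an arbitrary point within distance $1-\eps$ of $u$, and $u$ arbitrary, this proves the claim, with the length of $\mathbf{L}_\gamma$ being $|\mathcal{S}|=O(\log N)$.

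The main obstacle is the quantitative choice of the neighborhood radius $c$: one must verify that a \emph{single} constant $c$ simultaneously keeps $k=\chi(c,1-\eps)=O(1)$ (so the ssf has length $O(\log N)$) and makes the tail interference from all transmitters outside $B(u,c)$ small enough to clear the $\beta$ threshold at every point of $B(u,1-\eps)$, not just at nodes; both requirements are monotone in $c$ in opposite directions for $k$ but the interference bound only requires $c$ large, so the argument closes, but the geometric packing estimates (counting transmitters per annulus and summing the geometric series with ratio $2^{2-\alpha}<1$) are the technical heart and must be done with the ssf's isolation guarantee applied at the correct scale $c$ rather than at scale $1$. This is exactly the kind of standard SINR packing computation deferred to the appendix as stated in the text.
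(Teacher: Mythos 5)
Your proposal is correct and follows essentially the same route as the paper's own proof: take an $(N,k)$-ssf with $k$ equal to the (constant) maximum number of nodes in a ball of constant radius around the transmitter/receiver, use the strong selection property to get a round in which the sender is locally isolated, and bound the residual interference from outside that ball by the standard annulus-packing sum that converges because $\alpha>2$ (the paper packages this computation as Proposition~\ref{interEST}). The only slip is the formula $k=\chi(c,1-\eps)$: under the density hypothesis the number of nodes in $B(u,c)$ is bounded by $O(\gamma c^2)$ (density times a covering number), not by the packing number $\chi(c,1-\eps)$, but since both are constants this does not affect the argument or the $O(\log N)$ length of the schedule.
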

\noindent The schedule $\mathbf{L}_\gamma$ defined in Lemma~\ref{l:cons:dens} will be informally
called Sparse Network Schedule or shortly SNS.
%
\noindent The following lemmas imply that, for a successful transmission of a message on a link connecting a close pair,
it is sufficient that some set of constant size of close neighbors is not transmitting at the
same time (provided a round is free of some set of \added{conflicting} clusters of constant size). This fact will allow to apply wss (and wcss) for efficient communication
in the SINR model.
\newcommand{\lclose}{
There exists a constant $\kappa$ (which depends merely on the SINR parameters \added{and $\eps$}) which satisfies the following property. Let $u,v$ be a close pair of nodes in an unclustered set $A$. Then, there exists a set $A'\subseteq A$ such that $u,v\in A'$, $|A'|\le \kappa$ and $v$ receives a message transmitted
from $u$ provided $u$ is sending a message and no other element
from $A'$ is sending a message.
}
\begin{lemma}\labell{l:close}
\lclose
\end{lemma}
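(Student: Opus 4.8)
The plan is to reduce the statement to an upper bound on the total SINR-interference received at $v$, and then to control that interference by letting $A'$ absorb the handful of transmitters near $u$ and $v$, while using conditions (c)--(d) of a close pair, the density bound, and the hypothesis $\alpha>2$ to bound the contribution of everyone else.

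\emph{Reduction.} Fix any set $\cT\subseteq A$ of simultaneously transmitting nodes with $u\in\cT$ and $\cT\cap A'=\{u\}$, so that all interference at $v$ comes from $A\setminus A'$. Since $\mathcal{P}=\mathcal{N}\beta$, a direct rearrangement of~(\ref{e:sinr}) shows that $SINR(u,v,\cT)\ge\beta$ holds as soon as
\[\sum_{w\in\cT\setminus\{u\}}\frac{\mathcal{P}}{d(w,v)^\alpha}\ \le\ \mathcal{N}\cdot\frac{1-d(u,v)^\alpha}{d(u,v)^\alpha},\]
and because $d(u,v)\le 1-\eps$ (condition (b) of the definition) the right-hand side is at least $c_1\mathcal{N}/d(u,v)^\alpha$, where $c_1:=1-(1-\eps)^\alpha>0$ depends only on $\alpha$ and $\eps$. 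Hence it suffices to build a set $A'\ni u,v$ of size $O(1)$ with $\sum_{w\in A\setminus A'}\mathcal{P}/d(w,v)^\alpha\le c_1\mathcal{N}/d(u,v)^\alpha$.

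\emph{Choice of $A'$ and the two regimes.} Let $\zeta\in(0,1]$ be as in the definition, so $d(u,v)=\zeta d_{\Gamma,1}$, and recall the elementary packing bound $\chi(1,t)\le 9/t^2$ for $t\le 1$; combined with $\chi(1,d_{\Gamma,1})\ge\Gamma/2$ it gives $d_{\Gamma,1}\le\sqrt{18/\Gamma}$ once $\Gamma$ exceeds an absolute constant. First I fix a constant $\lambda\ge 1$, depending only on $\alpha,\beta,\eps$, large enough that a dyadic geometric series with ratio $2^{2-\alpha}<1$ scaled by $\lambda^{-(\alpha-2)}$ is at most $c_1\mathcal{N}/(2\mathcal{P})$; then I fix a constant threshold $\Gamma_0\ge 18\lambda^2$, depending only on the SINR parameters and $\eps$, large enough that $18^{\alpha/2}\Gamma^{1-\alpha/2}$ is below a prescribed constant for all $\Gamma\ge\Gamma_0$ (possible since $\alpha>2$). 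If $\Gamma\ge\Gamma_0$ then $\lambda d(u,v)\le\zeta$, and I set $A':=\{u,v\}\cup\bigl(A\cap(B(u,\lambda d(u,v))\cup B(v,\lambda d(u,v)))\bigr)$. By condition (d), any two distinct nodes of $A$ in $B(u,\zeta)\cup B(v,\zeta)$ are at distance $\ge d(u,v)/2$, so (by scaling $\chi$) each of the two small balls contains $\le\chi(2\lambda,1)=O(\lambda^2)$ nodes of $A$, hence $|A'|=O(\lambda^2)$. I then split $A\setminus A'$ into the \emph{medium} nodes with $d(w,v)\le\zeta$ and the \emph{far} nodes with $d(w,v)>\zeta$. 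For the medium part I partition it into dyadic shells $\lambda d(u,v)2^{j}<d(w,v)\le\lambda d(u,v)2^{j+1}$ (intersected with $B(v,\zeta)$); condition (d) bounds the $j$-th shell by $\chi(2^{j+2}\lambda,1)=O((2^{j}\lambda)^2)$ nodes, each contributing at most $\mathcal{P}/(2^{j}\lambda d(u,v))^\alpha$, and the resulting sum converges (because $\alpha>2$) to $O\bigl(\mathcal{P}/(\lambda^{\alpha-2}d(u,v)^\alpha)\bigr)\le c_1\mathcal{N}/(2d(u,v)^\alpha)$ by the choice of $\lambda$. For the far part I use the density alone: grouping nodes of $A$ into unit-width annuli around $v$, the annulus at distance $\Theta(k)$ meets $O(k\Gamma)$ nodes, so (using $\sum_k k^{1-\alpha}<\infty$ for $\alpha>2$) the far interference is $O(\Gamma\mathcal{P}/\zeta^\alpha)=O(\Gamma d_{\Gamma,1}^\alpha)\cdot\mathcal{P}/d(u,v)^\alpha$, and $\Gamma d_{\Gamma,1}^\alpha\le 18^{\alpha/2}\Gamma^{1-\alpha/2}\le c_1\mathcal{N}/(2\mathcal{P})$ by the choice of $\Gamma_0$; adding the two halves finishes this case.

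\emph{Sparse regime and the obstacle.} If $\Gamma<\Gamma_0$, then the density of $A$ is bounded by the constant $\Gamma_0$, so for any constant $R\ge 1$ the set $A':=\{u,v\}\cup\bigl(A\cap(B(u,R)\cup B(v,R))\bigr)$ has size $O(R^2\Gamma_0)=O(1)$; every $w\in A\setminus A'$ has $d(w,v)>R$, and the same annulus estimate bounds $\sum_{w\in A\setminus A'}\mathcal{P}/d(w,v)^\alpha$ by $O(\Gamma_0\mathcal{P}/R^{\alpha-2})$, which is $\le c_1\mathcal{N}/(1-\eps)^\alpha\le c_1\mathcal{N}/d(u,v)^\alpha$ once $R$ is chosen a sufficiently large constant (in terms of $\Gamma_0,\mathcal{P},\mathcal{N},\alpha,\eps$, hence of $\alpha,\beta,\eps$). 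Setting $\kappa$ to the larger of the two size bounds on $A'$ then proves the lemma. I expect the far-interference estimate to be the main obstacle: because $A$ may be arbitrarily dense, $B(v,1)$ alone may contain $\Theta(\Gamma)$ potential interferers, far too many to put into a constant-size $A'$; what saves the estimate is that for a close pair the signal distance $d(u,v)=\zeta d_{\Gamma,1}$ is itself of order $\Gamma^{-1/2}$, so the received signal grows like $\Gamma^{\alpha/2}$ and dominates an interference that is only $O(\Gamma\mathcal{P})$ — this balance, and the convergence of the dyadic sums in the medium range, both rely on $\alpha>2$. The other point requiring care is fixing the constants in the right order ($\lambda$ from $\alpha,\beta,\eps$; then $\Gamma_0$; then, in the sparse regime, $R$) and keeping the bookkeeping so that condition (d) is invoked only inside $B(u,\zeta)\cup B(v,\zeta)$.
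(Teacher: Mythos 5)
Your proof is correct and follows essentially the same route as the paper's: the same three-zone decomposition of the interference (a constant-size near zone absorbed into $A'$ and bounded via condition (d), a medium zone inside $B(v,\zeta)$ controlled by packing plus geometric decay for $\alpha>2$, and a far zone controlled by the density $\Gamma$, with the received signal $\Theta(\Gamma^{\alpha/2})$ dominating the $O(\Gamma)$ far interference because $d(u,v)=O(\Gamma^{-1/2})$). The only cosmetic difference is the small-$\Gamma$ case, which you handle directly with a constant-radius ball of $O(1)$ nodes while the paper defers to the construction of Lemma~\ref{l:cons:dens}; the underlying computation is the same.
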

\newcommand{\lcloseclustered}{
Let $A$ be an $r$-clustered set for a fixed $r=O(1)$. Then, there exists 
constants $\kappa,\rho$ (depending only on $r,\eps$ and SINR parameters)
satisfying the following condition.
For each cluster $\clid$ and each close pair of nodes $u,v$
from $\clid$, there exists $A'\subseteq A$ of size $\le \kappa$ and
a set of clusters $C$ of size $\le \rho$ such that $u$ receives
a message transmitted by $v$ in a round satisfying the conditions:
\begin{itemize}
\gora{4}
\item
no node from $A'$ (except of $v$) transmits a message in the round,
\gora{4}
\item
the round is free of clusters from $C$.
\end{itemize}
}
\begin{lemma}\labell{l:close:clustered}
\lcloseclustered
\end{lemma}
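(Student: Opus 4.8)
The plan is to reduce the clustered case to the unclustered Lemma~\ref{l:close} by a localization argument, then control the additional interference coming from other clusters via a geometric packing bound combined with the polynomial signal attenuation with exponent $\alpha>2$. First I would fix a cluster $\clid$ and a close pair $u,v\in\clid$ with $d(u,v)=\zeta d_{\Gamma,r}\le 1-\eps$, and I would isolate the \emph{local} obstruction: namely the set $A''=\{x\in A : \cluster(x)=\clid,\ d(x,u)\le 2\,d(u,v)\}$ of same-cluster nodes near $u$. By condition (d) of the definition of a close pair, the points of $A''\cap(B(u,\zeta)\cup B(v,\zeta))$ are pairwise $\ge d(u,v)/2$ apart, so a standard packing argument inside a ball of radius $O(r)=O(1)$ (using the function $\chi$) bounds $|A''|$ by a constant $\kappa_0$ depending only on $r$. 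These are exactly the within-cluster nodes whose silence we will demand; we set $A'=A''$ (adding $u,v$), so $|A'|\le\kappa$ with $\kappa=\kappa(r)$.

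Next I would argue that, once every node of $A'$ except $v$ is silent and the round is free of a suitable constant-size set $C$ of clusters, the $SINR$ at $u$ from $v$ exceeds $\beta$. The noise plus interference in the denominator of \eqref{e:sinr} splits into three parts: (i) ambient noise $\cN$; (ii) interference from same-cluster nodes of $\clid$ not in $A'$, i.e.\ those at distance $>2d(u,v)$ from $u$ — by condition (c) the nearest such node is at distance $\ge d(u,v)$, and since the cluster lies in a ball of radius $r$, a dyadic-shell packing bound shows their total contribution is $O\!\big(\mathcal P/d(u,v)^\alpha\big)$ times a convergent series in $\alpha$, hence at most a constant fraction of the signal $\mathcal P/d(u,v)^\alpha$; (iii) interference from nodes in \emph{other} clusters. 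This is the part that $C$ must kill: since cluster centers are $\ge 1-\eps$ apart and each cluster sits in a ball of radius $r$, only $O(1)$ clusters can have any node within, say, distance $2$ of $u$, and I would put all of those into $C$ (so $\rho=\rho(r,\eps)$). Every remaining cluster lies entirely at distance $\ge 1$ from $u$; bounding the number of their nodes in each dyadic shell around $u$ by the density $\Gamma$ and the $r$-clustering structure, and summing $\mathcal P/d^\alpha$ over shells with $\alpha>2$, gives a total that is $O(\mathcal P)=O(\mathcal N\beta)$, i.e.\ comparable to the ambient noise up to a constant; crucially it does \emph{not} blow up with $\Gamma$ because far clusters contribute geometrically decaying terms summable over the plane.

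Putting the three pieces together, the denominator is at most $c\cdot(\mathcal P/d(u,v)^\alpha)$ for a constant $c$ that depends only on $r,\eps$ and the SINR parameters but not on $\Gamma$; since $d(u,v)\le 1-\eps$, rescaling $\beta$ into this constant (exactly as in the proof of Lemma~\ref{l:close}, which I would invoke as a black box for the same-cluster bookkeeping after noting that $A'$ restricted to $\clid$ is an unclustered close pair in the induced set) shows $SINR(v,u,\cT)\ge\beta$ whenever the two listed conditions hold. The constants $\kappa$ and $\rho$ are then the ones produced by the two packing arguments, both $O(1)$ since $r=O(1)$.

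The main obstacle I anticipate is (iii): making rigorous that interference from the possibly $\Theta(\Gamma)$-dense far clusters stays bounded. The subtlety is that, although any single unit ball holds $O(\Gamma)$ nodes, those in a dyadic shell $B(u,2^{j+1})\setminus B(u,2^j)$ number $O(\Gamma\cdot 2^{2j})$ while each contributes $\mathcal P\cdot 2^{-j\alpha}$, so the per-shell total is $O(\Gamma\cdot 2^{j(2-\alpha)})$ and the geometric sum over $j\ge 0$ converges to $O(\Gamma)$ — which is \emph{not} a constant. The fix is that we are not bounding interference at an arbitrary point but at $u$, and what we actually need is that this interference is small \emph{relative to} the signal $\mathcal P/d(u,v)^\alpha$; here the close-pair hypothesis (b), $d(u,v)=\zeta d_{\Gamma,r}$ with $d_{\Gamma,r}$ essentially the minimum inter-node distance forced in a dense region, is exactly what ties $d(u,v)$ to the density and makes $\mathcal P/d(u,v)^\alpha = \Omega(\Gamma\cdot\mathcal P)$ (via $\chi(r,d_{\Gamma,r})\ge\Gamma/2$ and the definition of $\chi$ as a packing number), so the signal itself scales with $\Gamma$ and dominates. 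I would make this the technical heart of the argument, and otherwise lean on the already-proved Lemma~\ref{l:close} and Lemma~\ref{l:density:close} for the reductions.
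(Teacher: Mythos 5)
Your proposal is correct and follows essentially the same route as the paper: reduce the same-cluster bookkeeping to Lemma~\ref{l:close}, take $C$ to be the $O(1)$ clusters with a node near $\clid$ (constant by the $\ge 1-\eps$ separation of centers and $r=O(1)$), and beat the $O(\Gamma)$ far-cluster interference by the fact that the close-pair signal grows polynomially in $\Gamma$ because $d(u,v)\le d_{\Gamma,r}=O(\Gamma^{-1/2})$ and $\alpha>2$. One quantitative slip: you state the signal is $\Omega(\Gamma\cdot\mathcal{P})$, which against $O(\Gamma)$ interference only gives comparability, not $SINR\ge\beta$; your own premises $\chi(r,d_{\Gamma,r})\ge\Gamma/2$ and $\alpha>2$ actually give $\mathcal{P}/d(u,v)^{\alpha}=\Omega(\Gamma^{\alpha/2}\mathcal{P})$, and it is the gap $\Gamma^{\alpha/2}/\Gamma=\Gamma^{\alpha/2-1}\to\infty$ (exactly the $I_2$ estimate in the paper's proof of Lemma~\ref{l:close}) that closes the argument for large $\Gamma$, with small $\Gamma$ handled as in Lemma~\ref{l:cons:dens}.
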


\subsection{Proximity graphs}\labell{ss:prox:graph}
The idea behind sparsification algorithm extensively applied in our solutions is to
repeat several times the following procedure:
identify a graph \added{including as edges} all close pairs and ``sparsify this''
graph (switch off some nodes) appropriately. To this aim, we introduce the notion
of \emph{proximity graph}.
For a given (clustered) set of nodes $X$, a \emph{proximity graph} $H(X)$ of $X$ is any graph on this set 
such that: 
\begin{enumerate}[(i)]
\gora{4}
\item
vertices of each close pair $u,v$ are connected by an edge,
\gora{4}
\item
the degree of each node of the graph is bounded by
a fixed constant,
\gora{4}
\item
$\cluster(u)=\cluster(v)$ for each edge $(u,v)$ of $H(X)$.
\end{enumerate}
Using well-known strongly selective families (ssf) and Lemma~\ref{l:close}, one can build a schedule $S$
of length $O(\log N)$ such that each close pair of an unclustered network
exchange messages during an execution of $S$. 
However, this property is not sufficient for fast construction of a proximity graph,
since nodes may also
receive messages from distant neighbors during an execution of $S$ which migh
result in large degrees.
Moreover, each node $v$ knows
only received messages after an execution of $S$, but it is not aware
of the fact which of its messages were received by other nodes. 
Finally, a direct application of ssf for clustered network/set requires
additional increase of time complexity.
In order to build proximity graphs efficiently, we use the algorithm
ProximityGraphConstruction (Alg.~\ref{alg:prox:graph}). This algorithm relies on
properties of witnessed (cluster aware) strong selectors.

Our construction \added{(Alg.~\ref{alg:prox:graph})} builds on the following observations. Firstly, if $u$ can hear
$v$ 
in a round in which $w$ is transmitting as well,
then $u, w$ is for sure not a close pair (otherwise, $w$ generates interferences which prevents reception of the message from $v$ by $u$). Secondly, 
given a close pair $(u,v)$, $u$ can hear $v$ in a round in which $v$ transmits, provided:
\begin{itemize}
\gora{4}
\item unclustered case: none of the other $\kappa$ closest to $u$ nodes transmits (see Lemma~\ref{l:close}),
\gora{4}
\item clustered case: none of the other $\kappa$ closest to $u$ nodes from $\cluster(u)$ transmits,
and no node from $\rho$ ``conflicting'' clusters transmit (see Lemma~\ref{l:close:clustered}),
\end{itemize}
where $\kappa, \rho$ are the constants from Lemma~\ref{l:close} and Lemma~\ref{l:close:clustered}.

Given an $(N,\kappa)$-wss/$(N,\kappa,\rho)$-wcss $\mathbf{S}$ for constants $\kappa, \rho$ from Lemmas~\ref{l:close} and \ref{l:close:clustered}, one can build a proximity graph in $O(\log N)$ rounds using the following distributed algorithm at a node $v$
(see pseudocode in Alg.~\ref{alg:prox:graph} and an illustration on Fig.~\ref{fig:prox}):
\begin{itemize}
\gora{4}
\item
Exchange Phase: Execute $\mathbf{S}$.
\vspace*{-3pt}
\item
Filtering Phase:
 \begin{itemize}
 \item
Determine the set $C_v$ of all nodes $u$ such that $v$ has received a message from $u$ during $\mathbf{S}$ and $v$ has not received any other message in rounds in which $u$ is transmitting (according to $\mathbf{S}$). 
\textit{Remark:} ignore messages from other clusters in the clustered case.
 \item
If $|C_v|>\kappa$, then remove all elements from $C_v$.
 \end{itemize}
\vspace*{-3pt}
\item
Confirmation Phase
 \begin{itemize}
\vspace*{-3pt}
\item
Send information about the content of $C_v$ to other nodes in consecutive $|C_v|$ repetitions
of $\mathbf{S}$.
\vspace*{-3pt}
\item
Choose $E_v=\{w\in C_v\,|\, v\in C_w\}$ as the set of neighbours of $v$ 
in the final graph. (That is, $v$ exchange messages with its neighbours during an execution
$\mathbf{S}$.)
 \end{itemize}
\end{itemize}


\comment{
We show in Lemma~\ref{l:monograph} more formally that this algorithm actually
builds a proximity graph.
}
\newcommand{\algprox}{
\ifshort
\begin{algorithm}[h]
\else
\begin{algorithm}[H]
\fi
	\caption{ProximityGraphConstruction($v$)}
	\label{alg:prox:graph}
	\begin{algorithmic}[1]
		\State $E_v \leftarrow \emptyset$
		\State $\mathbf{S} \leftarrow $  $(N,\kappa)$-wss or $(N,\kappa,\rho)$-wcss,  common to all nodes.\Comment{$\kappa,\rho$ -- constants from L.~\ref{l:close} or \ref{l:close:clustered}}
		\State Execute $\mathbf{S}$ with message containing ID of $v$ and $\cluster(v)$.\Comment{{\bf Exchange Phase.} }
		\State Let $U_v$ be the set nodes which successfully delivered a message to $v$ during Exchange Phase.
		\State $C_v \leftarrow U_v$\Comment{{\bf Filtering Phase.} }
		\For{{\bf each} $u,w\in U_v$}
		\If{in some round $u$ and $w$ transmitted {\bf and} $v$ heard $u$}\Comment{lookup in the schedule $\mathbf{S}$}
		\State $C_v \leftarrow C_v \setminus \{w\}$
		\EndIf
		\EndFor
		\If{$|C_v|>\kappa$}
					\State $C_v\leftarrow \emptyset$
				\EndIf
		\For{{\bf each }$u \in C_v$}\Comment{{\bf Confirmation Phase} }
		\State Execute $\mathbf{S}$ with $\langle v, u\rangle$ as the message.
		\EndFor
		\For{{\bf each} message $\langle w, v\rangle$ received during the confirmation phase}
		\If{$w\in C_v$}
		\State $E_v \leftarrow E_v \cup \{w\}$
		\EndIf
		\EndFor
	\end{algorithmic}
\end{algorithm}
}
\iffull
\algprox
\fi

\begin{lemma} (Close Neighbors Lemma)
	\labell{l:closest}
ProximityGraphConstruction executed on a (clustered) set $X$ builds a proximity graph $H(X)$ of constant degree in $O(\log N)$ rounds. 
Moreover, ProximityGraphConstruction builds a schedule $\mathbf{S}$
of size $O(\log N)$ such that $u$ and $v$ exchange messages during $\mathbf{S}$ for each edge 
$(u,v)$ of $H(X)$.
\end{lemma}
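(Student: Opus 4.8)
The plan is to verify the three proximity-graph conditions for the output edge set $\{(u,v) : w\in C_w \text{ and } u\in C_w ... \}$ — more precisely the symmetric set $E = \{(u,v) : v\in C_u \text{ and } u\in C_v\}$ — and to track the round count. First I would establish the running time: the Exchange Phase runs $\mathbf{S}$ once ($O(\log N)$ rounds by Lemma~\ref{l:comb:struct} / Lemma~\ref{l:gen_el_selector}), and the Confirmation Phase runs $\mathbf{S}$ a total of $|C_v|\le\kappa=O(1)$ times, so altogether $O(\log N)$ rounds; the Filtering Phase is local computation, no rounds. The bound $|C_v|\le\kappa$ after filtering is immediate from the explicit cardinality check in the algorithm, and this is what makes the Confirmation Phase short and simultaneously gives condition (ii) (bounded degree), since $E_v\subseteq C_v$.

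Next I would prove condition (i): every close pair $(u,v)$ is an edge. This is where the witnessed selector does its work. Take a close pair $u,v$. By Lemma~\ref{l:close} (unclustered) or Lemma~\ref{l:close:clustered} (clustered) there is a set $A'\ni u,v$ with $|A'|\le\kappa$ (and a conflicting cluster set $C$, $|C|\le\rho$, in the clustered case) such that $v$ hears $u$ whenever $u$ transmits alone among $A'$ and the round is free of $C$. Apply the wss/wcss property to $X := A'\cap(\text{cluster of }u)$ viewed inside $[N]$ (pad $A'$ up to size exactly $\kappa$, resp. $C$ up to size $\rho$, if needed — the selector definition is for sets of exactly that size, so I would note that padding is harmless): there is a set $S_i\in\mathbf{S}$ with $S_i\cap X=\{u\}$, with $v\in S_i$ as the witness, and free of $C$. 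In round $i$ of $\mathbf{S}$, $u$ transmits, $v$ transmits, no other element of $A'$ (from $u$'s cluster) transmits, and the round is free of $C$; hence by the relevant Lemma, $v$ receives $u$'s message — and symmetrically (using the witness role with $u,v$ swapped) $u$ receives $v$'s. So $u$ delivered to $v$ during Exchange, i.e. $u\in U_v$. I must then argue $u$ survives filtering into $C_v$ and is not wiped by the $|C_v|>\kappa$ reset. Survival of the pairwise filter: if some $w\in U_v$ transmitted in a round where $u$ also transmitted and $v$ heard $u$, then that round witnesses that $w$ causes no interference at $v$ when close to it — I would invoke the first bulleted observation before the algorithm (if $u$ hears $v$ with $w$ also transmitting then $u,w$ is not a close pair) to conclude $w\notin A'$ in the relevant sense; more carefully, I'd show the surviving set $C_v$ is contained in a set of $\le\kappa$ "genuinely close" nodes of $v$, so the reset never triggers for the nodes we care about. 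That containment argument is the crux.

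The cleanest way to handle the reset is: show that after the pairwise filtering step, $C_v$ is already contained in the set $A'_v$ of the $\le\kappa$ closest same-cluster nodes guaranteed by Lemma~\ref{l:close}/\ref{l:close:clustered} (or rather, contains all of $v$'s close neighbors and nothing that would be "filtered out" by a close pair's witness), hence $|C_v|\le\kappa$ automatically and the reset is vacuous for networks where close pairs exist near $v$. Concretely: suppose $u\in C_v$ survives filtering. I claim $u$ must be among the $\kappa$ closest same-cluster nodes to $v$. If not, by Lemma~\ref{l:close}/\ref{l:close:clustered} applied to $(v, u)$ — or rather by the interference bound underlying it — there would be some closer node $w$ (inside the relevant $A'$) that also delivered to $v$ in a round where $u$ transmitted, which would have removed $u$ from $C_v$ during filtering. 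This is the main obstacle: pinning down exactly which "interference witness" statement justifies the pairwise removal, and making sure the $\kappa$ from Lemma~\ref{l:close} is large enough to simultaneously serve as the filter threshold. I would state this as a short sub-claim and lean on the polynomial ($\alpha>2$) attenuation already invoked in the close-pair discussion.

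Finally, condition (iii) — endpoints of an edge share a cluster — is immediate in the clustered case because wcss only selects within one cluster at a time and messages from other clusters are explicitly ignored in Filtering; in the unclustered case all clusters are $1$ so it is trivial. The "moreover" part (the schedule $\mathbf{S}$ itself realizes the exchange along every edge of $H(X)$) follows because $E_v\subseteq C_v\subseteq U_v$, and membership in $U_v$ means the message was delivered during the single execution of $\mathbf{S}$ in the Exchange Phase; so $\mathbf{S}$, of length $O(\log N)$, is exactly the required schedule. I would close by assembling these pieces into the statement of Lemma~\ref{l:closest}.
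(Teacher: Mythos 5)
Your overall structure matches the paper's proof: close pairs exchange messages during the Exchange Phase via Lemma~\ref{l:close}/\ref{l:close:clustered} plus the selector, they survive the pairwise filter because $\beta>1$ forces any round in which $v$ hears some $x$ while its close neighbor $u$ transmits to have $d(v,x)<d(v,u)$, which contradicts the close-pair condition, and the degree and time bounds are immediate from the explicit cardinality check. But two points need repair. First, your invocation of the witness in the Exchange Phase is self-contradictory: you set $X:=A'\cap\cluster(u)$ and ask for a set with $S_i\cap X=\{u\}$ \emph{and} $v\in S_i$; since $v\in A'$ and $\cluster(v)=\cluster(u)$, we have $v\in X$, so the two conditions are incompatible -- and even if $v$ could be forced to transmit in that round, it could not simultaneously receive. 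The correct reading is the opposite: the plain strong-selection part of the wss/wcss gives a round with $S_i\cap A'=\{u\}$, which in particular forces $v$ to be silent, and Lemma~\ref{l:close} then gives reception. The witness clause plays no role in the Exchange Phase.

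Second, the step you flag as ``the crux'' -- why the $|C_v|>\kappa$ reset never fires for a node $v$ belonging to a close pair -- is precisely where the witnessed property is used, and the paper resolves it the way you conjecture: let $U$ be the set of $\kappa$ nodes closest to $v$ (so $A'\subseteq U$ and $v$'s close neighbor $w\in U$); for any $u_{\text{far}}\notin U$, the witnessed strong selection property applied to $U$ yields a round in which $w$ is the unique transmitter from $U$ and $u_{\text{far}}$ also transmits; in that round $v$ hears $w$ by Lemma~\ref{l:close}, so the pairwise filter removes $u_{\text{far}}$ from $C_v$. Hence $C_v\subseteq U$ after filtering and the reset is vacuous for every node in a close pair (for other nodes the reset may empty $C_v$, which is harmless since only close pairs must end up as edges). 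Your version (``some closer node $w$ that also delivered to $v$ in a round where $u$ transmitted'') is the right mechanism, but you should make explicit that $w$ is specifically $v$'s close neighbor and that $u_{\text{far}}$'s transmission in that round is guaranteed by the witness clause of the selector, not by the interference bound underlying Lemma~\ref{l:close}.
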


\newcommand{\prooflclosest}{
\begin{proof}
First, we 
show that each close pair in $X$ is connected by an edge in $H$, i.e.,
$u\in E_w$ and $w\in E_u$ after an execution of ProximityGraphConstruction on $X$.
		
		By Lemma~\ref{l:close} we know that, if $u$ is the only transmitter among $\kappa$ nodes closest to $w$ (including $w$), then $w$ receives the message. By the 
		fact that the schedule	$\mathbf{S}$ is a $(N,\kappa,\rho)$-wcss, we know that there is a round satisfying this condition during an execution of $\mathbf{S}$. 
More precisely, we choose $\kappa$ and $\rho$ from Lemma~\ref{l:close:clustered}.
Thus, $u\in U_w$ and $w\in U_u$ after Exchange phase 
(see Fig.~\ref{fig:prox}(a)). 
		
Since $u,w$ is a close pair, it is
impossible that $u$ receives a message from $x\neq w$ in a round in which $w$ transmits 
a message (since $\beta>1$ and $d(u,x)>d(u,w)$). Thus, $w$ ($u$, resp.) belongs to $C_u$ ($C_w$, resp.) after Filtering Phase.
    
A node can also purge the candidate set $C_v$ during Filtering Phase, if the set of candidates contains more than $\kappa$ nodes. However, as $\mathbf{S}$ is $(N,\kappa,\rho)$-wcss, if $u,w\in X$ is a close pair then the sizes of $C_u$ and $C_w$ ar at most $\kappa$. Indeed, let $U$ denote the set of $\kappa$ nodes closest to $u$ (including $u$). Then, for any node $u_{\text{far}}\not\in U$ which is not among $\kappa$ closest nodes to $u$, there is a round in $\mathbf{S}$ in which $w$ transmits uniquely in $U$ and also $u_\text{far}\not\in U$ transmits by the witnessed strong selection property of $\mathbf{S}$. Thus, $u_\text{far}$ is eliminated from $C_u$ in Filtering Phase, as well as any other node not in $U$. So, the set of candidates $C_u$ for $u$ is a subset of $U$, thus its size is at most $\kappa$  (see Fig.~\ref{fig:prox}(b)).
    
    It is clear that the degree of resulting graph is at most $\kappa=O(1)$. The round complexity of the procedure is at most $(\kappa+1)|\mathbf{S}|=O(\log N)$.
\end{proof}
}
\iffull
\prooflclosest
\fi

\newcommand{\figelim}{
\begin{figure}[H]
	\centering
	\begin{subfigure}[t]{0.32\textwidth}
		\includegraphics[width=\textwidth]{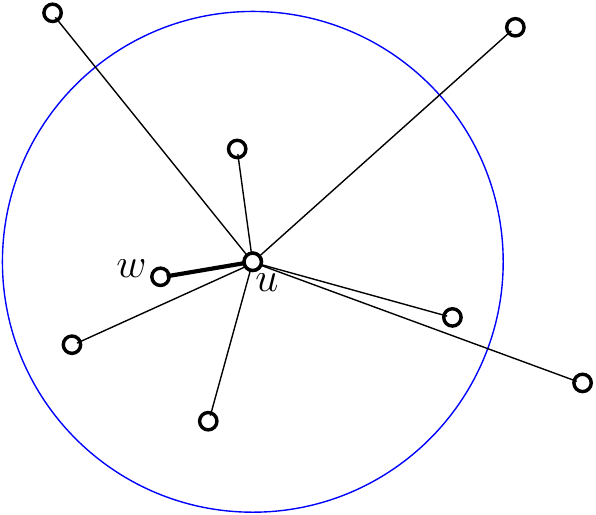}
		\caption{Exchange Phase: edges connect $u$ with elements of $U_u$. Some of them can be ``removed'' in Filtering Phase.}
	\end{subfigure}
	\hfill
	\begin{subfigure}[t]{0.32\textwidth}
		\includegraphics[width=\textwidth]{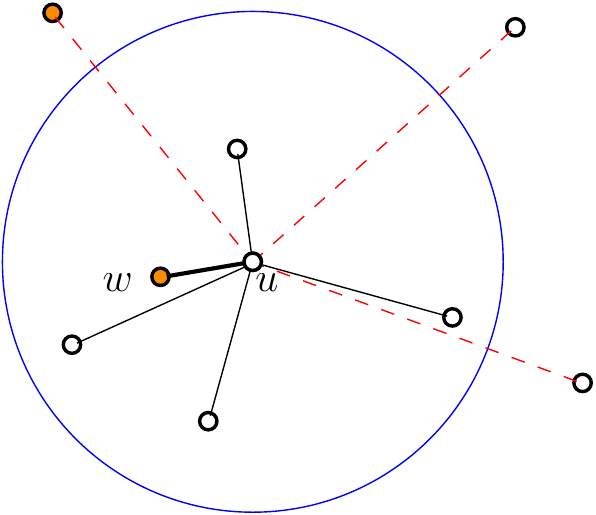}
		\caption{Filtering Phase: If $u$ hears some node $w$, it can ``remove'' all other nodes that transmitted in that round. The dashed edges denote
		nodes removed for sure.}
	\end{subfigure}
	\hfill
	\begin{subfigure}[t]{0.32\textwidth}
		\includegraphics[width=\textwidth]{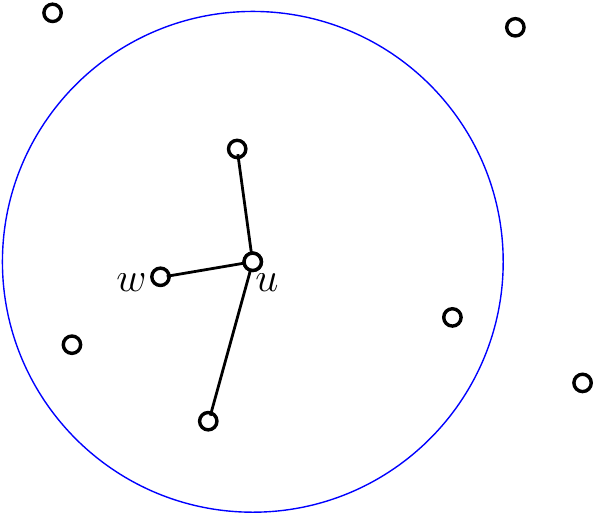}
		\caption{A possible result of the proximity graph construction. 
		Only close pairs are guaranteed to be connected by an edge.}
	\end{subfigure}
	\caption{An illustration of the proximity graph construction (Algorithm~\ref{alg:prox:graph}). The blue circle contains $\kappa$ closest nodes to $u$. 
	}
	\label{fig:prox}
\end{figure}
}
\ifshort
\figelim
\else
\figelim
\fi

	\section{Network sparsification and its applications}
	\labell{s:sparsification}
In this section we describe a sparsification algorithm which
decreases density of a network by removing some nodes from dense
areas and assigning them to their ``parents'' which are not removed.
Simultaneously, the algorithm builds a schedule in which removed
nodes exchange messages with their parents.
Using sparsification as a tool, we develop other tools for (non)sparsified networks.
Finally, a clustering algorithm is given which partitions a network
(of awaken nodes) into clusters in time independent of the
diameter of a network.

\subsection{Sparsification}

A $c$-sparsification algorithm 
for a constant $c\le 1$ is a distributed ad hoc
algorithm $S$ which,
executed on a set of 
nodes $X$ of density $\Delta\le N$,
determines 
a set $Y\subseteq X$ such that:
\begin{enumerate}[a)]
\gora{4}
\item
density of $Y$ is at most $c\Delta$,
\gora{4}
\item
each $v\in X$ knows
whether $v\in Y$ and each $v\in X\setminus Y$ has assigned $\text{parent}(v)\in Y$ such
that $v$ and parent$(v)$ exchange messages during an execution of $S$ on $X$.
\end{enumerate}
Moreover, if $X$ is a clustered set, cluster$(v)$=cluster$(\text{parent}(v))$ for each
\deleted{$v$ with assigned $\text{parent}(v)$.}
\added{$v\in X\setminus Y$.}

Alg.~\ref{alg:sparsification} contains a pseudocode of our sparsification algorithm.
The algorithm builds a proximity graph of a network several times. Each time
a proximity graph $H$ is determined, an independent set $Y$ of $H$ is computed such
that:
\begin{itemize}
\item
for each dense cluster $\clid$, at least one element of $\clid$ is in $Y$ (clustered case) or 
\item
for each dense unit-ball $\cB$, there is an element in $Y$ located close to the center of $\cB$ \added{(unclustered case).}
\end{itemize}
Then, some elements of $Y$ are linked with their neighbours
in $Y$ by parent/child relation and removed from the set of nodes 
attending consecutive executions of ProximityGraphConstruction. 
\begin{figure}[H]
	\centering
	\begin{subfigure}[t]{0.45\textwidth}
  \includegraphics[width=1.0\textwidth]{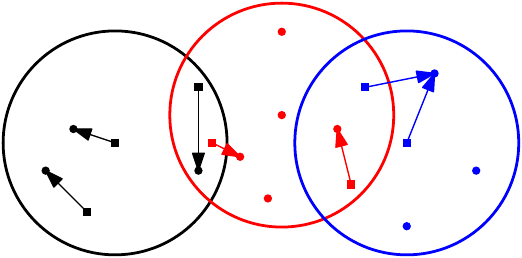}
		\caption{Clustered case. Child-parent relations is possible only between nodes of the same cluster.}
	\end{subfigure}
	\hspace*{20pt}
	\begin{subfigure}[t]{0.45\textwidth}
  \includegraphics[width=1.0\textwidth]{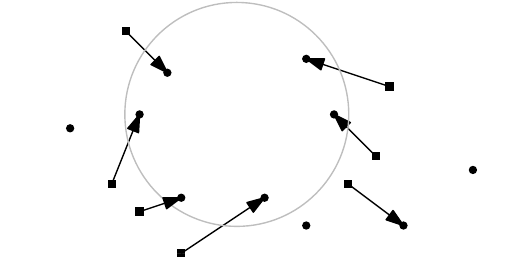}
		\caption{Unclustered case. An example demonstrates that the number of nodes in a dense unit ball (in gray) is not necessarily reduced, due to the fact that all nodes from the ball
		become parents of nodes outside of the ball.}
	\end{subfigure}
  \caption{An example of sparsification of a network. 
	Each edge corresponds to child-parent relation (the direction of an edge is from a child to its parent). The elements of the returned set of nodes are denoted by dots, the remaining nodes are denoted by squares.
	\label{f:spars}}
\end{figure}

In this way, the density of the set of nodes attending the algorithm
gradually decreases.
Details of implementation of the algorithm, including differences between clustered and unclustered variant, and analysis of its efficiency are presented below \added{(see also Fig.~\ref{f:spars}).}

\newcommand{\newchildren}{\text{NewChl}}
\newcommand{\globalchildren}{\text{GlobChl}}
\newcommand{\parents}{\text{Prnts}}
\newcommand{\actnodes}{\text{Active}}

\begin{algorithm}[h]
	\caption{Sparsification$(\Gamma,X)$}
	\label{alg:sparsification}
	\begin{algorithmic}[1]
	\State For each $v\in X$: parent$(v)=\bot$, children$(v)=\emptyset$
	\State $\actnodes\gets X$, $\parents\gets\emptyset$, $\globalchildren\gets\emptyset$
	\For{$i=1,2,\ldots,\Gamma$}
		\State $H\gets$ ProximityGraphConstruction$(\actnodes)$
		\State \label{l:ind:set} $Y\gets \text{IndependentSet}(H)$\label{line:is}\Comment{Implementions different for clustered/unclustered $X$}
		\State $\newchildren\gets\{v\in \actnodes\,|\, v\not\in Y\mbox{ and }N_v^H\cap Y\neq\emptyset\}$\Comment{$N_v^H$ denotes the set of neighbors of $v$ in $H$}
		\For{each $v\in \newchildren$}
			\State parent$(v)\gets\min(N_v^H\cap Y)$
			\State $\text{children}(\text{parent}(v))\gets \text{children}(\text{parent}(v))\cup \{v\}$
		\EndFor
		\State $\parents\gets \parents\cup \{v\in \actnodes\,|\, \text{children}(v)\neq \emptyset\}$
		\State $\globalchildren\gets \globalchildren\cup \newchildren$ 
		\State $\actnodes\gets \actnodes\setminus (\parents\cup \globalchildren)$
	\EndFor
	\State return $\actnodes\cup \parents$
	\end{algorithmic}
\end{algorithm}

For a clustered network, IndependentSet$(G)$ is 
chosen as the set of local minima in a proximity graph, that is,
$Y\gets \{v\,|\, \text{ID}(v)<\min(\{\text{ID}(u)\,|\, u\in N_v^H\})\}.$
%

\begin{lemma}\labell{l:sparsification:clustered}
Alg.~\ref{alg:sparsification} is a $\frac34$-sparsificaton algorithm for clustered networks, it works
in time $O(\Gamma \log N)$.
\end{lemma}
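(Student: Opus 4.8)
The plan is to verify, in turn, the two defining properties of a $\tfrac34$-sparsification algorithm for clustered sets — the structural conditions and the density bound — and then to count the rounds. Throughout, $Y$ denotes the per-iteration independent set $\text{IndependentSet}(H)$, which for clustered $H$ is the set of \emph{local minima} of $H$, while I reserve the phrase ``the output'' for the returned set $\actnodes\cup\parents$. The structural part is the easy one: every proximity graph has all its edges inside clusters (property (iii) of proximity graphs), so $\text{parent}(v)=\min(N_v^H\cap Y)$ always lies in the same cluster as $v$; and by the Close Neighbors Lemma (Lemma~\ref{l:closest}) the call to ProximityGraphConstruction returns, alongside $H$, a schedule $\mathbf{S}$ of length $O(\log N)$ during which every pair of $H$-neighbours exchanges messages, so a child and its parent communicate in the iteration in which the relation is established. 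I would also note that $\parents$ and $\globalchildren$ remain disjoint (a node acquires children only if it is a local minimum, hence in $Y$, whereas a member of $\newchildren$ is by definition outside $Y$), and that a node leaves $\actnodes$ permanently once it enters either set; consequently the output equals $X\setminus\globalchildren$, each node locally knows whether it is active, a parent, or a child, and every $v\in X$ not in the output has $\text{parent}(v)$ in the output.

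The substance of the proof is the density bound, and this is where I expect the real work. Fix a cluster $\clid$; it suffices to show $|\clid|-|\clid\cap\globalchildren|\le\tfrac34\Gamma$ at termination. The key claim is: \emph{in any iteration that begins with $|\clid\cap\actnodes|\ge\Gamma/2$, at least one element of $\clid\cap\actnodes$ enters $\globalchildren$}. To prove it, note that $\actnodes$ has density at most $\Gamma$ (being a subset of $X$), so $\clid$ is a dense cluster of $\actnodes$, and by Lemma~\ref{l:density:close}(2) it contains a close pair; by the proximity-graph property together with Lemma~\ref{l:closest} this close pair is an edge of the graph $H$ that ProximityGraphConstruction builds on $\actnodes$. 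Since $H$'s edges stay inside clusters, the connected component of that edge is a subgraph of $\clid$ with at least two vertices; its minimum-ID vertex has all its $H$-neighbours in the same component with larger ID, so it is a local minimum (in $Y$), and any of its neighbours $w'$ then satisfies $w'\notin Y$ and $N_{w'}^H\cap Y\neq\emptyset$, hence $w'\in\newchildren$ and is removed. Making fully precise that the notion of density for which Lemma~\ref{l:density:close} guarantees a close pair matches the one implicit in the close-pair edges produced on the (shrinking) set $\actnodes$ is the delicate point here.

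Given the claim, I would finish by amortization. Let $p_i,g_i$ be $|\clid\cap\parents|,|\clid\cap\globalchildren|$ after iteration $i$, so that disjointness gives $|\clid\cap\actnodes|=|\clid|-p_i-g_i$ after iteration $i$. As long as this value is $\ge\Gamma/2$ the claim forces $g_i$ to strictly increase, and since $g_i\le|\clid|\le\Gamma$ while the whole of $\clid$ cannot consist of children (each child has a same-cluster parent, which is not a child), the value must fall below $\Gamma/2$ at some iteration $i^\star\le\Gamma$ — and if it is already below $\Gamma/2$ initially we are trivially done. Because distinct parents in $\clid$ own disjoint child-sets we have $p_{i^\star}\le g_{i^\star}$, so $|\clid|-p_{i^\star}-g_{i^\star}<\Gamma/2$ gives $g_{i^\star}>\tfrac12|\clid|-\tfrac14\Gamma$; as $\globalchildren$ only grows, $|\clid|-|\clid\cap\globalchildren|\le|\clid|-g_{i^\star}<\tfrac12|\clid|+\tfrac14\Gamma\le\tfrac34\Gamma$, using $|\clid|\le\Gamma$. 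Finally, for the running time the loop makes $\Gamma$ iterations and each one costs $O(\log N)$ rounds: ProximityGraphConstruction costs $O(\log N)$ by Lemma~\ref{l:closest}, and the extra exchanges needed to announce membership in $Y$ and to notify parents can be piggy-backed on $O(1)$ reruns of the length-$O(\log N)$ schedule $\mathbf{S}$; hence $O(\Gamma\log N)$ rounds in total. The bottleneck of the whole argument is the per-iteration removal claim together with this amortized counting.
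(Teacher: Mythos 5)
Your proof is correct and follows essentially the same path as the paper's: the close-pair-to-edge-to-local-minimum argument establishing per-iteration removal of a child in each dense cluster, the bound $|\parents\cap\clid|\le|\globalchildren\cap\clid|$ from disjoint child-sets, and the final arithmetic over the three disjoint parts of $\clid$. Your amortization via the stopping iteration $i^\star$ is a slightly more explicit bookkeeping of what the paper argues directly at termination (the paper also observes that both the parent and the child are deactivated per dense iteration, though your one-child-per-iteration accounting suffices); the ``delicate point'' you flag about matching density notions resolves immediately since $\actnodes\subseteq X$ can only have smaller density, so a cluster with $\ge\Gamma/2$ active elements is dense with respect to $\actnodes$ as well.
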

\begin{proof}
First, let us discuss an implementation of the algorithm
in a distributed ad hoc network. A proximity graph $H$
is built by Alg.~\ref{alg:prox:graph} in $O(\log N)$ rounds.
Moreover, after an execution of Alg.~\ref{alg:prox:graph}, all nodes
know their transmission patterns in the schedule $S$
of length $O(\log N)$ such that each pair of neighbors
$u,v$ in $G$ exchange messages during $S$. Therefore,
each node $v$ knows its neighbors $N_v^H$ in $H$ as well.
%
In order to execute the remaining steps in the for-loop, nodes determine locally whether they belong to $\newchildren$.
Then elements of $\newchildren$ choose their parents and send messages
to them using the schedule $S$. Simultaneously, each node updates the local children$(v)$ variable if it receives messages from its new child(ren).
(Note that a directed graph connecting children with their parents is acyclic.

Lemma~\ref{l:closest} implies that the graph $H$ built by
ProximityGraphConstruction connects by an edge each
close pair. Thus, by Lemma~\ref{l:density:close}.2,
$H$ contains
at least one edge connecting elements of $\clid$, for each
dense cluster $\clid$.
Thus, $Y$ contains at least one element of $\clid$
and, since edges connect only nodes of the same cluster,
at least one element of $\clid$ determines its parent inside $\clid$.
As the result, at least two elements of $\clid$ are removed
from $\actnodes$ in each execution of ProximityGraphConstruction.

These properties guarantee that each repetition of the 
main for-loop decreases the number of elements of $\actnodes$ in each
dense cluster. 
Therefore, 
each cluster
contains at most $\Gamma/2$ elements of $\actnodes$
after $O(\Gamma)$ repetitions of the loop.

Let $\actnodes'$, $\parents'$, $\globalchildren'$ be the intersections of $\actnodes$, $\parents$ and $\globalchildren$
with a cluster $\clid$.
Above, we have shown that $|\actnodes'|\le \Gamma/2$, while the algorithm
returns $\actnodes'\cup \parents'$. Thus, in order to prove the lemma,
it is sufficient to show that $|\parents'|\le \Gamma/4$.
For each cluster $\clid$, each element of $\parents'$ from this cluster has assigned nonempty subset of the cluster as children of $v$ and the sets of children of nodes are disjoint. Thus, the number of elements of $\parents'$ in the cluster is at most as large
as the number of elements of $\globalchildren'$ in that cluster: $|\parents'|\le |\globalchildren'|$.
Summing up, $|\actnodes'\cup \parents'\cup \globalchildren'|\le \Gamma$, $|\actnodes'|\le \Gamma/2$, $|\parents'|\le |\globalchildren'|$.
As $\actnodes'$, $\parents'$, and $\globalchildren'$ are disjoint, the finally returned set $\actnodes'\cup \parents'$ contains at most
$\frac34\Gamma$ elements from the considered cluster $\clid$.
\end{proof}

For unclustered networks IndependentSet$(G)$ is 
determined by a simulation of the distributed Maximal Independent
Set (MIS) algorithm for the LOCAL model \cite{SchneiderW08} which, for graphs
of constant degree, works in time $O(\log^*n)$ \added{and sends $O(\log n)$-size messages}. 
As ProximityGraphConstruction builds an $O(\log N)$ time schedule in which
each pair of neighbors $u,v$ (of a built proximity graph) exchange messages, we can simulate each step
of the
algorithm from \cite{SchneiderW08} in $O(\log N)$ rounds. 
In contrast to the clustered networks,
we are unable to guarantee that a \textbf{single} execution
of Sparsification reduces the density in an unclustered network. 
This is due to the fact that a node from dense unit ball might
become a parent of a node outside of this ball \added{(see Fig.~\ref{f:spars}(b));} as a result
the number of elements in the considered unit ball is not reduced
during an execution of Sparsification.
Therefore, we have to deal with unclustered case more carefully.
Let $l\gets \chi(5,1-\eps)$ and let SparsificationU$(\Gamma,X)$ be an  
algorithm which executes Sparsification$(\Gamma,X_i)$ for $i=0,\ldots, l-1$,
where $X_0=X$, $X_i$ is the set returned by Sparsification$(\Gamma,X_{i-1})$ and $X_l$ is the set returned as the final result.
%
\begin{algorithm}[h]
	\caption{SparsificationU$(\Gamma,X)$}
	\label{alg:sparsification:u}
	\begin{algorithmic}[1]
	\State $X_0\gets X$
	\State $l\gets \chi(5,1-\eps)$
	\For{$i=0,\ldots, l-1$}
		\State $X_{i+1}\gets$ Sparsification$(\Gamma,X_i)$
		\State $S_i$: the schedule of Sparsification$(\Gamma,X_i)$
	\EndFor
	\State return $(X_1,\ldots,X_l)$, $(S_0,\ldots,S_{l-1})$
	\end{algorithmic}
\end{algorithm}

\newcommand{\lsparsuncluster}{
An execution of SparsificationU$(\Gamma,U)$ on an unclustered set $U$ of density $\Gamma$ returns a sequence of set $X_0\supseteq X_1\supseteq\cdots\supseteq X_l$ and schedules $S_0,\ldots,S_{l-1}$ such that 
the density of $X_l$ is at most $\frac34\Gamma$, each node $x\in X_i\setminus X_{i+1}$
exchange messages with parent$(x)\in X_{i+1}$ during $S_i$.
Morover, SparsificationU$(\Gamma,U)$ works in time $O(\Gamma \log N\log^* N)$.
}
\begin{lemma}\labell{l:sparsification:unclustered}
\lsparsuncluster
\end{lemma}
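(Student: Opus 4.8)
The plan is to analyze what happens when we iterate the clustered sparsification routine $l = \chi(5,1-\eps)$ times on an unclustered set, viewing the unclustered set as the trivial one-cluster clustered set, but tracking density at the level of unit balls rather than clusters. First I would recall from Lemma~\ref{l:sparsification:clustered} that a single call to Sparsification$(\Gamma,X_i)$ runs in $O(\Gamma\log N)$ rounds \emph{in the clustered sense}; however, when $X$ is unclustered we must substitute the unclustered IndependentSet routine, which simulates the MIS algorithm of \cite{SchneiderW08} on the constant-degree proximity graph in $O(\log^* N)$ simulated steps, each step costing $O(\log N)$ real rounds via the schedule produced by ProximityGraphConstruction (Lemma~\ref{l:closest}). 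Hence one call of Sparsification in the unclustered setting costs $O(\Gamma\log N\log^* N)$ rounds, and $l = O(1)$ iterations give the claimed $O(\Gamma\log N\log^* N)$ total running time. The chain $X_0\supseteq X_1\supseteq\cdots\supseteq X_l$ and the parent pointers with the exchange property for each $S_i$ are immediate from the postcondition of Sparsification applied to $X_i$, since each $X_{i+1}\subseteq X_i$ and every $x\in X_i\setminus X_{i+1}$ gets $\parent(x)\in X_{i+1}$ exchanging messages during $S_i$.

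The substantive part is the density bound: I would show that after $l$ rounds the density of $X_l$ drops to at most $\tfrac34\Gamma$. Fix a unit ball $\cB = B(z,1)$ and suppose it is dense in $X_i$, i.e.\ contains at least $\Gamma/2$ nodes of $X_i$. By Lemma~\ref{l:density:close}.1 (applied with the ball $B(z,1)$ being dense, so $B(z,5)$ contains a close pair), there is a close pair $u,w$ inside $B(z,5)$ with both in $X_i$. Running ProximityGraphConstruction on $X_i$ connects $u$ and $w$ by an edge (Lemma~\ref{l:closest}), so the MIS $Y$ computed on the constant-degree proximity graph is a \emph{maximal} independent set: at least one of $u,w$ is in $Y$ and at least one has a neighbour in $Y$, hence at least one node of $B(z,5)\cap X_i$ is removed (it becomes a child) in this iteration. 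The obstacle --- and the reason $l$ iterations rather than one are needed --- is that the removed node may lie in $B(z,5)\setminus B(z,1)$, so a single iteration need not thin the ball $\cB$ itself; this is exactly the phenomenon illustrated in Fig.~\ref{f:spars}(b).

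To get around this I would use a counting/pigeonhole argument over the $\chi(5,1-\eps)$-bounded geometry. Cover $B(z,5)$ by the nodes that can serve as ``witnesses'' to density: more precisely, observe that as long as $B(z,1)$ remains dense, every iteration removes at least one node from some fixed finite region around $z$, and the removed node together with its parent both leave $\actnodes$. Since the proximity-graph degree is constant and centres of would-be clusters in $B(z,5)$ are $(1-\eps)$-separated, the number of ``independent progress-witnessing'' events that can occur while $B(z,1)$ stays dense without actually reducing $|\cB\cap X_\cdot|$ is bounded by a function of $\chi(5,1-\eps)$; after $l = \chi(5,1-\eps)$ full executions of Sparsification (each of which already thins every dense \emph{cluster}, and here there is one cluster so each execution performs a complete clustered-sparsification sweep), either $\cB$ has already been reduced below $\Gamma/2$ at some intermediate stage, or one of the $l$ sweeps is forced to remove a node of $B(z,1)$ proper. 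Formally I would argue: if $B(z,1)$ were dense in \emph{all} of $X_0,\dots,X_l$, then in each of the $l$ stages a distinct node of $B(z,5)$ (distinct because nodes, once removed, stay removed) is eliminated; but $|B(z,5)\cap X_0| \le \chi(5, d_{\Gamma,1})$-type bound is not quite enough — instead I bound the number of stages in which the eliminated node avoids $B(z,1)$ by noting each such node lies in the annulus and its parent (also eliminated) is within the constant proximity-graph radius, and a volume/packing estimate in $B(z,6)$ caps these at $l-1$. Then the $l$-th stage must remove a node of $B(z,1)$, and combined with the clustered bound $|\actnodes'| \le \Gamma/2$ from the proof of Lemma~\ref{l:sparsification:clustered} applied to the single cluster, we conclude the density of $X_l$ is at most $\tfrac34\Gamma$, matching the guarantee of the clustered case. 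I expect pinning down this packing constant and making the ``distinct eliminated nodes'' argument airtight to be the main technical obstacle.
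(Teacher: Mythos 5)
There is a genuine gap in your density argument, and it is a factor-of-$\Gamma$ miscount. You track progress at the granularity of the \emph{outer} loop of SparsificationU: ``in each of the $l$ stages a distinct node of $B(z,5)$ is eliminated.'' But $l=\chi(5,1-\eps)=O(1)$, so eliminating $l$ nodes cannot bring a ball containing $\Theta(\Gamma)$ nodes down to $\tfrac34\Gamma$. No amount of sharpening the ``how many removals avoid $B(z,1)$'' packing constant will fix this, because the total budget you are spending is $O(1)$ removals while the target is $\Omega(\Gamma)$. The paper's proof introduces the auxiliary quantity it calls \emph{saturation} of $\cB$ (the count of active nodes in $B(z,5)$) precisely to count at the granularity of the \emph{inner} for-loop of a single Sparsification call: as long as $\cB$ stays dense, each of the $\Gamma$ inner iterations removes at least one node from $B(z,5)\cap\actnodes$, so a single Sparsification call reduces the saturation by at least $\Gamma$ (or $\cB$ drops below $\Gamma/2$ active nodes mid-call). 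Since the saturation starts at most $l\Gamma$ (by covering $B(z,5)$ with $O(1)$ unit balls of density $\le\Gamma$), after $l$ calls to Sparsification the saturation would be exhausted, which forces $\cB$ to become non-dense along the way. Your argument has the right geometric ingredients (Lemma~\ref{l:density:close}.1, edges of the proximity graph, MIS removing a node) but never performs this per-inner-iteration accounting, which is the heart of the proof.

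A secondary problem: you invoke ``the clustered bound $|\actnodes'|\le\Gamma/2$ from Lemma~\ref{l:sparsification:clustered} applied to the single cluster.'' That bound rests on Lemma~\ref{l:density:close}.2, which presupposes that a cluster is confined to a ball of radius $r=O(1)$. In the unclustered setting the trivial single cluster is the entire network, with no radius bound, so the clustered postcondition simply does not apply to it. This is exactly the reason the paper cannot reuse Lemma~\ref{l:sparsification:clustered} verbatim and instead switches to the ball-local saturation bookkeeping. The time-complexity and chain/parent-pointer parts of your proposal are fine and match the paper's reasoning; it is the density bound that needs the saturation argument.
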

\newcommand{\plsparsuncluster}{
\begin{proof}
\added{The analysis of the sparsification 
algorithm for clustered networks relies on the 
fact that there is a close pair in each dense cluster
$\clid$ and therefore the number of active elements in
$\clid$ is reduced after each iteration of the for-loop
in Alg.~\ref{alg:sparsification}.}
%
%
In the unclustered network, it might be the case that 
\deleted{nodes from a dense unit-ball $\cB$ become parents of other nodes
which do not belong to $\cB$. }
\added{there is no close pair in a dense unit-ball.}
That is why the reasoning from
Lemma~\ref{l:sparsification:clustered} does not apply here.

We define an auxiliary notion of \emph{saturation}.
For a unit ball $\cB=B(x,1)$, the saturation of $\cB$ with respect
to the set of nodes $X$ is the number of elements of $X$ in
$B(x,5)$. As long as $\cB$ contains at least $\Gamma/2$ nodes
in an execution of ProximityGraphConstruction, there is a close pair $u,v$ in $B(x,5)$ (Lemma~\ref{l:density:close})
and therefore $u,v$ are connected by an edge in $H$. Thus, $u$
or $v$ is not in the computed MIS (line~\ref{l:ind:set}); wlog assume that $u$ is not in MIS.
Thus, $u$ is dominated by a node from MIS and therefore it becomes
a member of $\globalchildren$ and it is switched off. This in turn decreases saturation
of $B(x,5)$. Thus, an execution of Sparsification results either in
decreasing the number of nodes in $\cB$ to $\le \Gamma/2$ or in
reducing saturation of $\cB$ by at least $\Gamma$. As \added{$B(x,5)$}
might be covered by $l=\chi(5,1)$ unit balls, there are at
most $l\Gamma$ nodes in $B(x,5)$ and saturation can be reduced
at most $l\Gamma$ times.
Hence, $l$ repetitions of Sparsification eventually leads to the reduction
of the number of 
\deleted{elements of $\cB$ to $\le \Gamma/2$.}
\added{active elements located in $\cB$ to $\le \Gamma/2$.}
\end{proof}
} 
\ifshort
\plsparsuncluster
\else
\plsparsuncluster
\fi

\subsection{Full Sparsification}
A full sparsification algorithm is a distributed ad hoc
algorithm/schedule $S=S_1 S_2\ldots S_k$ which,
executed on an $r$-clustered set of nodes $A$ of density $\Gamma$
determines 
sets $A_0,A_1,\ldots,A_k$ such
that \added{$k=\log_{4/3}\Gamma$,} $A_{i+1}\subseteq A_{i}$ (each node $v\in A$ knows
whether $v\in A_i$), $A_0=A$ and, for each $i\in[k]$:
\begin{enumerate}[a)]
\gora{4}
\item
the density of $A_i$ is at most 
\added{$\max\{\Gamma(3/4)^i, \chi(r,1-\eps)\}$},
\comment{\gora{4}
\item
for each cluster $\clid$ present in  $A$, $A_k$ contains at least one element
of $\clid$,}
\gora{4}
\item
each $v\in A_i\setminus A_{i+1}$ has assigned $\text{parent}(v)\in A_{i+1}$ such
that $v$ and parent$(v)$ exchange messages during $S_i$ and
	cluster$(v)$=cluster$(\text{parent}(v))$. 
\end{enumerate}

\added{Alg.~\ref{alg:complete:sparsification} contains a pseudocode of our} full sparsification algorithm \added{(see also Fig.~\ref{fig:full:sparsification}).} In Lemma~\ref{l:csparsification:cluster}, the properties of this algorithm (following directly from Lemma~\ref{l:sparsification:clustered}) are summarized.
\begin{algorithm}[H]
	\caption{FullSparsification$(\Gamma,A)$}
	\label{alg:complete:sparsification}
	\begin{algorithmic}[1]
	\State $A_0\gets A$, 
	\State $X\gets A$, $\Lambda\gets\Gamma$, $k\gets\log_{\frac34}\Gamma$
	\For{$i=1,2,\ldots,k$}
		\State $Y\gets \text{Sparsification}(\Lambda,X)$
		\State $S_i\gets$ the schedule of $\text{Sparsification}(\Lambda,X)$
		\State $A_i\gets X\setminus Y$
		\State $X\gets Y$
		\State $\Lambda\gets \frac34 \Lambda$
	\EndFor
	\State return $(A_0,\ldots,A_k), (S_1,\ldots,S_k)$
	\end{algorithmic}
\end{algorithm}

\comment{
\begin{figure}[h]
	\centering
	\begin{subfigure}[t]{0.32\textwidth}
		\includegraphics[width=\textwidth]{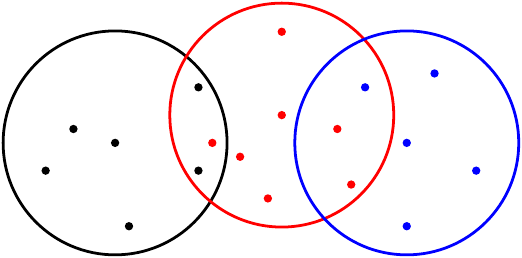}
		\caption{The clustered set $A$, where colors correspond to clusters.}
	\end{subfigure}
	\hfill
	\begin{subfigure}[t]{0.32\textwidth}
		\includegraphics[width=\textwidth]{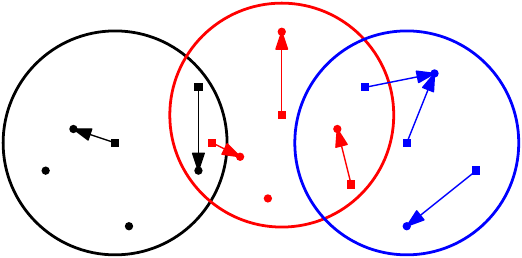}
		\caption{The result of the first execution of Sparsification. 
		Nodes from $A_1$ are denoted by squares.}
	\end{subfigure}
	\hfill
	\begin{subfigure}[t]{0.32\textwidth}
		\includegraphics[width=\textwidth]{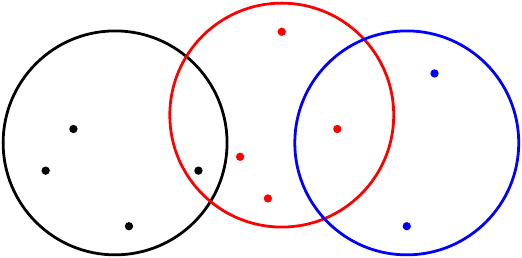}
		\caption{The result of the second execution of Sparsification. Nodes from $A_2$ are denoted by empty squares.}
	\end{subfigure}
	\caption{An illustration of full sparsification. Each edge correspond to child-parent relation (the direction of an edge is from a child to its parent).
	}
	\label{fig:full:sparsification}
\end{figure}
}
\begin{figure}[h]
	\centering
	\begin{subfigure}[t]{0.45\textwidth}
		\includegraphics[width=1.0\textwidth]{sparse1.pdf}
		\caption{The clustered set $A$, where colors correspond to clusters.}
	\end{subfigure}
	\hfill
	\begin{subfigure}[t]{0.45\textwidth}
		\includegraphics[width=1.0\textwidth]{sparse2.pdf}
		\caption{The result of the first execution of Sparsification. 
		Nodes from $A_1$ are denoted by squares.}
	\end{subfigure}
	\\
	\begin{subfigure}[t]{0.45\textwidth}
		\includegraphics[width=1.0\textwidth]{sparse3.pdf}
		\caption{The set on which Sparsification is executed for the second time (dots from Figure (b)).}
		\end{subfigure}
			\hfill
	\begin{subfigure}[t]{0.45\textwidth}
		\includegraphics[width=1.0\textwidth]{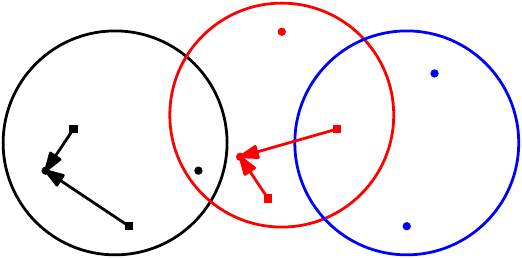}
		\caption{The result of the second execution of Sparsification. Nodes from $A_2$ are denoted by  squares.}
	\end{subfigure}
	\caption{An illustration of full sparsification. Each edge correspond to child-parent relation (the direction of an edge is from a child to its parent).
	}
	\label{fig:full:sparsification}
\end{figure}
\begin{lemma}\labell{l:csparsification:cluster}
Algorithm~\ref{alg:complete:sparsification} is a full sparsificaton algorithm for clustered networks which works
in time $O(\Gamma \log N)$.
\end{lemma}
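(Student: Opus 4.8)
The plan is to read Algorithm~\ref{alg:complete:sparsification} as a plain iteration of the single-step procedure of Lemma~\ref{l:sparsification:clustered}, so that all three defining properties of a full sparsification algorithm reduce to chaining that lemma's guarantees. Write $Y_0=A$ and, for $i=1,\dots,k$ with $k=\log_{4/3}\Gamma$, let $Y_i$ be the clustered set produced by the $i$-th iteration, i.e.\ $Y_i=\text{Sparsification}(\Lambda_i,Y_{i-1})$ with $\Lambda_i=(3/4)^{i-1}\Gamma$ the value of the variable $\Lambda$ during that iteration, and let $S_i$ be the schedule that call returns. By Lemma~\ref{l:sparsification:clustered}, $Y_i\subseteq Y_{i-1}$, every node of $Y_{i-1}\setminus Y_i$ has a $\text{parent}$ in $Y_i$ in the same cluster, and each such node exchanges messages with its parent during $S_i$. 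Hence the decreasing chain $A=Y_0\supseteq Y_1\supseteq\cdots\supseteq Y_k$ (these are the sets $A_0,\dots,A_k$) together with the pointers and the schedules $S_i$ is exactly of the required form, and the only thing left to prove is the density bound in item~(a).

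I would prove the density bound by induction on $i$, claiming that $Y_i$ has density at most $\max\{(3/4)^i\Gamma,\chi(r,1-\eps)\}$. The base case $Y_0=A$ has density $\le\Gamma$. For the step, the $i$-th call is $\text{Sparsification}(\Lambda_i,Y_{i-1})$ with, by the inductive hypothesis, $\text{density}(Y_{i-1})\le\max\{\Lambda_i,\chi(r,1-\eps)\}$. If $\Lambda_i\le\chi(r,1-\eps)$, then already $\text{density}(Y_{i-1})\le\chi(r,1-\eps)$, and since $\text{Sparsification}$ only deletes nodes, $\text{density}(Y_i)\le\chi(r,1-\eps)$, which is at most the claimed bound. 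Otherwise $\Lambda_i>\chi(r,1-\eps)$, so every cluster with at least $\Lambda_i/2$ nodes, all inside a ball of radius $r$, contains a pair at distance $<1-\eps$ by the definition of $\chi$, hence a close pair by Lemma~\ref{l:density:close}.2; consequently $\text{density}(Y_{i-1})\le\Lambda_i$ is a genuine density upper bound for this call, and Lemma~\ref{l:sparsification:clustered} yields $\text{density}(Y_i)\le\tfrac34\Lambda_i=(3/4)^i\Gamma$. In both cases the hypothesis is preserved, and for $i=k$ it gives $\text{density}(Y_k)\le\max\{(3/4)^k\Gamma,\chi(r,1-\eps)\}=\chi(r,1-\eps)$ since $(3/4)^k\Gamma\le 1$.

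For the running time, Lemma~\ref{l:sparsification:clustered} says the $i$-th call costs $O(\Lambda_i\log N)=O((3/4)^{i-1}\Gamma\log N)$ rounds, so the whole algorithm runs in $\sum_{i=1}^{k}O((3/4)^{i-1}\Gamma\log N)=O(\Gamma\log N)\cdot\sum_{i\ge 0}(3/4)^i=O(\Gamma\log N)$. This geometric telescoping is the crux of the time bound: feeding each call the shrinking hint $\Lambda_i$ rather than the original $\Gamma$ is precisely what turns the naive $O(k\Gamma\log N)=O(\Gamma\log\Gamma\log N)$ into $O(\Gamma\log N)$. The point I expect to be the main obstacle is keeping the induction honest at the boundary $\Lambda_i\approx\chi(r,1-\eps)$: one must make sure that whenever a call is supposed to make progress, its density hint $\Lambda_i$ really dominates the current density (so that Lemma~\ref{l:sparsification:clustered} is applicable), which is exactly what the inductive invariant maintains, and that once the density has dropped to the floor $\chi(r,1-\eps)$ no further progress is needed.
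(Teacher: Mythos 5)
Your proof is correct and supplies precisely the argument the paper leaves implicit: the paper states the lemma with the remark that its properties ``follow directly from Lemma~\ref{l:sparsification:clustered}'' but gives no written proof. The chaining of the single-step guarantee to get properties (b), the induction on $i$ for the density bound with the $\chi(r,1-\eps)$ floor, and above all the geometric series $\sum_i (3/4)^{i-1}\Gamma\log N = O(\Gamma\log N)$ (which is the whole point of passing the shrinking hint $\Lambda$ rather than the original $\Gamma$ to each call) are exactly what is needed. Two small remarks. First, the digression inside your Case~2 is unnecessary and not quite right as written: you claim a pair at distance $<1-\eps$ ``hence a close pair by Lemma~\ref{l:density:close}.2,'' but a short pair is not automatically a close pair (conditions (c) and (d) must hold), and Lemma~\ref{l:density:close}.2 is a statement about dense clusters, not a consequence of there being a short pair. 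More importantly, none of that is needed: all Case~2 requires is that the inductive hypothesis yields $\text{density}(Y_{i-1})\le\Lambda_i$, which is precisely the precondition under which Lemma~\ref{l:sparsification:clustered} applies with parameter $\Lambda_i$; the close-pair machinery lives inside the proof of that lemma and should not be re-derived here. Second, you silently (and correctly) repair an inconsistency in the pseudocode of Algorithm~\ref{alg:complete:sparsification}: the line $A_i\gets X\setminus Y$ would make the $A_i$'s disjoint layers rather than the nested chain $A_0\supseteq A_1\supseteq\cdots\supseteq A_k$ demanded by the definition of a full sparsification and assumed by its later uses (e.g.\ running SNS on $X_k$ in Algorithm~\ref{alg:radius:reduction}, and ``the elements of $A_k$ are the roots of the trees'' in the proof of Lemma~\ref{l:labeling}); the intended reading is $A_i\gets Y$, which is what your $Y_i$'s are.
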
 

\subsection{Imperfect labelings of clusters}
Using $r$-clustering of a set $X$ with density $\Gamma$, it is possible
to build efficiently an imperfect $c$-labeling of $X$, where 
$c$ is a constant which 
depends merely on $r$ and SINR parameters.
\begin{lemma}\labell{l:labeling}
Assume that an $r$-clustering of a set $X$ of density $\Gamma$ is given.
Then, it is possible to build $c$-imperfect labeling of $X$ in
$O(\Gamma\log N)$ rounds, where $c$ depends merely on $r$ and SINR
parameters.
\end{lemma}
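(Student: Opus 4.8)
The plan is to read off the labeling from the output of FullSparsification (Alg.~\ref{alg:complete:sparsification}). Running it on $X$ costs $O(\Gamma\log N)$ rounds by Lemma~\ref{l:csparsification:cluster} and produces a nested sequence $A=A_0\supseteq A_1\supseteq\cdots\supseteq A_k$ with $k=\log_{4/3}\Gamma$, a pointer $\parent(v)\in A_{i+1}$ for every $v\in A_i\setminus A_{i+1}$ with $\cluster(v)=\cluster(\parent(v))$, density of $A_k$ at most $c_0:=\chi(r,1-\eps)=O(1)$, and, for each stratum $A_i\setminus A_{i+1}$, a schedule $S_i$ of length $O(\log N)$ during which every node of that stratum exchanges messages with its parent. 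These pointers form a forest $F$ on $X$: following pointers strictly increases the $A$-index, so $F$ has depth at most $k$, its roots are exactly the nodes of $A_k$, and in particular every node of $X$ reaches some root. Every edge of $F$ joins two nodes of one cluster, so $F$ splits into per-cluster sub-forests, each having at most $c_0$ roots (the $\le c_0$ members of $A_k$ in that cluster) and at most $\Gamma$ vertices in total.

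Given $F$, I would apply the standard interval (Euler-tour) numbering of a forest, but carried out in parallel, stratum by stratum, using the schedules $S_i$. First a bottom-up pass: for $i=0,1,\ldots,k-1$ replay $S_i$ so that each $v\in A_i\setminus A_{i+1}$ sends to $\parent(v)$ the size $n_v$ of its $F$-subtree; this value is already available at $v$, since the children of $v$ lie in strictly lower strata (their pointers would otherwise not reach a strictly higher index) and have reported in earlier passes, and each recipient simply accumulates the numbers it receives into a counter initialised to $1$. Then a top-down pass: the root $\rho$ of each tree takes interval $[1,n_\rho]$ and label $1$; a node $v$ that knows its interval $[a,b]$ keeps $\mathrm{label}(v)=a$ and cuts $[a+1,b]$ into consecutive blocks of sizes $n_c$ over its children $c$ (ordered by ID), which is an exact partition because $1+\sum_c n_c=n_v$; for $i=k-1,\ldots,0$ replay $S_i$ so that each $v\in A_i\setminus A_{i+1}$ receives its block from $\parent(v)$, whose own interval is already fixed because $\parent(v)$ lies in a strictly higher stratum. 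A parent's children inside one stratum form a subset of a single proximity-graph neighbourhood of it and hence number $O(1)$ by Lemma~\ref{l:closest}, so the parent can announce all of their blocks in one $O(\log N)$-bit message (alternatively one spends $O(\log N)$ extra rounds per stratum). Each pass takes $\sum_{i<k}O(\log N)=O(\log^2 N)$ rounds, and every transmitted quantity ($n_v\le\Gamma$, interval endpoints $\le\Gamma$, $O(1)$ pairs per message) fits in $O(\log N)$ bits.

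For correctness, every label is the left endpoint of an interval nested inside $[1,n_\rho]\subseteq[1,|\cluster(v)|]\subseteq[1,\Gamma]$, so $\mathrm{label}(x)\le\Gamma$. A routine induction on the partition shows that inside a single tree of $F$ the intervals of distinct vertices are pairwise disjoint, hence their labels are pairwise distinct; since each cluster's sub-forest consists of at most $c_0$ trees, each label is used at most $c_0$ times within any cluster. Thus the labeling is $c$-imperfect with $c=c_0=\chi(r,1-\eps)$, a constant depending only on $r$ and $\eps$, and the total cost is $O(\Gamma\log N)+O(\log^2 N)=O(\Gamma\log N)$.

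The point requiring care — and essentially the only nontrivial step — is the scheduling of the two passes: one must ensure that when a stratum $A_i\setminus A_{i+1}$ is activated, each of its nodes has already collected all its children's subtree sizes (bottom-up) or already received its own interval (top-down). This is exactly what the monotone structure $A_0\supseteq\cdots\supseteq A_k$ gives us, because parent pointers always lead to a strictly later stratum; the secondary concern is the $O(\log N)$ message-size bound, which is why subtree sizes and intervals are exchanged stratum by stratum rather than globally, and why the $O(1)$-degree property of proximity graphs (Lemma~\ref{l:closest}) is used to bound the number of same-stratum children of a node.
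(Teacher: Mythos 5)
Your proposal is correct and follows essentially the same route as the paper: run FullSparsification to decompose each cluster into $O(1)$ parent-pointer trees rooted at $A_k$, then use the sparsification schedules forwards and backwards to compute subtree sizes bottom-up and assign disjoint label intervals top-down. You merely spell out the stratum-by-stratum scheduling and message-size bookkeeping that the paper's proof leaves implicit.
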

\begin{proof}
Let $S=S_1,\ldots,S_k$ and $A_1,\ldots,A_k$ be the schedules and the sets
obtained as the result of an execution of FullSparsification$(\Gamma,X)$.
FullSparsification$(\Gamma,X)$ splits each cluster in $O(1)$ trees, defined by the child-parent relation build during an execution of FullSparsification. 
(Indeed, the elements of $A_k$ are the roots of the trees.)
%
The schedule $S$ (and its reverse $S^R$) of length $O(\Gamma\log N)$ 
allows  for bottom-up (top-down) communication inside these trees. 
Using $S$ and $S^R$ one can implement tree-labeling algorithm as follows.
First, each node learns the size of its subtree in a bottom-up
communication. Then, the root starts top-down phase, where each node assigns the smallest label in a given range to itself and assigns appropriate subranges to its children. 
More precisely, the root starts from the range
$[1,m]$, where $m$ is the size of the tree. Given the interval $[a,b]$, each node assigns $a$ as its own ID
and splits $[a+1,b]$ into its subtrees.
\end{proof}

\subsection{Reduction of radius of clusters}
In this section we show how to reduce the radius of a clustering.
Given an $r$-clustering of a set $X$ of density $\Gamma$, our goal is to
build an $1$-clustering of $X$. The idea is to repeat the following
steps several times. First, $X$ is \added{fully} sparsified,
\added{i.e.,} $O(1)$ nodes remain
from each non-empty cluster. Then, a minimum independent set (MIS) of this sparse set
is determined on the graph with edges $(u,v)$ connecting $u$, $v$ which
exchange messages during an execution of Sparse Network Schedule
(see Lemma~\ref{l:cons:dens}). The elements of MIS become the centers of new clusters in the new
$1$-clustering
and they execute Sparse Network Schedule.
A node $v$ (not in MIS) which receives a message from $w$ during this
execution of SNS, becomes an element of $\cluster(w)$ and
\added{it} is removed from further consideration.
As we show below, a $1$-clustering of $X$ can be obtained in this way efficiently by Alg.~\ref{alg:radius:reduction}.
\begin{algorithm}[h]
	\caption{RadiusReduction$(\Gamma,X,r)$}
	\label{alg:radius:reduction}
	\begin{algorithmic}[1]
	\State For each $v\in X$: newcluster$(v)\gets \bot$ \Comment{$\bot$ means ``undetermined''}
	\For{$i=1,2,\ldots,\chi(r+1,1-\eps)$}
		\State $(X_0,\ldots,X_k), (S_0,\ldots,S_k)\gets$ FullSparsification$(\Gamma,X)$
		\State Execute Sparse Network Schedule $L_\gamma$ from Lemma~\ref{l:cons:dens} on $X_k$ \label{s:rr5} 
		\State Let $G(Y,E)$, where $E=\{(u,v)\,|\, u,v\text{ exchange messages during } L_\gamma\}$
		and 
		$Y=\{v\,|\, \exists_u\  (u,v)\in E\}$
		\State \label{s:rr6} $D\gets \text{MIS}(G)$ \Comment{simulation of maximal independent set alg.~from \cite{SchneiderW08}}
		\State \label{s:rr7} 
		Execute Sparse Network Schedule $L_\gamma$ from Lemma~\ref{l:cons:dens} on $D$
		\Comment{Local Broadcast from $D$, using L.~\ref{l:cons:dens}}
		\For{each $v\in X\setminus D$}
			\If{$v$ received a message from $u\in D$ during execution of $L_\gamma$ in line~\ref{s:rr7}} \label{s:rr9}  
			  \State 
			$\text{newcluster}(v)\gets u$\Comment{Choose arbitrary $u$ if $v$ received many messages}	\label{s:rr10}	
			\EndIf
		\EndFor
		\State $X\gets X\setminus (D\cup \{v\,|\, \text{newcluster}(v)\neq \bot\})$
	\EndFor
	\end{algorithmic}
\end{algorithm}
\begin{lemma}\labell{l:clustering:reduction}
Assume that a $r$-clustering of a set $X$ of density $\Gamma$ is given for a fixed constant $r\ge 1$.
Then, Algorithm~\ref{alg:radius:reduction} builds $1$-clustering of $X$ in $O((\Gamma+\log^* N)\log N)$ rounds.
\end{lemma}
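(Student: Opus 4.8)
The plan is to verify, in turn, that the output of Algorithm~\ref{alg:radius:reduction} is a genuine $1$-clustering, that every node has been assigned before the loop ends, and that the round count is $O((\Gamma+\log^*N)\log N)$. Write $X^{(i)}$ for the value of $X$ at the start of iteration $i$, and $D_i$, $X_k^{(i)}$, $G_i$ for the sets and graph produced inside iteration $i$; the new clusters are the sets $D_i$ together with, for each $u\in D_i$, all $v$ with $\text{newcluster}(v)=u$. Since each node leaves $X$ exactly once — either by entering some $D_i$ or by having its $\text{newcluster}$ set — this is a well-defined partition once we know that $X$ ends up empty.

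For the geometry I would first record what the Sparse Network Lemma (Lemma~\ref{l:cons:dens}) buys us. Both $X_k^{(i)}$ and $D_i\subseteq X_k^{(i)}$ have $O(1)$ density: FullSparsification outputs a set of density $\chi(r,1-\eps)=O(1)$ by Lemma~\ref{l:csparsification:cluster}, and since the centres of the given old clustering are $(1-\eps)$-separated a unit ball meets only $O(1)$ old clusters, so this bound holds in the geometric sense as well. Hence $\mathbf{L}_\gamma$ guarantees: (a) any two nodes of $X_k^{(i)}$ at distance $\le 1-\eps$ exchange messages in line~\ref{s:rr5}, and (b) in line~\ref{s:rr7} every node within distance $1-\eps$ of $D_i$ receives a message from $D_i$ and is therefore removed at the end of iteration $i$, while conversely any successful reception forces distance $\le 1$ because the transmission range is $1$ (recall $\mathcal P=\mathcal N\beta$). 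From (a), $D_i$ — an independent set of the ``within $1-\eps$'' graph $G_i$ — is pairwise $(1-\eps)$-separated. From (b), every node of $X^{(i+1)}$ is at distance $>1-\eps$ from all of $D_i$; applied to members of $D_j\subseteq X^{(i+1)}$ for $j>i$, this shows $\bigcup_i D_i$ is $(1-\eps)$-separated, which is the centre condition. Finally $\text{newcluster}(v)=u$ forces $d(u,v)\le 1$ by the converse in (b), so each new cluster lies in a unit ball around its centre, and every node holds its cluster ID locally. (The ``$O(1)$ clusters per unit ball'' requirement is then immediate, as clusters meeting $B(x,1)$ have centres in $B(x,2)$.)

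The core of the proof, and the step I expect to be the main obstacle, is showing the $\chi(r+1,1-\eps)$ iterations suffice. Two preliminary facts: FullSparsification retains at least one node of every non-empty old cluster (in each constituent Sparsification the minimum-ID node of a cluster is a local minimum of the proximity graph, whose edges stay within clusters, hence survives), so $X_k^{(i)}$ contains a node of every old cluster present in $X^{(i)}$; and $D_i$, being a \emph{maximal} independent set of $G_i$ on vertex set $X_k^{(i)}$, has the property that each node of $X_k^{(i)}$ is in $D_i$ or exchanged messages with some node of $D_i$. Now suppose a node $v$, lying in an old cluster with centre $x$ (so $v\in B(x,r)$), survives iterations $1,\dots,i$, i.e.\ $v\in X^{(i+1)}$. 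For each $j\le i$ the old cluster of $v$ is non-empty in $X^{(j)}$, so $X_k^{(j)}$ contains a node $w_j$ of it; either $w_j\in D_j$ or $w_j$ exchanged messages with some $u_j\in D_j$, and since an exchange forces distance $\le 1$ we may choose, in either case, a point $c_j\in D_j\cap B(x,r+1)$. The $c_j$ are distinct (the $D_j$ are pairwise disjoint), pairwise $(1-\eps)$-separated (shown above), and each satisfies $d(v,c_j)>1-\eps$ because $v$ survived iteration $j$. Thus $\{v,c_1,\dots,c_i\}$ consists of $i+1$ pairwise $(1-\eps)$-separated points inside $B(x,r+1)$, so $i+1\le\chi(r+1,1-\eps)$. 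Hence no node survives $\chi(r+1,1-\eps)$ iterations, so $X=\emptyset$ after the loop and every node is a centre or has been assigned a cluster.

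For the round complexity, the loop runs $\chi(r+1,1-\eps)=O(1)$ times, and one iteration costs: $O(\Gamma\log N)$ for FullSparsification (Lemma~\ref{l:csparsification:cluster}); $O(\log N)$ for each of the two executions of $\mathbf{L}_\gamma$ (Lemma~\ref{l:cons:dens}); and $O(\log^*N\cdot\log N)$ for the MIS step, since $G_i$ has $O(1)$ degree by the density bound on $X_k^{(i)}$, so the algorithm of \cite{SchneiderW08} runs in $O(\log^* n)$ rounds of LOCAL with $O(\log n)$-bit messages, and each such round is simulated by one execution of $\mathbf{L}_\gamma$, i.e.\ in $O(\log N)$ SINR rounds. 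Summing over the $O(1)$ iterations and using $N=n^{O(1)}$ gives $O((\Gamma+\log^*N)\log N)$ rounds in total.
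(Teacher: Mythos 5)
Your proof is correct and takes essentially the same approach as the paper: each iteration places one fresh $D$-element in $B(x,r+1)$ for every old cluster with centre $x$ that still has active nodes, these elements together with any surviving node $v$ of that cluster are pairwise $(1-\eps)$-separated, and the packing bound $\chi(r+1,1-\eps)$ caps the iteration count. You make explicit the counting/packing argument and the verification that the output really is a $1$-clustering, both of which the paper leaves implicit, but the underlying idea is identical; you also share with the paper the tacit reading that the MIS in line~\ref{s:rr6} is taken over all of $X_k$ (so an isolated vertex of $G$ is itself an MIS vertex), which is needed for ``$w_j\in D_j$ or $w_j$ exchanged messages with some $u_j\in D_j$'' to hold.
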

\begin{proof}
By Lemma~\ref{l:sparsification:clustered}, the set 
$X_k$
has density $c'=O(1)$ and it contains at least one element from each nonempty
cluster of $X$. 
Each pair of nodes from \added{$X_k$} in distance $\le 1-\eps$
exchange messages during step~\ref{s:rr5}, therefore the graph $G$ contains the communication graph
of \added{$X_k$}. As $D$ computed in step~\ref{s:rr6} is a MIS of \added{$X_k$} in
the graph $G$ and \added{$X_k$} contains at least one element from
each nonempty cluster, there is an element of $D$ in close neighborhood of each cluster. 
More precisely, for a dense cluster $\clid$ with nodes located inside $B(x,r)$, there
is an element of the computed maximal independent set (MIS) in $B(x,r+1)$. 
Indeed, by Lemma~\ref{l:csparsification:cluster}, there is $y$ from $\clid$ in $Y$.
Thus, $y\in B(x,r)$, where $x$ is the center of the cluster $\clid$. Either $y$ is in
the computed MIS or $y$ is in distance $\le 1$ from an element $z$ from the MIS.
In the \replaced{latter}{former} case, $z\in B(x,r+1)$.
Then, steps \ref{s:rr7}--\ref{s:rr10} assign all nodes in distance $\le 1-\eps$ (and some
in distance $\le 1$) from elements of $D$ 
\added{(including the above defined $y$ or $z$)}
to new clusters included in unit balls with centers at the elements of $D$. Thus, the nodes from each ``old'' cluster  are assigned
to new clusters after $\chi(r+1,1-\eps)$ repetitions of the main for-loop.
\end{proof}


\subsection{Clustering algorithm}
In this section we provide an algorithm which, given
an unclustered set $A$, builds a $1$-clustering
of $A$. 
The algorithm consists of two main parts.
In the former part, the sequence of sets $A_0\supseteq\cdots\supseteq A_{kl}$
is built using SparsificationU, for $k=\log_{4/3}\Gamma$ and
$l=\chi(5,1-\eps)$.
By Lemma~\ref{l:sparsification:unclustered}, the density of
$A_{il}$ is at most $\Gamma(\frac34)^i$. Moreover, a sequence of schedules
$S_0,\ldots,S_{kl}$ is built such that each $v\in A_i\setminus A_{i+1}$ exchange
messages with parent$(v)\in A_{i+1}$.
In the latter part, we start from $1$-clustering of $A_{kl}$, which is obtained
by assigning each node $v\in A_{kl}$ to a separate cluster.
Then, given an $1$-clustering of $A_i$ for $i>0$,
we get $2$-clustering of $A_{i-1}$ by executing $S_{i-1}$ with messages
equal to cluster IDs of transmitting nodes. The elements of $A_{i-1}$ choose
clusters of their parents. Using RadiusReduction (Lemma~\ref{l:clustering:reduction}),
the obtained $2$-clustering is transformed into an $1$-clustering.
A pseudocode is presented in Alg.~\ref{alg:clustering}.

\begin{algorithm}[h]
	\caption{Clustering$(\Gamma,A)$}
	\label{alg:clustering}
	\begin{algorithmic}[1]
	\State $k\gets \log_{4/3}\Gamma$
	\State $X\gets A$, $A_0\gets A$
	\State $\Lambda\gets\Gamma$, $k\gets\log_{\frac34}\Gamma$, $l\gets \chi(5,1-\eps)$
	\For{$i=1,2,\ldots,k$}
		\State $(A_{(i-1)l+1},\ldots,A_{il}), (S_{(i-1)l+1},\ldots,S_{il})\gets \text{SparsificationU}(\Lambda,X)$
		\State $X\gets A_{il}$
		\State 
			$\Lambda\gets \frac34 \Lambda$
	\EndFor
	
	\State For each $v\in A_{kl}$: cluster$(v)\gets \text{ID}(v)$
	\State $X\gets A_{kl}$ 
	\State $\Lambda\gets 1$
	\For{$i=0,1,2,\ldots,kl$}
		\State Execute $S_{kl-i}$, 
			on $A_{kl-i}$, each $v\in A_{kl-i}$ sends cluster$(v)$ during the execution
		\State For each $v\in A_{kl-i-1}$: cluster$(v)\gets\text{cluster}(\text{parent}(v))$
		\State $X\gets X\cup A_{kl-i-1}$
		\State RadiusReduction$(\Lambda,X,2)$
		\State \textbf{if} $i\text{ mod } l=0$ \textbf{then} $\Lambda\gets \frac43\cdot\Lambda$
	\EndFor
	\end{algorithmic}
\end{algorithm}

\begin{theorem}\labell{t:clustering}
The algorithm Clustering builds $1$-clustering of an unclustered
set $A$ of density $\Gamma$ in time $O(\Gamma\log N\log^\star N)$.
\end{theorem}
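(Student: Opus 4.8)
The plan is to analyze Algorithm~\ref{alg:clustering} in its two phases, maintaining throughout the second phase the invariant that the set currently held in $X$ carries a genuine $1$-clustering. For the first (density-reduction) phase, write $\Lambda_m=\Gamma(3/4)^{m-1}$ for the density bound passed to the $m$-th call of SparsificationU. A short induction on $m$ using Lemma~\ref{l:sparsification:unclustered} shows that this call operates on a set of density at most $\Lambda_m$ and returns $A_{ml}$ of density at most $(3/4)\Lambda_m=\Lambda_{m+1}$; since $A_0\supseteq A_1\supseteq\cdots$, the set $A_j$ has density at most $\Lambda_{\lceil j/l\rceil}$ for every $j$, so after $k=\log_{4/3}\Gamma$ calls $A_{kl}$ has density $O(1)$. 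By Lemma~\ref{l:sparsification:unclustered} the $m$-th call runs in $O(\Lambda_m\log N\log^\star N)$ rounds and records schedules $S_{(m-1)l+1},\ldots,S_{ml}$ of length $O(\Lambda_m\log N)$, and as $\sum_{m=1}^{k}\Lambda_m=O(\Gamma)$ the whole phase costs $O(\Gamma\log N\log^\star N)$.

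For the second (bottom-up) phase, the base case is the constant-density set $A_{kl}$, to which I would give an honest $1$-clustering in $O(\log^\star N\log N)$ rounds by simulating one maximal-independent-set computation on the constant-degree graph of mutually audible nodes (using the Sparse Network Schedule of Lemma~\ref{l:cons:dens} to implement each step of the algorithm of \cite{SchneiderW08}), declaring the independent set the cluster centres and letting each remaining node join an audible centre; the pseudocode line $\text{cluster}(v)\gets\text{ID}(v)$ is understood as this initialization. For the inductive step, suppose $X=A_{kl-i}$ carries a $1$-clustering. Executing $S_{kl-i}$ makes each $v\in A_{kl-i-1}\setminus A_{kl-i}$ copy the cluster of $\text{parent}(v)\in A_{kl-i}$; since $v$ received a message from $\text{parent}(v)$ along this schedule, $d(v,\text{parent}(v))\le 1$ (the transmission range), so $v$ lies in the unit ball around its parent and hence within distance $2$ of that cluster's centre, while the centres are unchanged and therefore still pairwise at distance $\ge 1-\eps$. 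Thus $X=A_{kl-i-1}$ now carries a valid $2$-clustering, and RadiusReduction$(\Lambda,X,2)$ turns it back into a $1$-clustering by Lemma~\ref{l:clustering:reduction}, restoring the invariant; running the loop down to $A_0=A$ produces a $1$-clustering of $A$.

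For the running time of the second phase I would check that $\Lambda$ tracks the density of $X$ up to a constant factor: $\Lambda$ starts at $1$ and is multiplied by $4/3$ every $l$ iterations, so in iteration $i$ it is $\Theta((4/3)^{i/l})$, which matches the density of $X=A_{kl-i-1}$, namely $O(\Lambda_{k-\lfloor i/l\rfloor})=O(\Gamma(3/4)^{k-\lfloor i/l\rfloor})=O((4/3)^{i/l})$, and the schedule $S_{kl-i}$ used in iteration $i$ likewise has length $O((4/3)^{i/l}\log N)$. Hence iteration $i$ costs $O\big(((4/3)^{i/l}+\log^\star N)\log N\big)$ by Lemma~\ref{l:clustering:reduction}; summing over the $kl=O(\log\Gamma)$ iterations, the geometric term contributes $O\big(l\sum_{m\le k}(4/3)^m\cdot\log N\big)=O(\Gamma\log N)$ and the other term contributes $O(\log\Gamma\cdot\log^\star N\log N)=O(\Gamma\log^\star N\log N)$ (since $\log\Gamma\le\Gamma$). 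Together with the first phase this is $O(\Gamma\log N\log^\star N)$, as claimed.

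The arithmetic is routine; I expect the delicate point to be the invariant itself, namely checking that the clustering produced at each level satisfies \emph{both} defining properties of an $r$-clustering — the radius bound, which telescopes cleanly from $1$ to $2$ via $d(v,\text{parent}(v))\le 1$, and, less obviously, the centre-separation $d(x_1,x_2)\ge 1-\eps$, which is inherited level by level and is precisely the reason the base-case clustering of the constant-density set $A_{kl}$ must be produced by a real independent-set computation rather than simply by reading off identifiers. A secondary nuisance is keeping the off-by-one indexing among the $A_j$, the schedules $S_j$ and the updates of $\Lambda$ mutually consistent, which is why the density accounting above holds only up to constant factors.
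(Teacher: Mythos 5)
Your proof is correct and follows the same strategy as the paper, whose own proof of this theorem is essentially a four-line affair: correctness is asserted to ``follow from the properties of SparsificationU and RadiusReduction (Lemmas~\ref{l:sparsification:unclustered} and \ref{l:clustering:reduction})'', and the running time is obtained by exactly the geometric-sum bookkeeping you perform. What you add is the explicit invariant argument for the bottom-up phase, and in carrying it out you correctly flag a looseness the paper does not address: the line ``cluster$(v)\gets$ ID$(v)$'' applied to $A_{kl}$ does not literally yield a $1$-clustering per the definition in Section~\ref{s:preliminaries}, because constant density of $A_{kl}$ does not rule out two of its nodes lying within distance $1-\eps$, in which case the centre-separation requirement fails and the formal hypothesis of Lemma~\ref{l:clustering:reduction} is not met. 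This is an exposition gap rather than an algorithmic one --- the clustered-communication machinery behind RadiusReduction (in particular the bound on the number $\rho$ of conflicting clusters in Lemma~\ref{l:close:clustered}) really only needs that $O(1)$ clusters meet any unit ball, which constant density of $A_{kl}$ already guarantees even without centre-separation --- but your fix (one simulated MIS computation on the constant-degree mutual-audibility graph of $A_{kl}$, costing $O(\log^\star N\log N)$ and absorbed in the stated bound) restores the letter of the definition and makes the appeal to Lemma~\ref{l:clustering:reduction} clean. The inductive step, the tracking of $\Lambda$ against the density of $X$, and the summation are all as in the paper, modulo the constant-factor indexing slack that you correctly note.
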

\begin{proof}
Correctness of the algorithm follows from the properties of
SparsificationU and RadiusReduction (Lemmas~\ref{l:sparsification:unclustered} and \ref{l:clustering:reduction}).
The time of executions of SparsificationU is
$$O(\sum_{i=1}^k \Gamma\left(\frac34\right)^i\log N\log^* N)=O(\Gamma\log N\log^*N).$$
The time of executions of RadiusReduction is
$$O(\sum_{i=1}^k (\Gamma\left(\frac34\right)^i+\log^*N)\cdot\log N)=O(\Gamma\log N+\log^*N\log N).$$ Altogether, time complexity is $O(\Gamma\log N\log^\star N)$.
\end{proof}

	\section{Communication problems}\labell{s:comm:prob}
	In this section, we apply sparsification and clustering
in distributed algorithms for broadly studied communication problems, 
especially the local and global broadcast.

\subsection{Local broadcast}
Now, we can solve the local broadcast problem 
for a set $V$ of density $\Delta$ by the
following LocalBroadcast algorithm.
\begin{algorithm}[H]
	\caption{LocalBroadcast($V,\Delta$)}
	\label{alg:localBroadcast}
	\begin{algorithmic}[1]
		\State Clustering($\Delta,V$) \Comment{Theorem~\ref{t:clustering}}
		\State Imperfect labeling of $V$ \Comment{Lemma~\ref{l:labeling}}
		\For{$l=1,\ldots,\Delta$}
			\State Run SNS on nodes with label equal to $l$ \Comment{Lemma~\ref{l:cons:dens}}
		\EndFor
	\end{algorithmic}
\end{algorithm}
First, the clustering algorithm is applied which builds
a $1$-clustering of $V$ in time $O(\Delta\log N\log^\star N)$
(Theorem~\ref{t:clustering}). Then, an imperfect labeling of clusters can be formed
in time $O(\Delta\log N)$ (Lemma~\ref{l:labeling}).
Finally, an algorithm which executes $\Delta$ times 
Sparse Network Schedule (see Lemma~\ref{l:cons:dens}),
the $l$th execution of SNS is  performed by the nodes
with label $l$.
As each label appears $O(1)$ times in each cluster,
the density of the set of nodes with label $l$ (for each $l$) is $O(1)$ 
for each $l\in[N]$.
Thus, consecutive executions of SNS accomplish the local broadcast task
(see Lemma~\ref{l:cons:dens}).

\begin{theorem}\labell{t:local:broadcast}
The algorithm LocalBroadcast performs local
broadcast from a set $V$ of density $\Delta$ in time $O(\Delta\log N\log^\star N)$.
\end{theorem}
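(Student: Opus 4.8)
The plan is to obtain the bound by assembling the three stages of LocalBroadcast and checking correctness of each; almost all of the work is bookkeeping, with one genuine point that needs care. First I would run Clustering$(\Delta,V)$ and invoke Theorem~\ref{t:clustering} with density parameter $\Gamma=\Delta$ (which by Fact~\ref{f:density} agrees with the true density, and with the degree of the communication graph, up to constant factors): this produces, in $O(\Delta\log N\log^* N)$ rounds, a $1$-clustering of $V$, so that every node knows a cluster ID, each cluster is contained in a unit ball $B(x,1)$ around one of its members, and distinct cluster centres lie at distance at least $1-\eps$ from one another.

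Next I would apply Lemma~\ref{l:labeling} to this $1$-clustering with $r=1$, obtaining in $O(\Delta\log N)$ further rounds a $c$-imperfect labeling: every node receives a label in $[\Delta]$, and within any single cluster at most $c$ nodes carry a fixed label, where $c=O(1)$ depends only on the SINR parameters (and on $r=1$). The heart of the argument --- and the step I expect to be the main obstacle --- is to show that for each fixed label $l$ the set $V_l=\{v\in V:\,\mathrm{label}(v)=l\}$ has density bounded by an absolute constant. I would argue as follows: fix a unit ball $B(y,1)$; any $v\in V_l\cap B(y,1)$ lies in a cluster whose centre is within distance $2$ of $y$, and since cluster centres are pairwise at distance at least $1-\eps$, there are at most $\chi(2,1-\eps)=O(1)$ such clusters; as each contributes at most $c$ nodes with label $l$, we get $|V_l\cap B(y,1)|\le c\,\chi(2,1-\eps)=:\gamma=O(1)$, so $V_l$ has density at most the absolute constant $\gamma$.

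Granting this density bound, running the Sparse Network Schedule $\mathbf{L}_\gamma$ from Lemma~\ref{l:cons:dens} on $V_l$ guarantees that every node of $V_l$ transmits a message received at every point within distance $1-\eps$, in particular by all of its neighbours in the communication graph; iterating this over $l=1,\dots,\Delta$ therefore achieves local broadcast for every node of $V$. It remains only to add the running times: $O(\Delta\log N\log^* N)$ for Clustering, $O(\Delta\log N)$ for the imperfect labeling, and $\Delta$ executions of $\mathbf{L}_\gamma$ of length $O(\log N)$ each, contributing a further $O(\Delta\log N)$. The dominant term is $O(\Delta\log N\log^* N)$, which is the claimed bound.
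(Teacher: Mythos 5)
Your proposal is correct and follows essentially the same route as the paper: Clustering (Theorem~\ref{t:clustering}), then the imperfect labeling (Lemma~\ref{l:labeling}), then $\Delta$ executions of the Sparse Network Schedule, with the key observation that each label class has constant density because each unit ball meets only $O(1)$ clusters and each cluster contributes $O(1)$ nodes per label. Your explicit bound $c\,\chi(2,1-\eps)$ just spells out a detail the paper leaves implicit.
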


\comment{
In this section we present an algorithm for the local broadcast problem. The key idea is to select a sparse subset $\mathcal{L}$ of nodes that forms an independent set in the communication graph and run the broadcast process starting in each of the nodes of $\mathcal{L}$. Due to Proposition~\ref{p:stage2} we have that all nodes that are in an \emph{internal} state transmit a message to all the neighbors. After the broadcast is finished all nodes are in state $\mathcal{I}$. We exploit this property to perform local broadcast. However, the number of rounds needed for the broadcast process is dependent on the maximal distance from any node $v$ to the closest node $w\in\mathcal{L}$. Formally, let $d^*=\max_{v\in G}\min_{w\in \mathcal{L}} d(v,w)$, then it is sufficient to run $d^*+1$ phases of the broadcast algorithm. This yields a complexity of $O(d^*(\Delta+\log^* n)\log n)$. We construct the network sparsification $\mathcal{L}$ such that $d^*=O(\log n)$ and the complexity of our local broadcast algorithm is $O((\Delta+\log^* n)\log^2 n)$.

{\bf Network Sparsification.} 
We call a set $\mathcal{L}$ of nodes a \emph{network sparsification} if it satisfies the following criteria. The elements of the set are called \emph{leaders}.
\begin{itemize}
\item For each $u,w\in \mathcal{L}$ we have $d(u,w)>1-\eps$.
\item For any $v\not\in\mathcal{L}$ there is a leader $w\in \mathcal{L}$ such that $d(v,w)\le\log n$.
\end{itemize}

The idea to generate such sparsification is to use a BuildTree-like procedure (see Algorithm~\ref{alg:build_tree}) to organize nodes into trees and designate the roots to become the leaders. However, there is a major caveat to this: MonoGraphConstruction -- the key routine used in TreeMerging Algorithm assumes preexisting clustering of the network. Namely, each node $v$ has to be assigned to some cluster (indicated by the $\up(v)$ variable). In the broadcast algorithm it was realized by coupling nodes assigned to the same master node into one cluster. Unlike the broadcast scenario no nodes are assumed to be distinguished at the beginning of local broadcast. Since we do not have any initial clustering we redesign the procedure to work on unclustered networks at the cost of larger spread of trees  -- that is, we allow for a tree to occupy area of diameter $O(\log n)$ instead of $O(1)$.

The presented algorithm organizes a network in trees of depth at most $\log n$ such that for each two trees their roots are within distance at least $1-\eps$. 
The trees are built in a similar way to the communication trees (see Algorithm~\ref{alg:build_tree}). However, we adjust TreeMerging (Algorithm~\ref{alg:treemerging}) subroutine to take into the account the rank (the size of a tree rooted at the node), so that node with lower rank joins the node with higher rank. This makes the rank of the parent at least twice the rank of its children, thus bounds the depth of a tree by $O(\log n)$. 

\begin{algorithm}[H]
	\caption{NewTreeMerging$(\Gamma)$ in node $v$}
	\begin{algorithmic}[1]
		\State counter $\leftarrow 0$, $\up(v)\leftarrow 0$
		\While{parent$(v) = \bot$ {\bf and} counter $<\Gamma$}
		\State counter $\leftarrow$ counter$+1$
		\State $H\leftarrow$ ProximityGraphConstruction$(v)$
		\label{btm:monograph}
		\State Compute Maximal Independent Set of $H$
		\State If $v$ has no neighbours then wait until the next iteration of while loop.
		\If{$v \in MIS(H)$}
		\State Transmit message $\langle v, \text{MISNode} \rangle$
		\State Let $K=\{v_1,...,v_k\}$ be the set of nodes that sent handshake message to $v$.
		\State Choose any $w\in K$
		\If{rank$(w)>$ rank$(v)$}
		\State Send a parent declaration message $\langle w, v, \text{rank}(v) \rangle$
		\State parent$(v)\leftarrow w$
		\Else
		\State Send a message of acceptance $\langle v, w \rangle$.
		\State children$(v) \leftarrow $ children$(v) \cup \{w\}$
		\State rank$(v) \leftarrow $ rank$(v) + $ rank$(w)$
		
		\EndIf
		\Else
		\State Let $L$ be the set of received messages of type $\langle w, \text{MISNode}\rangle$. 
		\If{$L$ is non-empty} 
		\State Let $u$ be the node such that $\langle u, \text{MISNode}\rangle\in L$ having minimal ID.
		\State Transmit a handshake message $\langle u, v, \text{rank}(v) \rangle$. 
		\If{received an acceptance message from $u$}
		\State parent$(v)\leftarrow u$
		\ElsIf{received a parent declaration message from $u$}
		\State children$(v) \leftarrow $ children$(u) \cup \{u\}$
		\State rank$(v) \leftarrow $ rank$(v) + $ rank$(u)$
		\EndIf
		
		\EndIf
		\EndIf
		\EndWhile
	\end{algorithmic}
\end{algorithm}

The line of proof of correctness and complexity is the same as for TreeMerging Algorithm, however some details must be explained. The main idea was to show that in each iteration of the main while-loop, in every dense region, at least two trees merge. To carry this analysis we argue that in every dense region graph $H$ has at least one edge. This is true due to fact that in every such area there must be a locally minimal pair of nodes that hear each other (cf. Lemma~\ref{l:monograph}) during ProximityGraphConstruction. Then, at least one node near the edge is selected to $MIS(H)$ and coordinates the tree merging. This concludes the sketch of proof supplement to the correctness of the algorithm.

{\bf Local Broadcast via Broadcast.} Now, we have all the pieces to perform local broadcast. After running a version of BuildTree algorithm, adjusted as described in previous subsection, we obtain a network sparsification -- $\mathcal{L}$ in which we initiate the broadcast algorithm which terminates after $O((\Delta+\log^* n)\log^2)$ rounds. Thanks to its properties we know that all nodes had a chance to transmit a message to all its neighbors (see Proposition~\ref{p:stage2}), and thus complete the task of local broadcast.

\begin{theorem}
	Running the broadcast algorithm with a set of sources $\mathcal{L}$ solves local broadcast problem in $O((\Delta+\log^* n)\log^2)$ rounds.
\end{theorem}

}

\subsection{Sparse multiple-source broadcast}
In this section we consider 
{\em sparse multiple source broadcast} (SMSB) problem, a generalization
of the global broadcast problem. This generalization is introduced for
further applications for other communication
problems as  leader election
and wake-up.

The {\em sparse multiple source broadcast problem (SMSB problem)} is defined
as follows. At the beginning, the unique broadcast message is known to 
a set of distinguished nodes $S\subset V$ such that $d(u,v)>1-\eps$
for each $u,v\in S$, $u\neq v$. The problem is solved when
\begin{enumerate}
\item[(a)] the broadcast
message is delivered to all nodes of the network, AND
\item[(b)]
each node $v\in V$ transmitted a message in some round which was
received by all its neighbors in the communication graph.
\end{enumerate}

\noindent\textbf{The algorithm for SMSB problem.}
Let $V_i$ denote the set of nodes in the graph-distance $i$ from
$S$, i.e., $v\in V_i$ if a shortest path from an element of $S$
to $v$ has length $i$.
The algorithm works in phases.
In Phase~1,
Sparse Networks Schedule (Lemma~\ref{l:cons:dens}) is applied on the
set of distinguished nodes $S$. In this way all elements of $S$
transmit messages received by $V_1$, their neighbors in the communication
graph (and possibly some other nodes in geometric distance at most $1$). 
After receiving
the first message from $s\in S$, a node $v\in V\setminus S$
assigns itself to the cluster $s$. Hence, the
set of awaken nodes $L_1$ contains $V_1$ and we have 
$1$-clustering of $L_1$.

In Phase $i>1$, local broadcast is executed on $L_{i}$, the set of
nodes awaken\footnote{A node is awaken in the phase $j$ if it receives the broadcast message for the first time in that phase.} in Phase~$i-1$ . In this way, the set of nodes awaken in the first $i$
phases contains all nodes in graph distance $i$ from \added{$S$}, i.e.,
$\bigcup_{j\in[i]}V_j\subseteq \bigcup_{j\in[i]}L_j$. Moreover, 
we assure that all nodes awaken in a phase are $1$-clustered at the end of the phase. (Thus, each node knows its cluster ID in such clustering).

A phase consists of three stages. In Stage~1 of phase $i$, an \emph{imperfect labeling} of each cluster of $L_{i}$ is built. This means that each node $v$ is assigned a label $l_v$ such that, for each cluster, the number of nodes with the same label in the cluster is $O(1)$\added{(see Lemma~\ref{l:labeling}).} In Stage~2, Sparse Network Schedule (Lemma~\ref{l:cons:dens}) is executed $\Delta$ times. 
A node with label $l$ participates in the $l$th execution of SNS only. In this way, all nodes from $L_{i}$ transmit on distance $1-\eps$. All nodes
awaken in Stage~2 inherit cluster numbers from nodes which awaken them. In this
way, we obtain a $2$-clustering of $L_{i+1}$, the set of nodes awaken in Stage~2. The goal of Stage~3 is (given the obtained $2$-clustering) to build an $1$-clustering of awaken nodes.
\added{(See Fig.~\ref{fig:global} in Section~\ref{s:intro} for an example.)} 

\begin{algorithm}[H]
	\caption{SMSBroadcast($V,S$)}
	\label{alg:smsBroadcast}
	\begin{algorithmic}[1]
		\State SNS($S$) 
		\State $L_1\gets$ nodes awaken by elements of $S$, clustered by IDs of their neighbors from $S$
		\For{$i=1,2,\ldots$}
			\State Stage 1: imperfect labeling of $L_i$ \Comment{Lemma~\ref{l:labeling}}
			\State Stage 2: local broadcast from $L_i$, using assigned labels
			\Comment{Lemma~\ref{l:cons:dens}}
			\Statex \Comment{Awaken nodes form $L_{i+1}$ and inherit cluster IDs from nodes which awaken them.}
			\State Stage 3: RadiusReduction($\Delta,L_{i+1},2$)
			using
			$2$-clustering given by inherited cluster numbers 
			\Statex \Comment{Lemma~\ref{l:clustering:reduction}.}
		\EndFor
	\end{algorithmic}
\end{algorithm}
Using Lemmas~\ref{l:labeling}, \ref{l:cons:dens} and \ref{l:clustering:reduction}, we obtain the following result.
\begin{theorem}\labell{t:smsb:broadcast}
Algorithm SMSBroadcast solves sparse multiple source broadcast
problem in $O(D (\Delta+\log^* N) \log N)$ rounds. For $S$ of size $1$,
the algorithm solves the global broadcast problem.
\end{theorem}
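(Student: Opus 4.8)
The plan is to establish, by induction on the phase index $i$, two facts at once: a \emph{reachability} property — after enough phases every node holds the broadcast message and has itself performed a local broadcast — and a \emph{clustering invariant} — the set $L_i$ processed in phase $i$ carries a valid $1$-clustering known to its nodes — and then to add up the per-phase costs. Two things are used throughout: every $L_i\subseteq V$ has density at most $\Delta$, and the nodes know an upper bound $D$ on the diameter, so the main loop is run $O(D)$ times.

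I would first check the \textbf{base case and the clustering invariant}. As $S$ is $(1-\eps)$-separated it has constant density, so running the Sparse Network Schedule of Lemma~\ref{l:cons:dens} on $S$ makes every $s\in S$ heard at all points within distance $1-\eps$; hence every node of $L_1$ lies within transmission range (distance $\le 1$) of the source $s$ whose message it adopted, and distinct sources are at distance $\ge 1-\eps$ — exactly a $1$-clustering with centres in $S$. For the step, assume $L_i$ is $1$-clustered. In Stage~1, Lemma~\ref{l:labeling} with $r=1$ produces a $c$-imperfect labeling of $L_i$ with labels in $[\Delta]$ and $c=O(1)$ in $O(\Delta\log N)$ rounds. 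The crucial density fact is that, in a $1$-clustering, any unit ball meets at most $\chi(2,1-\eps)=O(1)$ clusters — all their centres lie in a ball of radius $2$ and are $(1-\eps)$-separated — so the set of nodes carrying any fixed label has density $\le c\cdot\chi(2,1-\eps)=O(1)$; therefore in the $l$-th of the $\Delta$ runs of SNS in Stage~2, Lemma~\ref{l:cons:dens} guarantees that every label-$l$ node of $L_i$ is heard throughout its distance-$(1-\eps)$ neighbourhood, so over the $\Delta$ runs every node of $L_i$ performs a local broadcast, establishing part (b) of SMSB for $L_i$ at a cost of $O(\Delta\log N)$ rounds. A node awakened in Stage~2 inherits the cluster ID carried by the message it received from some $w\in L_i$; since it is within distance $1$ of $w$ and $w$ lies in the radius-$1$ ball of its centre, the inherited clustering of $L_{i+1}$ has radius $\le 2$, i.e.\ it is a $2$-clustering. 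Stage~3 then invokes RadiusReduction (Lemma~\ref{l:clustering:reduction}) to turn it into a $1$-clustering of $L_{i+1}$ in $O((\Delta+\log^* N)\log N)$ rounds, closing the induction.

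Next comes \textbf{reachability}. Writing $V_k$ for the nodes at graph-distance $k$ from $S$, I would prove $V_k\subseteq L_1\cup\cdots\cup L_k$ by induction on $k$: for $k=1$, by the definition of the communication graph each node of $V_1$ lies within $1-\eps$ of some $s\in S$, hence is awakened during the initial run of SNS on $S$ and joins $L_1$; for $k\ge 2$, a node $v\in V_k$ has a neighbour $u\in V_{k-1}$, which by induction belongs to some $L_j$ with $j\le k-1$, and in phase $j$ the Stage-2 local broadcast from $L_j$ makes $u$ transmit a message heard by all its neighbours, so $v$ has the broadcast message by the end of phase $j$ and therefore lies in $L_1\cup\cdots\cup L_{j+1}\subseteq L_1\cup\cdots\cup L_k$. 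Hence after $D$ phases every node is awake — part (a) — and part (b) holds too: each $s\in S$ was heard by all its neighbours during the initial SNS on $S$, and every other node, being a member of some $L_j$ with $j\le D$, performed a local broadcast in phase $j$; no $L_j$ with $j>D$ is nonempty. Summing costs, the initial step costs $O(\log N)$ and each of the $O(D)$ phases costs $O(\Delta\log N)+O(\Delta\log N)+O((\Delta+\log^* N)\log N)$ for Stages~1--3, giving the claimed bound $O(D(\Delta+\log^* N)\log N)$. Finally, a single source $S=\{s\}$ is trivially $(1-\eps)$-separated, asleep nodes remain idle until awakened so the non-spontaneous wake-up rule is respected, and condition (a) of SMSB is precisely the requirement of global broadcast, so the same algorithm and bound apply.

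I expect the \textbf{main obstacle} to be the clustering bookkeeping inside the inductive step: one must verify that inheriting cluster IDs across a local-broadcast step keeps the cluster radius bounded by the constant $2$, and — more delicately — that the defining feature of a $1$-clustering, namely that only $O(1)$ clusters meet any unit ball, is exactly what forces the per-label density to be constant and thus makes the Sparse Network Schedule applicable in Stage~2. The reachability part, by contrast, is routine once one isolates the monotone observation that any node placed in $L_j$ broadcasts during phase $j$.
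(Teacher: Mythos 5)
Your proposal is correct and follows essentially the same route as the paper, which proves this theorem only by pointing to the algorithm description together with Lemmas~\ref{l:labeling}, \ref{l:cons:dens} and \ref{l:clustering:reduction}; you have simply filled in the induction (the $1$-clustering invariant, the constant per-label density via $\chi(2,1-\eps)$, the inherited $2$-clustering, and the reachability claim $V_k\subseteq L_1\cup\cdots\cup L_k$) that the paper leaves implicit. The per-phase cost accounting and the specialization to $|S|=1$ also match the paper's argument.
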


\subsection{Other problems}
\newcommand{\otherproblems}{

\noindent\textbf{Wake-up problem}
In the wake-up problem \cite{JurdzinskiK16a}, some nodes become spontaneously \emph{active} at various
rounds and the goal is to activate the whole network. 
Nodes which do not become active spontaneously can be activated
by a message successfully delivered to them. 
Spontaneous wake-ups are controlled by an adversary, thus an algorithm
solving the problem should work for each pattern of spontaneous
activations. A node can not participate in an execution of an algorithm
as long as it is not active.
Time of an execution of a wake-up algorithm is the number of rounds between the 
spontaneous activation
of the first node and the moment when all nodes are activated.
We assume the model with \emph{global clock}, i.e., there is a central
counter of rounds and each node knows the current value of that counter.

For a while, assume that all spontaneous wake-ups appear at the same
round $r$. Then, starting at round $r$, we call Clustering (Alg.~\ref{alg:clustering})
on the set of spontaneously awaken nodes $S$ (see Lemma~\ref{l:sparsification:unclustered}). As a result, we get a nonempty
set $S'\subset S$ of constant density in $O(\Delta\log N\log^* N)$ rounds.
Then, SMSBroadcast$(V,S')$ activates all nodes on a network
in $O(D(\Delta+\log^* N)\log N)$ rounds.
Let $T(N,\Delta)=O(D(\Delta+\log^* N)\log N)$ be the exact number of rounds of this algorithm.
In order to adjust the above algorithm to arbitrary times of spontaneous
wake-ups, we start a separate execution of this algorithm at each round
$r$ whose number is divisible by $T(N,\Delta)$. In an execution starting
at round $r$, only nodes awaken before $r$ are considered to be activated
spontaneously.
Thus, we have the following result.

\begin{theorem}\labell{t:wakeup}
The wake-up problem in networks with global clock can be solved
in $O(D(\Delta+\log^* N)\log N)$ rounds.
\end{theorem}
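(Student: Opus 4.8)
The plan is to first solve the special case in which all spontaneous activations happen in one and the same round, and then to lift this to an arbitrary activation pattern by re-launching that procedure periodically, in sync with the global clock. For the synchronized case, suppose every spontaneously awoken node wakes up in round $t_0$ and let $S$ be the (nonempty) set of these nodes. Since they are all active from round $t_0$ onward and share the global clock, starting at $t_0$ they run Clustering$(\Delta,S)$ (Algorithm~\ref{alg:clustering}); by Theorem~\ref{t:clustering} this produces, in $O(\Delta\log N\log^* N)$ rounds, a $1$-clustering of a set $S'\subseteq S$ that is nonempty, has constant density (each unit ball meets $O(1)$ clusters, hence contains $O(1)$ cluster centers), and whose elements --- the cluster centers --- are pairwise at distance at least $1-\eps$ by the definition of a $1$-clustering (if the strict inequality $d(u,v)>1-\eps$ is needed, one can run the clustering with a marginally smaller connectivity parameter). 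Thus $S'$ is an admissible source set for SMSBroadcast, and calling SMSBroadcast$(V,S')$ (Algorithm~\ref{alg:smsBroadcast}) activates every node of the network within $O(D(\Delta+\log^* N)\log N)$ further rounds by Theorem~\ref{t:smsb:broadcast}. Let $T=T(N,\Delta)=O(D(\Delta+\log^* N)\log N)$ be a fixed upper bound on the total length of this two-step procedure.

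To remove the synchronization assumption, I would partition the rounds into consecutive blocks of length exactly $T$, the $j$-th block being $[jT,(j+1)T)$. At every block boundary $jT$, each node that is active at that round launches a fresh, independent copy of the synchronized procedure above, taking as its source set precisely the set of nodes active at round $jT$; a node that becomes active strictly inside a block does nothing new until the next boundary (it is already awoken, so this is harmless for the wake-up objective). Since the blocks are pairwise time-disjoint and each copy terminates within $T$ rounds, distinct copies never run concurrently and no interference or stale state leaks between them; within a single copy the newly reached nodes are exactly the ones SMSBroadcast is designed to wake and relay through, so the correctness of each copy is unchanged. For the timing, let $t_0$ be the round of the first spontaneous wake-up and let $r$ be the least multiple of $T$ with $r>t_0$, so that $t_0<r\le t_0+T$; at round $r$ the first spontaneously awoken node is active, so the copy launched at $r$ has a nonempty source set and, by the synchronized analysis, wakes the whole network by round $r+T\le t_0+2T$. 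Hence the network is fully activated within $2T=O(D(\Delta+\log^* N)\log N)$ rounds of the first activation.

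The delicate step, which I expect to be the main obstacle, is verifying that the set $S'$ returned by the clustering routine really is a legal source set for SMSBroadcast --- nonempty, of constant density, with the required separation between its elements --- and that re-launching the entire procedure at each block boundary does not corrupt the state of already-active or partially-processed nodes; the remainder is bookkeeping with the global clock combined with Theorems~\ref{t:clustering} and~\ref{t:smsb:broadcast}. A minor secondary point is to check that the $O(\Delta\log N\log^* N)$ cost of the clustering step does not exceed $O(D(\Delta+\log^* N)\log N)$ in the parameter ranges under consideration, so that it is safely absorbed into $T$.
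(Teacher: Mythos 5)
Your proposal is correct and follows essentially the same route as the paper: reduce to the synchronized case by running Clustering to obtain a nonempty, constant-density, well-separated source set $S'$ and then invoking SMSBroadcast$(V,S')$, and handle arbitrary activation times by restarting this procedure at every round divisible by $T(N,\Delta)$ using the global clock. Your extra remarks (the $2T$ bound, the non-overlap of successive copies, and the $\ge$ versus $>$ separation of cluster centers) only make explicit details the paper leaves implicit.
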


\noindent\textbf{Leader election}
The \emph{leader election} problem is to choose (exactly) one node
in the whole network as the leader, assuming that nodes are awaken
in various rounds.
First, assume that all nodes start a leader election algorithm at the same round.
In order to elect the leader, we first execute 
Clustering (Alg.~\ref{alg:clustering})
on all nodes of a network which
takes $O(\Delta\log^* N\log N)$ rounds and determines the non-empty
set $S$ of constant density. Then, a unique member of $S$ is chosen
as the leader, in the following (standard) way. Observe that one can verify whether $S$ contains nodes 
with IDs in the range $[l,r]$ by executing SMSBroadcast$(V, S')$, where
$S'$ are nodes from $S$ with IDs in $[l,r]$. Thus, it is possible to
choose the leader from $S$ by binary search which starts from the
range $[1,N]$ and requires $O(\log N)$ executions of SMSBroadcast.
In order to adjust the above algorithm to arbitrary times of spontaneous
wake-ups of nodes, we start a separate execution of the algorithm at each round
$r$ whose number is divisible by $T(N,\Delta)$, where
$T(N,\Delta)=O(D(\Delta+\log^* N)\log^2 N)$ is the upper bound on the
time of the algorithm. In an execution starting
at round $r$, only nodes awaken before $r$ are considered to be activated
spontaneously. Finally, we have the following result.
\begin{theorem}\labell{t:leader}
The leader election problem in networks with global clock can be solved
in $O(D(\Delta+\log^* N)\log^2 N)$ rounds.
\end{theorem}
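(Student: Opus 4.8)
The plan is to handle the synchronized case by composing two earlier primitives --- the clustering algorithm of Theorem~\ref{t:clustering} and the sparse multiple-source broadcast of Theorem~\ref{t:smsb:broadcast} --- and then to remove the synchronization assumption by the same epoch-restart trick used for wake-up (Theorem~\ref{t:wakeup}), with one extra logarithmic factor coming from a binary search over identifiers.

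\emph{Synchronized case.} Assume all nodes start at the same round. First run Clustering on the whole network; by Theorem~\ref{t:clustering} this produces a $1$-clustering in $O(\Delta\log N\log^* N)$ rounds, and the set $S$ of cluster centres is nonempty and of constant density (its elements are pairwise at distance $\ge 1-\eps$, hence a legal source set for SMSB, up to the usual adjustment of the connectivity constant). We then select the leader from $S$ by binary search. All nodes keep a common interval $[l,r]$, initialised to $[1,N]$, under the invariant $S\cap[l,r]\neq\emptyset$. In one iteration we put $m=\lfloor (l+r)/2\rfloor$ and run SMSBroadcast$(V,\,S\cap[l,m])$ for exactly the worst-case number $\Theta(D(\Delta+\log^* N)\log N)$ of rounds given by Theorem~\ref{t:smsb:broadcast}; a node that received the broadcast message during this block sets $[l,r]\leftarrow[l,m]$, and a node that received nothing sets $[l,r]\leftarrow[m+1,r]$. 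Each node can test its own membership in $S\cap[l,m]$ locally (it knows its ID and, after Clustering, whether it is a centre), so SMSBroadcast is well defined.

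\emph{Correctness and running time.} Correctness of a single iteration rests on a dichotomy: if $S\cap[l,m]\neq\emptyset$ then by property~(a) of SMSB every node receives the broadcast message, so all nodes move to $[l,m]$; if $S\cap[l,m]=\emptyset$ then there is no active source and, since all nodes are in lockstep within the block and no other subprotocol overlaps it, not a single transmission occurs, so all nodes move to $[m+1,r]$, and the invariant survives because $S\cap[l,r]\neq\emptyset$. After $\lceil\log N\rceil$ iterations $[l,r]=\{\mathrm{id}^*\}$ with $\mathrm{id}^*\in S$; that node declares itself leader, and every node recovers $\mathrm{id}^*$ from its received/not-received bits, so leader election is solved, and uniqueness is immediate from the invariant. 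The cost is $O(\Delta\log N\log^* N)$ for clustering plus $O(\log N)$ calls to SMSBroadcast, i.e.\ $O(D(\Delta+\log^* N)\log^2 N)$ overall.

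\emph{Removing synchronization, and the main obstacle.} Let $T=T(N,\Delta)=\Theta(D(\Delta+\log^* N)\log^2 N)$ bound the length of the synchronized procedure. Using the global clock, partition time into epochs $[kT,(k+1)T)$; a node that wakes spontaneously inside an epoch stays silent until the next boundary, at which point all currently active nodes jointly start a fresh copy of the synchronized procedure, run it for exactly $T$ rounds, and halt once a copy completes. Since the first spontaneous wake-up precedes some boundary by at most $T$ rounds, termination occurs within $2T=O(D(\Delta+\log^* N)\log^2 N)$ rounds of the first activation. The point that needs the most care is why the first epoch $k^*$ with a nonempty active set $S_0$ actually finishes the job for \emph{all} of $V$ and not just for $S_0$: here we use that every SMSBroadcast call inside the binary search operates on node set $V$, so it floods the whole connected network, waking every node and delivering the broadcast message --- hence the binary-search bits --- to it; thus $\mathrm{id}^*\in S_0\subseteq V$ is elected and known to every node by the end of epoch $k^*$. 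The other delicate point is the ``received nothing $\Rightarrow$ tested interval empty'' inference, which relies precisely on SMSBroadcast emitting no transmission whatsoever on an empty source set; the remaining bookkeeping (density of $S_0$ at most $\Delta$, separation of cluster centres, additivity of round counts) is routine given Theorems~\ref{t:clustering} and~\ref{t:smsb:broadcast}.
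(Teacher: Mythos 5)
Your proposal is correct and follows essentially the same route as the paper: run Clustering (Theorem~\ref{t:clustering}) to obtain a nonempty, constant-density set $S$ of well-separated centres, then determine the unique leader by binary search over the ID range using $O(\log N)$ calls to SMSBroadcast (each call testing emptiness of $S$ restricted to a half-interval), and finally handle arbitrary wake-up times by restarting the synchronized procedure at epoch boundaries that are multiples of its worst-case running time $T(N,\Delta)$. The extra details you supply (the interval invariant, the ``no source $\Rightarrow$ no transmission'' dichotomy) are exactly the points the paper leaves implicit, and they are sound.
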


}
\ifshort
\comment{
By applying the clustering algorithm and SMSBroadcast, solutions for
other widely studied problems can be obtained, e.g., leader election
or wake-up. Examples of such solutions are presented in Appendix, Section~\ref{s:app:other}.
}
By applying the clustering algorithm and SMSBroadcast, solutions for
other widely studied problems can be obtained, e.g., leader election
or wake-up. 
\otherproblems
\else
\otherproblems
\fi

	\section{Lower Bound}\labell{s:lower:bound}
	In this section we prove a lower bound on the number of rounds needed to perform global broadcast. 
\ifshort
\else
\fi

\begin{theorem}\labell{t:lower:bound}
A deterministic algorithm for the global broadcast in the SINR network
works in time $\Omega(D\Delta^{1-1/\alpha}+\Delta)$, provided $N=\Omega(D\Delta)$
and the connectivity parameter $\eps>0$ is small enough.
\end{theorem}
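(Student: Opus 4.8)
The plan is to prove the two terms separately, \textbf{(A)} a chain lower bound giving $\Omega(D\Delta^{1-1/\alpha})$ and \textbf{(B)} a single-gadget lower bound giving $\Omega(\Delta)$, and then conclude by $D\Delta^{1-1/\alpha}+\Delta=\Theta(\max\{D\Delta^{1-1/\alpha},\Delta\})$. Both parts use the standard \emph{hidden-tail} adversary argument: fix the deterministic algorithm, construct a concrete network, and observe that a subnetwork that stays asleep up to some round $T$ cannot influence the execution before round $T$; hence everything behind that subnetwork is still uninformed at round $T$. The nontrivial content is choosing the geometry so that a dense ``relay'' region is \emph{slow} at informing its successor.

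For part (A) the basic gadget places the $\Delta$ nodes currently holding the message (a set $B_k$) on a short straight segment of length $\Theta(1)$ with near-uniform spacing $\Theta(1/\Delta)$, so that a ball of radius $1-\eps$ around a segment point contains $\Theta(\Delta)$ of them (giving degree/density $\Theta(\Delta)$). The ``gateway'' node $g_{k+1}$ that must be woken to continue the broadcast is placed off the segment at perpendicular distance $\delta$, and the entire rest of the network --- the remaining $\Theta((D-k)\Delta)$ nodes, including $g_{k+1}$'s own segment and all later gadgets --- hangs behind $g_{k+1}$ and stays asleep until $g_{k+1}$ receives its first message. The key SINR calculation is to tune $\delta$: because $\alpha>2$, the aggregate interference produced at a perpendicular offset $\delta$ by a dense segment is $\Theta\!\bigl(\mathcal{P}\,\Delta\,\delta^{1-\alpha}\bigr)$ (dominated by the nodes at arc-distance $\lesssim\delta$), while a lone transmitter at distance $\delta$ delivers signal $\Theta(\mathcal{P}\delta^{-\alpha})$; balancing signal against interference at the reception threshold $\beta$ forces $\delta=\Theta(\Delta^{-1/\alpha})$ (up to constants depending on $\alpha,\beta,\eps$). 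With this scale, $g_{k+1}$ can receive in a given round only if, among the $w:=\Theta(\delta\Delta)=\Theta(\Delta^{1-1/\alpha})$ segment nodes lying within arc-distance $\Theta(\delta)$ of $g_{k+1}$'s foot, essentially exactly one transmits while the others in that window stay silent (two nearby transmitters are at almost equal distance and their mutual interference already drops the SINR below $\beta$), and the transmitters outside that window contribute negligibly.

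For the per-hop lower bound: once $g_k$ has delivered the message to all of $B_k$, the nodes of $B_k$ all wake in the same round with identical receive-histories, so the set of $B_k$-nodes transmitting in the $t$-th subsequent round is a fixed set $S_t\subseteq[N]$ determined by identifiers and $t$ alone. The adversary is free to choose \emph{which} $\Delta$ identifiers sit in which slots of the segment and where $g_{k+1}$'s foot is. I claim that if $T=o(w)$ (up to logarithmic factors) there is an assignment of identifiers to segment slots and a choice of foot for which no $S_t$, $t\le T$, singles out exactly one identifier from the $w$-element window around the foot; this is a selector-type counting argument --- robustly waking a gateway whose foot can be anywhere amounts to singling out exactly one element of \emph{every} $w$-subset of $[N]$, and known lower bounds on selective families give size $\Omega(w)=\Omega(\Delta^{1-1/\alpha})$. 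By the hidden-tail principle the execution is identical up to round $T$ with or without the tail behind $g_{k+1}$, so after round $T$ the whole tail is uninformed. Chaining $D$ such gadgets along a line (the $(k{+}1)$-st segment sitting at distance $\approx1-\eps$ beyond $g_{k+1}$) yields a network with $\Theta(D\Delta)\le N$ nodes, diameter $\Theta(D)$, degree $\Theta(\Delta)$, using $\eps$ small and $N=\Omega(D\Delta)$; since $B_{k+1}$ cannot wake before $g_{k+1}$, which in turn needs $\Omega(\Delta^{1-1/\alpha})$ rounds after $B_k$ wakes, the total time is $\Omega(D\Delta^{1-1/\alpha})$. For part (B), a single constant-diameter gadget suffices: a dense cluster of $\Delta$ nodes, all within range $1-\eps$ of the source and all at distance $\Theta(1)$ from one additional target node; the SINR capacity bound ($\mathcal{P}=\mathcal{N}\beta$ with all relevant distances $\Theta(1)$ implies two simultaneous transmitters already yield SINR $<\beta$) shows the target hears only in a round with exactly one cluster transmitter, and the same selector lower bound over the adversarially chosen cluster identifiers costs $\Omega(\Delta)$ rounds. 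Taking the worst of the two constructions proves the theorem.

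The main obstacle is the per-hop lemma, where two requirements pull against each other. On one side is the interference bookkeeping that pins the ``useful'' scale to $\delta=\Theta(\Delta^{-1/\alpha})$ and thereby characterizes exactly which transmission patterns can wake the gateway (in particular, that a dense transmission self-jams and that the window size is $\Theta(\Delta^{1-1/\alpha})$); on the other side is the selectivity lower bound certifying that an identifier-indexed schedule genuinely needs $\Omega(\Delta^{1-1/\alpha})$ rounds to produce a waking pattern for an adversarially placed foot --- this requires the adversary's freedom over identifier placement to be essential, i.e.\ the schedule cannot ``spread out'' in slot order. Finally, one must keep the gadget composable across all $D$ hops: bounded diameter and degree, a genuinely valid hidden-tail invariant (the sleeping tail must not perturb the SINR at any awake node), and the ``$B_k$ wakes uniformly with identical histories'' property --- the last of which constrains the placement so that $g_k$ is heard by the whole of $B_k$ at once and no earlier spurious reception breaks the symmetry.
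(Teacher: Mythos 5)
Your high-level decomposition (prove $\Omega(\Delta)$ and $\Omega(D\Delta^{1-1/\alpha})$ separately and take the max) is fine, and your diameter/interference bookkeeping for chaining gadgets is in the right spirit, but the core of your per-hop lemma has a genuine gap: the reduction to a selector lower bound silently assumes the algorithm is \emph{oblivious}. You assert that ``the set of $B_k$-nodes transmitting in the $t$-th subsequent round is a fixed set $S_t\subseteq[N]$ determined by identifiers and $t$ alone.'' This holds only for the first round after wake-up. From the second round on, the nodes of $B_k$ hear (or fail to hear) \emph{each other}, and with your near-uniform spacing $\Theta(1/\Delta)$ a single local transmitter is heard by some segment-mates and not others; histories then diverge in a position-dependent way, the transmitting sets are no longer ID-indexed, and the counting argument against selective families no longer applies. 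Handling exactly this adaptivity is the hard part of the theorem. The paper's gadget places $v_0,\dots,v_{\Delta+1}$ with \emph{geometrically decreasing} gaps $d(v_i,v_{i+1})=\eps/2^{\Delta-i}$ so that any two simultaneous transmitters $v_i,v_j$ ($i<j$) jam \emph{all} of $v_{j+1},\dots,v_{\Delta+1}$; the adversary then assigns IDs on the fly, two per ``progress round,'' to the first pair of still-unassigned IDs that would transmit together, thereby maintaining the invariant that all unassigned nodes have identical (empty) feedback. That adaptive-adversary construction, not a static selector bound, is what yields $\Omega(\Delta)$ per gadget; the $\Delta^{1/\alpha}$ factor is then lost in the \emph{diameter} (buffer paths of length $\Theta(\Delta^{1/\alpha})$ between gadgets keep cross-gadget interference below the threshold $\nu$), so only $D/\Delta^{1/\alpha}$ gadgets fit — a genuinely different decomposition from your ``$D$ hops at $\Omega(\Delta^{1-1/\alpha})$ each.''

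There is also a quantitative inconsistency in your part (A). Balancing the lone-transmitter signal $\Theta(\mathcal{P}\delta^{-\alpha})$ against the full-segment interference $\Theta(\mathcal{P}\Delta\delta^{1-\alpha})$ gives $\delta=\Theta(1/\Delta)$, not $\Theta(\Delta^{-1/\alpha})$, so the window size collapses to $w=\Theta(\delta\Delta)=\Theta(1)$ and the per-hop bound becomes vacuous. If you instead force $\delta=\Theta(\Delta^{-1/\alpha})$, then the aggregate contribution of transmitters \emph{outside} the window is $\Theta(\mathcal{P}\Delta\delta^{1-\alpha})=\Theta(\mathcal{P}\Delta^{2-1/\alpha})$, which dwarfs the in-window signal $\Theta(\mathcal{P}\Delta)$ — so it is not ``negligible,'' and your characterization of which transmission patterns can wake the gateway fails. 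The exponent $1/\alpha$ enters the true argument differently: a block of $\Delta$ nodes at distance $d$ contributes interference $\Delta\mathcal{P}/d^\alpha$, which drops to $O(\mathcal{P})$ only once $d\gtrsim\Delta^{1/\alpha}$; this is a statement about the separation needed \emph{between} gadgets, not about a reception window \emph{within} one. Your part (B) has the same adaptivity gap (an unstructured dense cluster at mutual distance $\ll 1$ lets nodes break symmetry by overhearing one another), so neither term of the bound is established as written.
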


In order to prove the above theorem, we build a family of networks called
\emph{gadgets}. Each network from the family consists of $\Delta+4$ nodes 
$s, v_0,\ldots,v_{\Delta+1}, t$
located on the line -- see Fig.~\ref{f:gadget} (in Section~\ref{s:lower:bound}) for locations and distances between nodes. Thus, in particular,
$s$ is connected by an edge in the communication graph with
$v_0,\ldots, v_{\Delta+1}$, $t$ is connected merely with $v_{\Delta+1}$ and only
messages from $v_{\Delta+1}$ can be received by $t$. One can show that, for each
deterministic algorithm, one can assign IDs to the nodes $s, v_0,\ldots,v_{\Delta+1}$
such that the broadcast message originating at $s$ will be delivered to $t$
after $\Omega(\Delta)$ rounds.
For the purpose of the lower bound, the main property of 
the locations of nodes in the gadget are:
\begin{enumerate}
\item[(a)]
The node $v_i$ does not receive any message, provided at least two nodes
from the set $\{v_j\,|\,j<i\}$ transmit at the same time. 
\item[(b)]
The node $t$ can receive a message from $v_{\Delta+1}$ ($v_{\Delta+1}$ is the only node from the gadget in distance $\le 1$ from $t$) only in the case that no other node from the gadget transmits at the same time.
\end{enumerate}
Using (a), we assign IDs to consecutive nodes $v_1, v_2, \ldots$ such that,
after $i$ rounds, the nodes $v_k$ for $k>2i$ do not have any information about
their location inside the set $\{v_l\,|\,l>2i\}$. Thus, the only available information
differentiating them are their IDs. As a result, we prevent a transmission of 
$v_{\Delta+1}$ in a round in which other nodes from the gadget do not transmit
for $\Omega(\Delta)$ rounds.

A natural idea to generalize such a result to $\Omega(D\Delta)$ is to connect
sequentially consecutive gadgets such that the target ($t$) of the $i$th gadget is identified with the source $s$ of the $(i+1)$st gadget. Such approach usually works
in the radio networks model, where no distant interferences appear. However, under the SINR constraints, the interference from distant nodes might potentially help, since it differentiate history of communication for nodes in various locations
(even the close ones). Therefore, the applicability of this idea under the SINR
constraints is limited.
Instead of identifying the target $t$ of the $i$th gadget with the source
of the $(i+1)$st gadget $s$, we separate each two consecutive gadgets by 
a long ``sparse'' path of nodes such that consecutive nodes are in distance $1-\eps$
(Fig.~\ref{f:siec}). 
In this way, we limit the impact
of nodes located outside of a gadget on communication inside the gadget.
Finally, our result is only $\Omega(D\Delta^{1-1/\alpha})$.
Below, we give a full formal proof following the above ideas.


\newcommand{\lowerboundproof}{	
First, we build a gadget (sub)network consisting of $\Delta+4$ nodes 
$s$, $v_0,\ldots,v_{\Delta+1}$, $t$ located 
on the line -- see Figure~\ref{f:gadget} and Figure~\ref{f:gadget-core} for locations of nodes of a gadget. 
%
The notion \emph{gadget} denotes actually a family of (sub)networks with location of nodes described on Figure~\ref{f:gadget} and Figure~\ref{f:gadget-core} and aribtrary IDs in the range $[N]$.
%
%
Observe that the diameter $D$ of the gadget network is equal to $2$ (indeed, $s$ is connected to $v_0,\ldots, v_{\Delta+1}$, $(v_i,v_j)$ is and edge for each $i\neq j$ and $t$ is connected with $v_{\Delta+1}$).
In Lemma~\ref{l:gadget} we show that any deterministic algorithm needs $\Omega(\Delta)$ rounds to deliver a message to $t$ (the \emph{target}), which originally is known only to $s$
(the \emph{source}). 
More precisely, we show that there exists an assignment of IDs for the nodes of the gadget such that delivery of a message from $s$ to $t$ takes $\Omega(\Delta)$ rounds. 
As our ultimate goal is to analyse networks which contain gadgets as
subnetworks, we make an additional assumption that a limited interference from outside of the
gadget might appear.

\begin{figure}[H]
	\centering
  \includegraphics[width=0.7\textwidth]{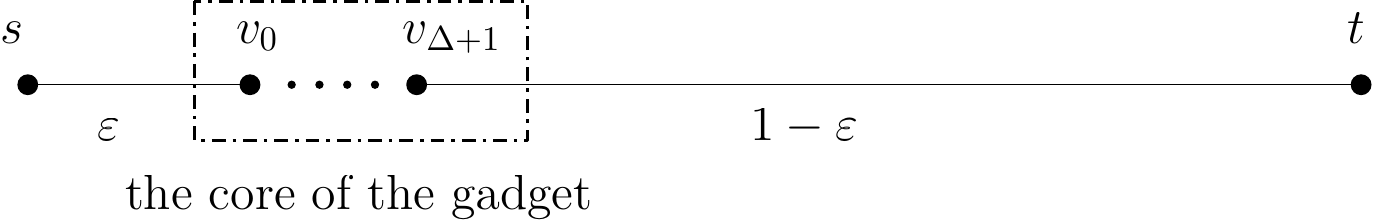}
  \caption{The gadget. The nodes $s$ and $t$ are called the source and the target of a given gadget. Importantly, $d(x,t)>1$ for each $x$ from the gadget except of $v_{\Delta+1}$. 
	\label{f:gadget}}
\end{figure}

\begin{figure}[H]
  \includegraphics[width=1.0\textwidth]{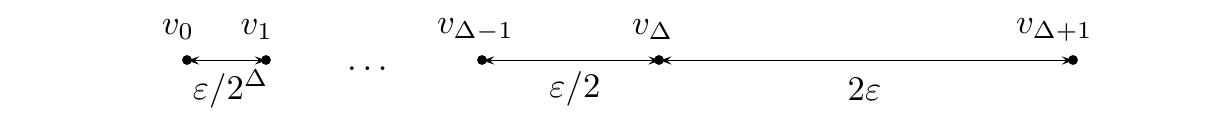}
  \caption{The core of the gadget, $d(v_i, v_{i+1}) = \eps/2^{\Delta-i}$ for $i<\Delta$. Observe, that $2\eps<d(v_0,v_{\Delta+1})<3\eps$. \label{f:gadget-core}}
\end{figure}


\begin{lemma}
\labell{l:gadget} For any deterministic algorithm $\mathcal{A}$ there exists a choice of identifiers for the nodes in the gadget such that the following holds. 
Let $\nu$ be the number satisfying 
$\frac{P/(4\eps)^\alpha}{\cN+\nu}=\beta$
and let $I\subseteq [N]$ be a set of {allowed} IDs
such that $|I|\ge \Delta+4$.
Assume that the interference coming from the outside of the core of the gadget $G$ 
in any node of the core 
is smaller than $\nu$ in every round. 
Moreover, assume that $s$ is the only awake node of the gadget at a given
starting round.
Then, it takes $\Omega(\Delta)$ rounds until the broadcast message 
is delivered to $t$ using algorithm $\mathcal{A}$.
\end{lemma}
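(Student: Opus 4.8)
The plan is to use an adversary argument that maintains an invariant: for as long as possible, a large ``tail'' of nodes $v_k$ (those with large index $k$) cannot distinguish their actual position within the tail, so from the algorithm's point of view they are interchangeable except for their IDs. I would first fix the adversary's freedom: the location of the nodes is determined (Figure~\ref{f:gadget-core}), but the assignment of IDs from the allowed set $I$ to the nodes $v_0,\dots,v_{\Delta+1}$ is chosen by the adversary, reacting to the (deterministic) algorithm $\mathcal{A}$. Because $\mathcal{A}$ is deterministic, the behavior of a node in a given round is a function only of its ID, the current round number, and the history of messages it has received so far; a node that has never received a message behaves identically regardless of its position, so it acts purely as a function of its ID.

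The key geometric facts I would establish first, from the distances $d(v_i,v_{i+1})=\eps/2^{\Delta-i}$ and the bound on outside interference $\nu$, are exactly properties (a) and (b) from the discussion above: (i) if two or more nodes among $\{v_j : j<i\}$ transmit simultaneously, then $v_i$ receives nothing, because the nearer of the two interferers sits at distance at most roughly $\eps/2^{\Delta-i}\cdot(\text{const})$ while $v_i$'s intended signal (say from $s$, at distance $\Theta(1)$, or from another $v_j$) is comparably weak or weaker, so the SINR falls below $\beta$ — here I must check the geometric-series telescoping carefully so that the dominant interference term indeed swamps any desired signal, and that the assumed external interference $<\nu$ does not change this; and (ii) $t$ hears a message only if $v_{\Delta+1}$ transmits alone among all gadget nodes (since all $v_j$ are within distance $\le 1$-ish of... actually only $v_{\Delta+1}$ is within $1$ of $t$, but the others still contribute interference that, combined with $\nu$, is engineered via the choice of the constant in $d(v_{\Delta+1},t)$ and the definition of $\nu$ to block reception unless $v_{\Delta+1}$ is the unique transmitter).

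Then I would run the adversary inductively. Initially only $s$ is awake and all $v_j$ are asleep; at the first useful round $s$ transmits (otherwise nothing happens and time passes for free), and since $s$ is the unique transmitter every awake $v_j$ that is scheduled to listen wakes up — but crucially they all receive the \emph{same} message, so after round $1$ every $v_j$ is in the same state as a function of its ID alone. Inductively, maintain the invariant that after $t$ rounds there is a set $T_t$ of ``undifferentiated'' tail nodes, of size $\ge \Delta+1-2t$ (or similar), all of whose members have received exactly the same sequence of messages so far and hence are running the same ID-parametrized program. In each round, among the nodes currently transmitting, consider how many lie in $T_t$: if two or more members of $T_t$ transmit, then by fact (i) none of the later tail nodes receives anything new, so the tail shrinks by at most the $O(1)$ transmitters plus possibly one node that could now be differentiated; if at most one member of $T_t$ transmits, the adversary assigns IDs so that this one transmitter is \emph{not} the node destined to be $v_{\Delta+1}$ (it has the freedom to place IDs since the transmitting node is determined by ID, and $|I|$ is large enough to avoid forcing $v_{\Delta+1}$). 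Either way $v_{\Delta+1}$ is prevented from being the unique gadget transmitter, so $t$ hears nothing, and $|T_{t+1}|\ge |T_t|-O(1)$. Hence it takes $\Omega(\Delta)$ rounds before the tail is exhausted, and only then can the adversary be forced to let $v_{\Delta+1}$ transmit alone; this gives the bound.

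The main obstacle I anticipate is the bookkeeping in the inductive step — precisely formalizing ``undifferentiated'' and showing the adversary can always steer the ID assignment to keep $v_{\Delta+1}$ out of any singleton transmission set while the tail is still large, without the algorithm being able to exploit distant (outside-gadget) interference to break symmetry. The doubly-exponentially-decreasing distances $\eps/2^{\Delta-i}$ are chosen exactly so that the interference created by even one ``early'' transmitter dominates at every later node simultaneously, which is what makes fact (i) uniform in $i$; getting the constants in the SINR inequality to line up against both the desired signal and the external term $\nu$ is the delicate computation, but it is routine once the right telescoping estimate is written down. The symmetry/indistinguishability part is conceptually the heart and is where I would spend the most care.
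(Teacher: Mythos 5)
Your proposal is correct and follows essentially the same route as the paper: the same two geometric facts about the exponentially decreasing gaps (any two transmitters among $v_0,\dots,v_j$ silence everything after $v_j$, and $t$ hears only a solo transmission of $v_{\Delta+1}$), combined with an adversarial ID assignment that maintains an indistinguishable tail of positions receiving identical feedback and peels off $O(1)$ IDs per ``collision'' round, so $v_{\Delta+1}$'s identity stays undetermined for $\Omega(\Delta)$ rounds. The paper formalizes your tail-shrinking step by assigning exactly two IDs (the first colliding pair, or the unique would-be transmitter plus an arbitrary ID) to the earliest unassigned positions $v_{2a},v_{2a+1}$ at each such round, which is precisely the bookkeeping you anticipated needing.
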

\begin{proof}
The fact that the sequence of distances $d(v_0,v_1), d(v_1,v_2), \ldots$ forms a geometric
sequence implies the following property, provided $\eps>0$ is a small enough constant.
  \begin{fact}
  \labell{f:core}
    \begin{enumerate}
		\item
		If the nodes $v_i$ and $v_j$, for $i<j$ are transmitting 
		in a round then the nodes $v_{j+1},...,v_{\Delta+1}$ 
		do not receives a message in that round.
		\item
		If $v_{\Delta+1}$ is not the only transmitter from the set $\{v_0,\ldots,v_{\Delta+1}\}$ in a round, then $t$ does not receive a message in that round.
		\end{enumerate}
  \end{fact}
  
As $\{v_0,\ldots,v_{\Delta+1}\}$ are asleep at the beginning,
they are all awaken in a round in which $s$ sends its first message, thanks to the fact that interference from outside of the gadget's core is bounded by $\nu$.
In order to simplify notations, assume that the number
of the round in which $v_0,\ldots,v_{\Delta+1}$ receive 
the first message from $s$ is equal to $0$.
Recall that $v_{\Delta+1}$ is the only node 
from the gadget that can pass a message to the target $t$ (c.f.\ Figure~\ref{f:gadget} and Figure~\ref{f:gadget-core}). 
Our goal is to gradually assign IDs to $v_0, v_1, v_2,\ldots$ such that,
in each round $j\in[i/2]$, 
\begin{itemize}
\item
either no node from the set $\{v_0,\ldots,v_{\Delta+1}\}$ transmits,
\item
or at least two nodes from $\{v_0,\ldots,v_{i}\}$ transmit.
\end{itemize}
In this way the assignment of IDs prevents $v_{\Delta+1}$ from being the unique transmitter from $\{v_0,\ldots,v_{\Delta+1}\}$ for $\Omega(\Delta)$ rounds.
This fact in turn guarantees that $t$ does
not receive any message in $\Omega(\Delta)$ rounds.

Now, we describe the process of assigning IDs to 
$v_0,\ldots,v_{\Delta+1}$. 
Let $I_0=I$ of size $\ge \Delta+4$ be the set of allowed IDs at the beginning. 

Let $r_0^i>0$ be the smallest round number in which the node with ID equal to $i$ transmits,
provided $s$ is the only node from the gadget (possibly) sending messages before.
Let $r_0$ be the smallest round number $r>0$ such that $\{i\,|\, r_0^i=r\}\neq\emptyset$.
If $|\{i\in I_0\,|\, r_0^i=r_0\}|=1$, we assign the IDs $j,k$ to $v_0,v_1$, where $j$
is the only element of  $\{i\in I_0\,|\, r_0^i=r_0\}$ and $k$ is an arbitrary
element of $I_0$ different from $j$. 
If $|\{i\in I_0\,|\, r_0^i=r_0\}|>1$, we assign the IDs $j,k$ to $v_0,v_1$, where $j\neq k$
are arbitrary elements of $\{i\in I_0\,|\, r_0^i=r_0\}$. 
Then, we set $I_1=I_0\setminus\{j,k\}$.
Let $u$ be a node with ID in $I_1$ located on a position of one of nodes
from $\{v_2,\ldots,v_{\Delta+1}\}$.
Then: 
(i)~$u$ does not transmit in rounds $r\le r_0$
in which less than two nodes from $\{v_0,v_1\}$ transmit;
(ii)~the feedback which $u$ gets from the 
communication channel until the round $r_0$
does not depend on the actual location of $u$ (by (i) and Fact~\ref{f:core}.1).

Now, inductively, assume that the IDs of $v_0,v_1,\ldots,v_{2a-1}$, the round
number $r_a\ge a$ and $I_a\subset I$ of size $\ge \Delta+4-2a$ are fixed for $a>0$
such that, for each $u$ with ID in $I_a$ located in a position of one of nodes
 $v_{2a},\ldots,v_{\Delta+1}$, the following holds:
(i)~$u$ does not transmit in rounds $r\le r_a$
in which less than two nodes from $\{v_0,\ldots,v_{2a-1}\}$ transmit;
(ii)~the feedback which $u$ gets from the 
communication channel until the round $r_a$
does not depend on the actual location of $u$. 

Now, we assign IDs to $v_{2a}$ and $v_{2a+1}$, set $r_{a+1}>r_a$, and $I_{a+1}$.  
Let $r_{a+1}^i>r_a$ for $i\in I_a$ be the smallest round number larger than $r_a$ in which
the node with ID equal to $i$ transmits, provided the above assumption (i) and (ii) are
satisfied and the only transmitters in rounds $r>r_a$ belong to $\{s,v_0,\ldots,v_{2a-1}\}$.
Let $r_{a+1}$ be the smallest round number $r>r_a$ such that $\{i\,|\, r_{a+1}^i=r\}\neq\emptyset$.
If $|\{i\in I_a\,|\, r_{a+1}^i=r_{a+1}\}|=1$, we assign the IDs $j,k$ to $v_{2a},v_{2a+1}$, where $j$
is the only element of  $\{i\in I_a\,|\, r_{a+1}^i=r_{a+1}\}$ and $k$ is an arbitrary
element of $I_a$ different from $j$. 
If $|\{i\in I_a\,|\, r_{a+1}^i=r_{a+1}\}|>1$, we assign the IDs $j,k$ to $v_{2a},v_{2a+1}$, where $j\neq k$
are arbitrary elements of $\{i\in I_a\,|\, r_{a+1}^i=r_{a+1}\}$. 
Then, we set $I_{a+1}=I_a\setminus\{j,k\}$.

As one can see, the choice of IDs for $v_{2a}$ and $v_{2a+1}$, 
the value of $r_{a+1}>r_a$, and $I_{a+1}$
of size $\ge \Delta+4-2(a+1)$ guarantee that, 
for each $u$ with ID in $I_{a+1}$ located in a position of one of the nodes
 $v_{2a+2},\ldots,v_{\Delta+1}$, the following holds:
(i)~$u$ does not transmit in rounds $r\le r_{a+1}$
in which less than two nodes from $\{v_0,\ldots,v_{2a+1}\}$ transmit;
(ii)~the feedback which $u$ gets from the 
communication channel until the round $r_{a+1}$
does not depend on the actual location of $u$. 

By assigning IDs in the above way we assure that 
$v_{\Delta+1}$ is not the unique transmitter in the gadget core in $\Omega(\Delta)$ rounds after the first message from $s$, thus a message is delivered to $t$ after $\Omega(\Delta)$ rounds.

\comment{
We can assume that all nodes from the set $\{v_0,\ldots,v_{\Delta+1}\}$ receive the first message
transmitted by $s$, thanks to the fact that interference (from outside) inside the gadget's core is bounded by $\nu$ and no node from the core of the gadget is active
before receiving a message from $s$.
Thus, reception of a message from $s$ does not give information which would help the nodes $v_0,\ldots,v_{\Delta+1}$ determine their own positions in the sequence
$v_0,\ldots,v_{\Delta+1}$.

Let $T^{(0)}$ be the schedule describing performance of nodes (IDs) from
the core of the gadget starting in the 
round in which nodes from the core of the gadget receive the first message from $s$,
provided no other message has been received by them.
%
%
According to this definition, a node located in the core with ID equal to $i$ transmits in round $r$ after reception of the first message from $s$ (under the assumptions mentioned before) iff $T^{(0)}_{i,r}=1$. 

Let $r_0$ be the first round in which two nodes (IDs) would transmit (if they were located in the core). That is, $_0$ is the smallest $r$ such that there exist $i \neq j$ with $T^{(0)}_{i,r}=T^{(0)}_{j,r}=1$. We assign such IDs $i,j$ to nodes $v_0,v_1$. 
Observe that, according to the choice of $r_0$ there were at most one prospective transmitter among \emph{all} possible IDs (not assigned yet) for each round $r<r_0$. We remove them from the set $I$ of available IDs. 
During the whole process we will discard at most $\Delta$ IDs for being unique transmitter, so there is no reason to worry about the size of the set of identifiers.

Observe that, in some following rounds the nodes $v_0,v_1$ may be unique transmitters in the core, and this might affect the behavior of nodes $v_2,...,v_{\Delta+1}$. 
Thus, we construct new schedule $T^{(1)}$ which takes behavior of $v_0$ and $v_1$ with assigned IDs $i,j$ into account. However, the distances between nodes and the limit  
$\nu$ on external interference guarantee that, if some node from $\{s,v_0,v_1\}$ transmits 
a message, the transmitted message is received either by \emph{all} or \emph{none} of $v_2,...v_{\Delta+1}$. 
%
\footnote{We take care of this property, since the situation where nodes $v_0,....,v_k$ receive a message from, say $s$, and $v_{k+1},...,v_{\Delta+1}$ do not, might possibly help the algorithm in breaking the symmetry faster.
Similarly, we want to avoid the scenario that a node $v_i$ with determined ID transmits uniquely its message is received by some nodes $v_k,...,v_l$ and not by $v_{l+1},...,v_{\Delta+1}$ 
(this would help in breaking the symmetry among nodes $v_k,\ldots,v_{\Delta+1}$). 
This subtle issue does not occur in models without distant interference, e.g. radio networks, where one can join gadgets one by one, and achieve the lower bound $\Omega(D\Delta)$.}

Now, we look at the schedule $T^{(1)}$.
That is the fact that (all or none from $v_2,...$) nodes can receive messages from nodes $s,v_1,v_2$.
We perform similarly with $T^{(1)}$, and look for the first round $r_1$ when there is a collision, i.e., there are at least two IDs of nodes which transmit if they are located
on positions from the set $\{v_2,\ldots,v_{\Delta+1}\}$. We assign the colliding IDs to nodes $v_2$ and $v_3$. 
We also discard transmitters/IDs that are unique transmitters between round $r_0$ and $r_1$ from the set of available identifiers. Then, we continue the process analogously for $T_{(2)}$ and so on.

If at some point we already know that the algorithm needs $\Omega(\Delta)$ rounds 
(that is, when $r_k = \Omega(\Delta)$) to deliver
a message to $t$, we stop.
Observe that, in such situations, \emph{any} choice of available identifiers from $I$ is good for nodes with unfixed IDs (this is because we discarded IDs that were unique transmitters before $r_k$, so 
$v_{\Delta+1}$ will not be able to transmit to $t$ before $r_k$ anyway).

If we assigned all IDs during the process, we know that $v_{\Delta+1}$ is not a unique transmitter in the gadget core in $\Omega(\Delta)$ rounds after the first message from $s$, thus a message is not delivered to $t$ in $\Omega(\Delta)$ rounds.
}
\end{proof}
Lemma~\ref{l:gadget} gives the lower bound $\Omega(\Delta)$ for the global broadcast.
In order to extend the lower bound to take the network diameter into account, we build a network by joining gadgets sequentially. 
However, the following problem appears in such approach. It is vital for our argument that, in each round, either all nodes from the core of a gadget receive a fixed message or none of them receives any message. If the network consists of more than one gadget, then the interference from other parts of the network might cause that only a subset of the nodes from the core receive a message.
This in turn might help to break symmetry and resolve contention in the core in $o(\Delta)$ rounds. 

To overcome the presented problem, we place ``a buffer zone'' between two consecutive gadgets.
The ``buffer zone'' is a long path of nodes mitigating the interference between two consecutive gadgets (see Fig.~\ref{f:siec}).
In this way we can bound the maximal interference coming from the outside of the gadget to $\nu$
(where $\nu$ is the constant from Lemma~\ref{l:gadget}). 
Given that, we are able to use Lemma~\ref{l:gadget} in networks containing many gadgets.

More precisely, we put a path of $\kappa=\Delta^{1/\alpha}/(1-\eps)$ nodes between two consecutive gadgets to compensate for the interference on a gadget's core caused by potential transmitters from previous gadgets (see Figure~\ref{f:siec} for exact composition of gadgets).

\begin{figure}[H]
  \includegraphics[width=1.0\textwidth]{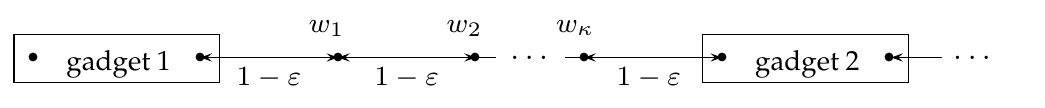}
  \caption{Network of diameter $\Theta(D)$ formed by $D/\kappa$ gadgets, where $\kappa=\frac{\Delta^{1/\alpha}}{1-\varepsilon}$.\label{f:siec}}
\end{figure}

The choice of the value of $\kappa$ follows from the fact that each gadget consists of about $\Delta$ nodes, while we want to limit the interference from neighbouring gadgets to $O(1)$.
The interference from $\Delta$ nodes in distance $d$ from the receiver is equal to $\Delta P / d^\alpha$. This value is $O(1)$ provided $d\ge \Delta^{1/\alpha}$.
Since the path separating two gadgets ``wastes'' about $\Delta^{1/\alpha}$ of the diameter, one can see that we can put as much as $\frac{D}{\Delta^{1/\alpha}}$ of such ``separators'' on a line, 
each followed by a gadget. 
This leads to the lower bound $\Omega(\frac{D}{\Delta^{1/\alpha}}\Delta) = \Omega(D\Delta^{1-1/\alpha})$.
Theorem~\ref{t:lower:bound} follows from
Lemma~\ref{l:gadgets:combined}, which formalizes the above described idea.
\begin{lemma} \labell{l:gadgets:combined}
Consider a network of diameter $D$ and density $\Theta(\Delta)$ formed by $D/\kappa$ gadgets of size $\Delta+4$ located on the line such that consecutive gadgets are ``separated'' by the path of $\kappa$ nodes $w_1,...,w_\kappa$, where $d(w_i,w_{i+1})=1-\eps$ (see Fig.~\ref{f:siec}). 
For any deterministic global broadcast algorithm $\mathcal{A}$, there exists a choice of IDs for the nodes in this network such that 
$\mathcal{A}$ works in time $\Omega(D\Delta^{1-1/\alpha})$.
\end{lemma}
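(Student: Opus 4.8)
The plan is to chain the single–gadget lower bound of Lemma~\ref{l:gadget} across the $D/\kappa$ gadgets, using the separating paths of length $\kappa=\Delta^{1/\alpha}/(1-\eps)$ to guarantee that the interference any gadget's core receives from the rest of the network stays below the threshold $\nu$ of that lemma. First I would give each gadget its own pool of $\Delta+4$ admissible identifiers: the whole network has $(D/\kappa)(\Delta+4)+D=\Theta(D\Delta^{1-1/\alpha})$ nodes, which is $o(D\Delta)$, so under the hypothesis $N=\Omega(D\Delta)$ disjoint pools of this size exist, and the leftover identifiers are assigned arbitrarily to the path nodes. Inside gadget $i$ the identifiers will be fixed by the adversarial process from the proof of Lemma~\ref{l:gadget}; because the network is a line and wake-up is non-spontaneous, every node of gadget $i$ remains asleep until the broadcast message reaches its source $s_i$, so this process can be run for the gadgets in the order in which the message reaches them, and the ID choice for a gadget never has to anticipate the IDs of gadgets it will only influence later.

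The crucial step is the interference estimate: for a sufficiently small constant $\eps$, the total interference received at any core node of any gadget from every node outside that gadget's core — the other gadgets, all the separating paths, and the two auxiliary nodes $s_i,t_i$ of the gadget itself — stays below $\nu$ at every round, regardless of who transmits. A gadget $j$ hops away has $\Theta(\Delta)$ nodes and its core lies at Euclidean distance $\Omega(j\Delta^{1/\alpha})$ from the core of gadget $i$ (each gadget-plus-separator block has length $\Theta(\Delta^{1/\alpha})$), so it contributes at most $\Theta(\Delta)\cdot\mathcal{P}/(j\Delta^{1/\alpha})^\alpha=\Theta(\mathcal{P})/j^\alpha$; since $\alpha>2$, summing over $j\ge1$ gives $O(\mathcal{P})$. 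The $m$-th nearest separating-path node sits at distance $\Omega(m)$ (consecutive path nodes are $1-\eps$ apart), so the paths contribute at most $\sum_{m\ge1}\mathcal{P}/(m(1-\eps))^\alpha=O(\mathcal{P})$, and $s_i,t_i$ add only $O(1)$ more. Hence the external interference is at most $c\,\mathcal{P}$ for an absolute constant $c$. On the other hand $\nu=\mathcal{P}/(\beta(4\eps)^\alpha)-\cN\to\infty$ as $\eps\to0$, so taking $\eps$ small enough forces $c\,\mathcal{P}<\nu$ — this is exactly where the hypothesis ``$\eps>0$ small enough'' is used.

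Given the interference bound, each gadget $i$ is, as seen by its own core nodes, in precisely the setting of Lemma~\ref{l:gadget}: the bound $\nu$ guarantees that the link $s_i\to v_j^{(i)}$ clears the SINR threshold for every core node, so all $\Delta+2$ core nodes wake up in the same round (the one in which $s_i$ first transmits), and the geometric spacing $d(v_j,v_{j+1})=\eps/2^{\Delta-j}$ of Fact~\ref{f:core}, together with the near-uniformity of the (distant) external interference over the tiny core, leaves enough margin that this interference does not alter which intra-gadget messages are received. Thus Lemma~\ref{l:gadget} applies with the chosen IDs: for $\Omega(\Delta)$ rounds after the message reached $s_i$, $v_{\Delta+1}^{(i)}$ is never the unique transmitter of the core, so by the interference argument behind Fact~\ref{f:core} no node outside the core can receive from it during that period, and in particular the message does not leave gadget $i$. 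Writing $T_i$ for the round the message is delivered to $s_i$ (with $T_1=0$), traversing gadget $i$ and its outgoing path gives $T_{i+1}\ge T_i+\Omega(\Delta)$, so completing broadcast takes at least $(D/\kappa)\cdot\Omega(\Delta)=\Theta(D/\Delta^{1/\alpha})\cdot\Omega(\Delta)=\Omega(D\Delta^{1-1/\alpha})$ rounds; one also checks that this network has diameter $\Theta(D)$ and density $\Theta(\Delta)$, as required. I expect the main obstacle to be the interference estimate of the second paragraph, and in particular the verification that the $<\nu$ bound is robust enough for Lemma~\ref{l:gadget} to be invoked verbatim inside each gadget despite the additional distant transmitters, and that the adversarial ID construction can be carried out consistently for all gadgets in a single global pass.
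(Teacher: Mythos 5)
Your proposal is correct and follows essentially the same route as the paper: bound the external interference at each gadget's core by a convergent sum over the separating-path nodes and the other gadgets (using $\alpha>2$ and the $\Theta(\Delta^{1/\alpha})$ spacing), compare it with $\nu=\Omega(\eps^{-\alpha})$ for small $\eps$, and then chain Lemma~\ref{l:gadget} over the $D/\kappa$ gadgets. Your extra care about disjoint ID pools and processing gadgets in the order the message reaches them is a reasonable elaboration of a step the paper leaves implicit, not a different argument.
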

\begin{proof}
First, we prove an auxiliary fact which gives an opportunity to use Lemma~\ref{l:gadget} in multi-gadget networks from Fig.~\ref{f:siec}.
	
	\begin{fact}
		For any gadget $G$ in the network, the interference caused by the nodes outside of the gadget is at most $\nu$ at any point $v_0,v_1,...,v_{\Delta+1}$ from the core of the gadget.
	\end{fact}
	
	\begin{proof}
		All nodes that may interfere are located on the left side of $G$ (i.e., are closer to the source than $G$), we split them in two groups. 
		(Note, that we do not consider the node $s$ from the gadget $G$ as the interferer nor as the core node (c.f.~Fig.~\ref{f:gadget})).
		The first group $T_1$ consists of a path of $\kappa$ nodes that are closest to $G$, the rest of nodes located to the left of the gadget 
		(i.e., the nodes closer to the source than the gadget $G$) belong to the second group $T_2$. 
		The maximal interference caused by nodes from $T_1$ is at most $\sum_{i=1}^{\kappa}\frac{P}{(1-\eps)^\alpha i^\alpha}$.
		
		Now, we bound the interference from $T_2$. 
		First, we estimate the interference from the $k$th gadget to the left 
		of $G$ and from the path separating the $k$th gadget and the $(k-1)$st gadget to the left of $G$. 
		There are $\Delta+\Delta^{1/\alpha}\le 2\Delta$ nodes in this part of the network,
		in distance at least $k\Delta^{1/\alpha}$ from the core of the gadget $G$. This gives 
		the interference 
		smaller than $\frac{2\Delta P}{(k\Delta^{1/\alpha})^\alpha}=2P/k^\alpha$.
		Thus, as there are at most $D/\kappa$ gadgets, the interference from $T_2$ is at most $\sum_{k=1}^{D/\kappa} 2P/k^\alpha$. 
		The total interference $I$ at $G$ is at most 
		$$I\le \sum_{i=1}^{\kappa}\frac{P}{(1-\eps)^\alpha i^\alpha} + \sum_{k=1}^{D/\kappa} 2P/k^\alpha$$ which is $O(1)$ 
		with respect to the parameters $\eps, D$ and $\Delta$. 
		On the other hand, $\nu=\Omega\left(\frac1{\eps^\alpha}\right)$.
		Thus, for sufficiently small values of $\eps$ we have $I\le\nu$. 
	\end{proof}
	
	There are $D/\kappa$ gadgets, each of size $\Delta$ and, by Lemma~\ref{l:gadget}, it takes $\Omega(\Delta)$ to push a message through each gadget. Thus, we have a lower bound of $\Omega(D\Delta/\kappa)=\Omega(D\Delta^{1-1/\alpha})$.
\end{proof}

}

\subsection{Proof of Theorem~\ref{t:lower:bound}}
\lowerboundproof
	
	\section{Conclusions}
We have shown that it is possible to build efficiently a clustering of an ad hoc network without use of randomization in a very harsh model of wireless SINR networks. Using the clustering algorithm, we developed a very efficient local broadcast algorithm. On the other hand, by an appropriate lower bound, we  exhibited importance of randomization or availability of 
coordinates for complexity of the global broadcast problem.
The exact complexity of global broadcast remains open as well as the impact of other features, e.g., carrier sensing or power 
\ifshort
control.
\else
control in
the deterministic setting.
\fi

We developed a combinatorial structure called \emph{witnessed (cluster aware) strong selector} along with several communication primitives, 
\ifshort
which might be of independent interest.
\else
which allowed for improving the complexity of our algorithms, and might be of independent interest.
\fi
\comment{
in a very harsh 
In the paper we presented two solutions to broadcast and local broadcast with complexity, respectively $O(n\log^2 n), O(D(\Delta+\log^* n)\log n)$ and $O((\Delta+\log^* n)\log^2 n)$. We gave a lower bound $\Omega(D\Delta^{1-1/\alpha})$ for the broadcast problem which separates determinism from randomness in an ad hoc setting. The local broadcast algorithm is a first deterministic solution to the problem and achieves the complexity which is close to best known randomized solutions. Moreover, we developed a combinatorial structure called \emph{eliminating selector} along with several communication primitives, which allowed for improving the complexity of our algorithms, and are of independent interest. 
}
	


	\newpage
	\section{Appendix}
	\subsection{Proofs for Section~\ref{s:preliminaries}}
\lemmaApp{l:density:close}{\ldensityclose}
\begin{proof}
The item 2.\ follows directly from the definitions of a dense cluster, the
function $\chi$ and $d_{\Gamma,r}$.

For item 1., assume that $\cB=B(x,1)$ is dense. Then, there are at least
$\Gamma/2$ nodes in $\cB$ and therefore the smallest distance between nodes located in $\cB$ is at most $d_{\Gamma,1}$. Let $u_0,v_0\in \cB$ be a pair
of nodes in the smallest distance in $\cB$, $d=d(u_0,v_0)= \zeta d_{\Gamma,1}$
for $0\le \zeta\le 1$. If $d(u',v')\ge d/2$ for each 
$u',v'\in B(u_0,\zeta)\cup B(v_0,\zeta)\subset B(u,\zeta+d)$
then $u_0,v_0$ is a close pair.
Otherwise, let $u_1, v_1$ be a pair of nodes in $B(u_0,\zeta+d)$
in distance $\le d/2$. In this way, we can build a sequence of 
different pairs $(u_i,v_i)$ for $i\ge 0$ such that $d(u_i,v_i)\le d/2^i$,
$$u_i,v_i\in B(x, 1+ (\zeta+d)\cdot\sum_{i\ge 0}1/2^i)\subset B(x,1+2(\zeta+d))\subset B(x,5).$$
Eventually, a close pair
$u_j,v_j$ will appear in the sequence, as the number of nodes in $X$ is finite.
And, $u_j,v_j\in B(x,5)$.
\end{proof}

\subsection{Proofs for Section~\ref{s:comb:tools:sinr}}
The proofs of technical lemmas given in this section rely on the polynomial
attenuation of signals with the power $\alpha>2$.

First, we give an estimation of the interference coming from
a set of ``distant nodes'' with limited density.
\begin{proposition}\labell{interEST}
	Let $\calT$ be a set of transmitting nodes, let $v$ be a point on the plane
	and let $x$ be a positive natural number such that
	any ball of radius $r$ contains at most $\delta$ elements of $\calT$
	and $B(v,xr)$ does not contain elements of $\calT$. 
	Then, the overall strength $I(v)=\sum_{u\in\calT}\mathcal{P}\dist(u,v)^{-\alpha}$ 
	of signals from the set $\calT$ received at $v$ is
	$O\left(\frac{\mathcal{P}r^{-\alpha}x^{-\alpha+2}\delta}{\alpha-2}\right)$.
\end{proposition}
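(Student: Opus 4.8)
The plan is to bound $I(v)$ by decomposing the plane into annuli centered at $v$. For each integer $j\ge x$ set $A_j=\{u\,|\,jr\le \dist(u,v)<(j+1)r\}$. Since $B(v,xr)$ contains no element of $\calT$, every $u\in\calT$ lies in some $A_j$ with $j\ge x$, so $I(v)=\sum_{j\ge x}\sum_{u\in\calT\cap A_j}\mathcal{P}\,\dist(u,v)^{-\alpha}$, and it suffices to bound each inner sum and then sum a convergent tail.

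The key geometric step is the bound $|\calT\cap A_j|=O(j\delta)$. To see it, take a maximal $r$-separated subset $P\subseteq A_j$ (no two points within distance $r$). The open balls of radius $r/2$ around the points of $P$ are pairwise disjoint and all contained in the fattened annulus $\{u\,|\,(j-\tfrac12)r\le \dist(u,v)\le (j+\tfrac32)r\}$, whose area is $O(jr^2)$; since each such ball has area $\Omega(r^2)$, we get $|P|=O(j)$. By maximality of $P$, the balls $\{B(p,r)\,|\,p\in P\}$ cover $A_j$, so $\calT\cap A_j\subseteq\bigcup_{p\in P}B(p,r)$; as each ball of radius $r$ contains at most $\delta$ elements of $\calT$ by hypothesis, $|\calT\cap A_j|\le|P|\,\delta=O(j\delta)$.

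Finally, every $u\in A_j$ has $\dist(u,v)\ge jr$, so it contributes at most $\mathcal{P}(jr)^{-\alpha}$ to $I(v)$, and the total contribution of $A_j$ is $O\!\left(j\delta\cdot\mathcal{P}(jr)^{-\alpha}\right)=O\!\left(\delta\,\mathcal{P}\,r^{-\alpha}\,j^{1-\alpha}\right)$. Summing over $j\ge x$ and using $\alpha>2$ (hence $1-\alpha<-1$), comparison with the integral gives $\sum_{j\ge x}j^{1-\alpha}\le x^{1-\alpha}+\int_{x}^{\infty}t^{1-\alpha}\,dt=x^{1-\alpha}+\tfrac{x^{2-\alpha}}{\alpha-2}=O\!\left(\tfrac{x^{2-\alpha}}{\alpha-2}\right)$ (absorbing $x^{1-\alpha}\le x^{2-\alpha}$ into the constant). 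Multiplying through yields $I(v)=O\!\left(\frac{\mathcal{P}\,r^{-\alpha}\,x^{-\alpha+2}\,\delta}{\alpha-2}\right)$, as claimed.

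I expect the only genuinely delicate point to be the $O(j)$ covering/packing count for the annulus: it is essential that one packs radius-$r/2$ balls inside a \emph{fattened} annulus of area only $O(jr^2)$ (packing inside $B(v,(j+1)r)$ instead would give the useless bound $O(j^2\delta)$, and the resulting series $\sum j^{2-\alpha}$ only converges for $\alpha>3$). Everything else — the per-node distance lower bound and the $\tfrac{1}{\alpha-2}$ tail estimate — is routine bookkeeping of polynomial attenuation with exponent $\alpha>2$.
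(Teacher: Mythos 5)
Your proof is correct and follows essentially the same route as the paper's: decompose $\calT$ into annuli of width $r$ around $v$, cover the $j$th annulus by $O(j)$ balls of radius $r$ (each holding at most $\delta$ nodes), and bound the tail $\sum_{j\ge x} j^{1-\alpha}$ by the integral $\frac{x^{2-\alpha}}{\alpha-2}$. The only difference is that you spell out the packing argument behind the $O(j)$ covering count, which the paper asserts without proof.
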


\begin{proof} We can estimate $I(v)$ as
	\[I(v)=\sum_{u\in {\cal T}} \mathcal{P} \dist(u,v)^{-\alpha}\leq
	\sum_{i\geq x}|{\cal T}\cap \left(B(v,(i+1)r)\setminus B(v,ir)\right)|\cdot \mathcal{P}\cdot (ir)^{-\alpha}=O\left(\mathcal{P}\delta r^{-\alpha}\sum_{i\geq x} i^{-\alpha+1}\right).
	\]
The last equality above stems from the fact that $(B(v,(i+1)r)\setminus B(v,ir)$
is included in $O(i)$ balls of radius $r$, each such ball contains at most $\delta$ nodes
and all those nodes are in distance $\ge ir$ from $v$.
	By bounding the sum $\sum_{i\geq x} i^{-\alpha+1}$
	by the integral $\int_x^\infty t^{-\alpha+1} dt=\frac{x^{-\alpha+2}}{\alpha-2}$ we get
	\[I(v)=O\left(\frac{ \mathcal{P}r^{-\alpha}x^{-\alpha+2}\delta}{\alpha-2}\right).
	\]
\end{proof}
\comment{First, we establish a bound on the interference in a given point $v$ caused by a set of transmitters $\cal T$ with limited density, provided no element of $\cal T$ is
close to $v$.}

\lemmaApp{l:cons:dens}{\lconsdens}
\begin{proof}
For each point $v$, we  
prove that 
each node in distance at most $1-\eps$ from $v$ transmits
a message received at $v$
if we set $\mathbf{L_\gamma}$ to be a $(N,k)$-ssf for some $k$ which
depends on: $\gamma$, $\eps$ and the SINR parameters $\alpha>2$, $\beta$, $\cal N$.
Let $v_i$ denote the $i$-th closest node to $v$ (where $v_1=v$).
According to our assumptions, only the nodes $v_1,\ldots,v_{\gamma}$ 
can be in distance $\leq 1-\eps$ from $v$. Thus, let $\gamma'\le \gamma$ be
the largest $i$ such that $d(v,v_i)\leq 1-\eps$.
The 
 inequality
\[\beta\leq \frac{\mathcal{P}\dist(v,v_i)^{-\alpha}}{I(v)+{\cal N}},
\]
guarantees that $v$ receives a message transmitted by $v_i$ for $i\leq \gamma'$ in some round,
where $$I(v)=\sum_{u\in {\cal T}\setminus\{v_i\}} \mathcal{P} \dist(u,v)^{-\alpha},$$
$\calT\subset X$ is the set of nodes transmitting in that round, and
$v_i\in \calT$.
Thus, as $d(v,v_i)\le 1-\eps$, $v$ receives the message from $v_i$ if
$$I(v)\leq \mathcal{P}(1-\eps)^{-\alpha}/\beta-{\cal N}={\cal N}\beta(1-\eps)^{-\alpha}/\beta-{\cal N}={\cal N}((1-\eps)^{-\alpha})=O(1).$$ 
%
%
We can bound $I(v)$ as needed using Prop.~\ref{interEST} (with $r=1$),
by choosing constant $x$ large enough (depending on $\eps$
and SINR parameters) and assuring that there is no transmitter (except of $v_i$) in $B(v,x)$. 
Let $k_\gamma$ be the maximal number of nodes in a ball of radius $x$. Observe that $k_\gamma=O(x^2\cdot \gamma)=O(1)$ since $x$ is a constant.
Let $\mathbf{L}_\gamma$ be a $(N,k_\gamma)-$ssf.
Then, for each $i\in[\gamma']$, there is a round in $\mathbf{L}_\gamma$ in which $v_i$ 
is the only transmitter 
in $B(v,x)$. Thus, each neighbor of $v$ transmits in a round where it is the unique transmitter in $B(v,x)$ and $v$ can hear it. 

Thus, the length of the schedule $\mathbf{L}_\gamma$ is $O(\gamma^2\log N)=O(\log N)$, \cite{PoratR11}.

\end{proof}



\lemmaApp{l:close}{
\lclose
}

\begin{proof}

Let $v,u$ be a close pair and let $d(u,v)=d=\zeta d_{\Gamma,1}$ for
$\zeta\in(0,1]$. Moreover, let $\calT$ be the set of transmitters in a round and let $I(u)=\sum_{w\in {\cal T}\setminus \{v\}} \mathcal{P}/d(w,u)^{-\alpha}$ be the interference at $u$ caused by other nodes. 
The following condition 
\[\beta\leq \frac{\mathcal{P}/d^{-\alpha}}{I(u)+{\cal N}},
\]
guarantees that $u$ receives the message transmitted by $v$.
In order to satisfy the above inequality, it is
sufficient that $I(u)\le \mathcal{P}d^{-\alpha}/(2\beta)$ and ${\cal N}\le \mathcal{P}d^{-\alpha}/(2\beta)$.

The latter inequality is equivalent to
\begin{equation}\label{e:close:interference0}
d\le \left(\frac{2{\cal N}\beta}{\calP}\right)^{-1/\alpha}=2^{-1/\alpha}\le {\Gamma}^{-1/\alpha},
\end{equation}
where the equality follows from the assumption
$\calP={\cal N}\beta$. Thus, the latter condition can be guaranteed 
if $d_{\Gamma,r}\le 2^{-1/\alpha}$, which holds for each sufficiently large $\Gamma$ (recall that
$d\le d_{\Gamma,1}$.)

In order to estimate $I(u)$, we choose $x\in\NAT$ such that $xd\le \zeta$ and split $\calT$ into
$\calT_0=\calT\cap B(u,xd)$, $\calT_1=\calT\cap(B(u,\zeta)-B(u,xd))$ and
$\calT_2=\calT-(\calT_0\cup\calT_1)$. Analogously, $I(u)$ is split into
$I_0, I_1, I_2$, where $I_j=\sum_{w\in \calT_j}\calP\dist(w,u)^{-\alpha}$
for $j\in[0,2]$. The exact value of $x$ will be determined later.

Let $A'$ be the set of all nodes from $A$ located inside $B(u,xd)$.
%
The assumption that $u,v$ is a close pair
guarantees that each pair of nodes
inside $B(u,\zeta)$ is in distance $\ge d/2$. Thus the number of nodes in 
$A'$ is at most $\chi(dx, d/2)=\Theta(x^2)$, since nodes are located on
the plane. 

In the following, we will analyse the scenario that $v$ is the only transmitter
from $A'$. Then $I_0=0$, since $\calT_0\subseteq A'$. In order to prove the lemma,
it is sufficient to prove that 
$I_1\le \mathcal{P}d^{-\alpha}/(4\beta)$ and $I_2\le \mathcal{P}d^{-\alpha}/(4\beta)$.


In order to estimate $I_1$, we apply Prop.~\ref{interEST} with $r=d$,
$\delta=\chi(d,d/2)=O(1)$ and $\calT=\calT_1$, obtaining
%
\begin{equation}\label{e:close:interference1}
I_1=O\left(\frac{\mathcal{P}d^{-\alpha}x^{-\alpha+2}}{\alpha-2}\right).
\end{equation}
In order to estimate $I_2$, we make use of the assumption that each ball of radius $1$ contains at most $\Gamma$ nodes and apply Prop.~\ref{interEST} with $r=\zeta\leq 1$,
$x=1$, $\delta=\Gamma$ and $\calT=\calT_2$, which gives
\begin{equation}\label{e:close:interference2}
I_2=O\left(\frac{\zeta^{-\alpha}\mathcal{P}\Gamma}{\alpha-2}\right).
\end{equation}
The value of $I_1$ can be made smaller than $\mathcal{P}(2d)^{-\alpha}/(4\beta)$ by
taking big enough but constant (i.e., depending only on the model parameters) $x$.
(Recall that constant $x$ would give $|A'|=\kappa=O(x^2)=O(1)$.)

The value of $I_2$ is
$$I_2=O\left(\frac{\zeta^{-\alpha} \mathcal{P}\Gamma}{\alpha-2}\right)$$
while the bound $\mathcal{P}\frac{(2d)^{-\alpha}}{4\beta}$ satisfies
$$\mathcal{P}\frac{(2d)^{-\alpha}}{4\beta}=\Omega\left(\frac{\mathcal{P}}{4\beta}\cdot\left(2\cdot\zeta(1/\Gamma)^{1/2}\right)^{-\alpha}\right)=\Omega(\zeta^{-\alpha}\cdot \Gamma^{\alpha/2}),$$
since $d\le d_{\Gamma,1}=\Theta(\Gamma^{1/2})$.
Thus, $I_2\le \mathcal{P}\frac{(2d)^{-\alpha}}{4\beta}$ if $\Gamma$ is big enough.

Finally, if $\Gamma$ is smaller than the values needed to satisfy (\ref{e:close:interference1}),
or (\ref{e:close:interference2}), or (\ref{e:close:interference0}), we can use
the construction from Lemma~\ref{l:cons:dens}.

\end{proof}

\lemmaApp{l:close:clustered}{
\lcloseclustered
}
\begin{proof}
If $\clid$
is the only nonempty cluster, the result can be obtained by the
same reasoning as in the proof of Lemma~\ref{l:close}. 

In order to take other clusters into account recall that nodes
from $\gamma=O(1)$ clusters might appear in each unit ball,
by 
the definition of $r$-clustering (since the centers of clusters
are in distance $\ge 1-\eps$).
Thus, the number of nodes from all clusters in a unit 
ball is limited by $\gamma\Gamma=O(\Gamma)$.
For a cluster $\clid$ included in $B(x,r)$, let the set $C$ of 
clusters ``in conflict'' with $\clid$ consists of all clusters
(excluding $\clid$) whose nodes (at least one of them) are located in 
$B(x,2r)$. 
There are $\rho=O(1)$ such
clusters.
Then, the interference  
from other clusters (excluding those from $C$) on the nodes of a close pair in $\clid$ can be limited
similarly as in Lemma~\ref{l:close}, thanks to the fact
that the number of nodes in a unit ball is $O(\Gamma)$,
the distance between nodes of a close pair is $O(1/\Gamma^{1/2})$
and $\alpha>2$. 
\end{proof}

\comment{
\section{Proofs for Section~\ref{s:sparsification}}

\lemmaApp{l:sparsification:unclustered}{\lsparsuncluster}
\plsparsuncluster

\section{The figure for Section~\ref{s:basic:SINR}}
\figelim
}


\end{document}